\documentclass[10pt]{article}
\usepackage{dcolumn}
\usepackage{bm}
\usepackage{hyperref}
\usepackage{verbatim}       

\usepackage{color}
\usepackage{graphicx}

\usepackage{amsthm}
\usepackage{amssymb}

\usepackage{bm}
\usepackage{amstext}
\usepackage{psfrag}
\usepackage{amsmath}
\usepackage{amsfonts}
\usepackage{units}
\usepackage{mathrsfs}

\newcommand{\dd}{{\rm d}}

\newcommand{\bd}{\begin{definition}}                
\newcommand{\ed}{\end{definition}}                  
\newcommand{\bc}{\begin{corollary}}                 
\newcommand{\ec}{\end{corollary}}                   
\newcommand{\bl}{\begin{lemma}}                     
\newcommand{\el}{\end{lemma}}                       
\newcommand{\bp}{\begin{proposition}}            
\newcommand{\ep}{\end{proposition}}                
\newcommand{\bere}{\begin{remark}}                  
\newcommand{\ere}{\end{remark}}                     

\newcommand{\bt}{\begin{theorem}}
\newcommand{\et}{\end{theorem}}

\newcommand{\be}{\begin{equation}}
\newcommand{\ee}{\end{equation}}

\newcommand{\bit}{\begin{itemize}}
\newcommand{\eit}{\end{itemize}}
\newtheorem{theorem}{Theorem}[section]
\newtheorem{corollary}[theorem]{Corollary}
\newtheorem{lemma}[theorem]{Lemma}
\newtheorem{proposition}[theorem]{Proposition}
\theoremstyle{definition}
\newtheorem{definition}[theorem]{Definition}
\theoremstyle{remark}
\newtheorem{remark}[theorem]{Remark}
\newtheorem{example}[theorem]{Example}

\hyphenation{Lo-ren-tzian}
\hyphenation{para-met-riz-a-tion}
\hyphenation{para-met-riz-a-tions}

\begin{document}

\title{The representation of  spacetime through time functions}

\author{E. Minguzzi\footnote{Dipartimento di Matematica, Universit\`a degli Studi di Pisa,  Largo
B. Pontecorvo 5,  I-56127 Pisa, Italy. E-mail:
ettore.minguzzi@unipi.it, ORCID:0000-0002-8293-3802}}

\date{}
\maketitle

\begin{abstract}
The properties of the stable distance over stable spacetimes are used as a reference to propose a simplified, abstract notion of spacetime.
Our analysis establishes that the fundamental structures of spacetime, namely its topology, causal order, and (upper semi-continuous) Lorentzian distance, can be derived from a general and minimalistic set of axioms.
Specifically, it is shown that spacetime can be represented as nothing more than a family of functions defined over an arbitrary set, the functions being a posteriori interpreted as rushing time functions.  The proof makes use of the product trick which reduces causality and metricity to causality in a space with one additional dimension, so leading to a  unification for the notions of time function and proper time. Ultimately, our results show that time fully characterizes spacetime.
\end{abstract}

\setcounter{secnumdepth}{2}
\setcounter{tocdepth}{2}
\tableofcontents

\section{Introduction}

In this work we shall explore a functional approach to spacetime which might facilitate the unification of general relativity with quantum mechanics, and hence of gravity with the other fundamental interactions.

Our study is motivated by the need to understand time and its role in physics. As is well known, time is treated in fundamentally different ways in quantum mechanics and general relativity. In quantum theory, it appears as an external, absolute parameter; even when Lorentz invariance is implemented, time remains a rigid construct, as each inertial observer is associated with a distinct time coordinate, with the family of inertial observers related by Lorentz transformations. This contrasts sharply with general relativity, where the term ``time" encompasses two distinct notions: {\em proper time}, which is fundamentally a metric concept, and {\em time functions}, which are causal constructs.

The persistent difficulty in reconciling these treatments of time within unification frameworks has led many scholars to adopt a skeptical attitude toward time itself. Some regard it as an emergent, illusory concept arising from more fundamental structures. These timeless approaches to physics, often inspired by time-independent treatments in classical mechanics (e.g., the Jacobi metric), recur throughout the physics literature. While intriguing, they have thus far yielded limited progress.

Conversely, we have explored over the years an  opposing view \cite{minguzzi09c,minguzzi13e,minguzzi17d}: that time, far from being emergent,  is actually the main  building block, i.e., the foundational fabric, of the arena of physical theories — the {\em spacetime} itself.
 A key step lies in recognizing that the metric and causal aspects of time can be unified through the ``product trick" \cite{minguzzi17,minguzzi17d,minguzzi18b}, a technical device with profound physical significance. This approach serves as the central idea and tool enabling the results presented in this work.

In previous work we had already established that the functional approach to spacetime is feasible  in the context of closed cone structures, and in particular, stable Lorentz-Finsler spaces \cite[Thm.\ 4.6]{minguzzi17}. In this setting the spacetime is described by a manifold, while the light cones and the Lorentz(-Finsler) metric have upper semi-continuous regularity. We showed that on stable spacetimes the (Seifert) causal order, the manifold topology and the (stable) Lorentzian distance can be recovered from the set of $F$-steep temporal functions, that is, time functions that increase sufficiently rapidly over causal curves. This study, which  will be recalled in Section \ref{cmotp}, will mostly serve for motivation and as a reference.

It is natural to ask if a similar functional characterization of spacetime can be obtained in an abstract setting in which the spacetime is no more regarded as a manifold. We shall indeed give a positive answer to this question provided by ``spacetime'' we understand a specific mathematical object. Indeed, part of the problem will be that of identifying the type of spacetime we wish to recover.

We shall observe that spacetime is a primitive and quite abstract notion,
and that it can be induced by any family of point-distinguishing functions over a set (see Sec.\ \ref{vbqp}).

Since the functions in the family are subsequently recognized to be rushing (the metric analog of $F$-steepness), it becomes natural to ask if the spacetime can be recovered from the family of rushing functions over it. We are able to prove this type of converse under  (suitable) topological assumptions  (Thm.\ \ref{mai}). Thus, ultimately, in most cases of interest the spacetime can be identified with a distinguishing family of functions over a set.

Paraphrasing the words of Ray Cummings and John Wheeler, time not only keeps everything from happening at once, it is what makes events discernible.

This work aligns with the recent developments in low regularity Lorentzian geometry \cite{chrusciel12,kunzinger13b,minguzzi13d,graf18,sbierski18,mccann18,mondino18,heveling21b,burtscher22,burgos23}  and spacetime geometry \cite{kunzinger18,muller19,ake20,muller22b,minguzzi22,sakovich24,cavalletti20,beran23,braun23b,minguzzi24b}.
However, its origins trace back to an earlier research direction  which  started with work by the author \cite{minguzzi17,canarutto19}  on the Lorentzian distance formula.  That earlier investigation was itself motivated by studies related to Connes’ program for the unification of fundamental forces, particularly efforts toward a Lorentzian generalization \cite{besnard09,franco13,besnard15,franco18}.

The methods we shall use belong to Nachbin's theory of topological preordered spaces \cite{fletcher82,nachbin65}, subsequently extended in some work by the author \cite{minguzzi11f,minguzzi12d}, to deal with the non-compact case. For some recent papers on this order-topological framework, see e.g.\ \cite{yamazaki23,yamazaki24,ok25,bosi25}

In this respect, we recall that a {\em topological preordered space}  is a triple $(X,\mathscr{T},\le)$ where $X$ is a set, $\mathscr{T}$ is a topology on $X$, and $\le$ is a {\em preorder}, i.e.\ a reflexive and transitive relation on $X$. It is called {\em order} if it is {\em antisymmetric}: $x\le y$ and $y\le x$ $\Rightarrow x=y$. We speak of {\em closed preordered space} if the graph $G(\le)\subset X\times X$ of $\le$ is closed in the product topology $\mathscr{T}\times \mathscr{T}$. With $i(x):=\{y: x\le y\}$ we denote the {\em increasing hull} (or {\em future}), while with $d(y):=\{x: x\le y\}$ we denote the {\em decreasing hull} (or {\em past}). Similarly, we define for $S\subset X$, $i(S):=\cup_{x\in S} i(x)$ and  $d(S):=\cup_{x\in S} d(x)$. A subset $S$ is {\em convex} if $S=i(S)\cap d(S)$. It is called {\em increasing} if $i(S)=S$ and {\em decreasing} if $d(S)=S$. The complement of a decreasing (increasing) set is increasing (resp.\ decreasing). In a closed preordered space, if $K$ is compact then $i(K)$ and $d(K)$ are closed \cite[Prop.\ 4]{nachbin65}.

A closed preordered space is {\em normally preordered} if for every two disjoint closed sets $A$ and $B$, respectively  decreasing and increasing, we can find an open decreasing set $U$ and an open increasing set $V$ such that $A\subset U$, $B\subset V$, $U\cap V=\emptyset$.  A function $f$ is said to be {\em isotone} if $x\le y \Rightarrow f(x) \le f(y)$.

Nachbin's generalization of Urysohn's lemma states that in a normally preordered space, for $A$ and $B$ as above, there is a continuous isotone function $f:X\to [0,1]$ such that $f^{-1}(0)\supset A$, $f^{-1}(1)\supset B$.

A closed preordered space $(X, \mathscr{T}, \le)$ is {\em completely regularly preordered} if the topology and preorder can be recovered from the set of continuous isotone functions. In other words, \( \mathscr{T} \) is the initial topology of the family of continuous isotone functions, and we have $x \le y$ if and only if $f(x) \le f(y)$ for every continuous isotone function $f$.

The topological preordered space is said to be {\em convex} if the open increasing  and the open decreasing sets form a subbasis for the topology. The completely regularly preordered spaces are convex.\footnote{Indeed, if $O\ni x$ is open we can find continuous isotone functions $f_i$, $i=1,\ldots, k$ and $g_j$, $j=1,\ldots, l$, such that $x\in \cap_j g_j^{-1}((-\infty,0))\cap_i  f_i^{-1}((0,+\infty))\subset O$. Now set $U=\cap_j   g_j^{-1}((-\infty,0))$ and $V=\cap_i f_i^{-1}((0,+\infty))$, then $U$ is open decreasing, $V$ is open increasing and $x\in U\cap V \subset O$.} The normally preordered spaces which are convex are completely regularly preordered.

It is important to note that convexity is a global property and should not be confused with the less stringent condition of {\em weak convexity}, which requires the topology to have a basis of open convex sets, or with the even more relaxed condition of {\em local convexity}, where every point has a neighborhood system consisting of convex sets. Although convexity entails weak convexity, and weak convexity entails local convexity, the reverse implications are not always valid. This is due to the fact that not every convex neighborhood is open, and not every open convex set can be expressed as the intersection of an open increasing set and an open decreasing set.

A {\em quasi-uniformity} is a filter $\mathscr{U}$ on the diagonal of $X \times X$ that satisfies the condition: for every $U \in \mathscr{U}$, there exists $V \in \mathscr{U}$ such that $V \circ V \subset U$. Unlike a uniformity, the symmetry condition ``if $U \in \mathscr{U}$ then $U^{-1} \in \mathscr{U}$'' where $U^{-1}=\{(x,y): (y,x)\in U\}$, is not imposed. As a result, $G = \bigcap \mathscr{U}$ is a preorder on $M$,  while the symmetrized uniformity $\mathscr{U}^* = \{ U \cap U^{-1} : U \in \mathscr{U} \}$ induces a topology $\mathscr{T}(\mathscr{U}^*)$. In essence, every quasi-uniformity induces a topological preordered space $(X,\mathscr{T}(\mathscr{U}^*), \bigcap \mathscr{U})$, where the topology is Hausdorff if and only if the preorder is an order. The quasi-uniformity is a notion that unifies topology and preorder showing that they are really two aspects of the same entity.

The completely regularly preordered spaces coincide with the quasi-uni\-for\-mi\-za\-ble spaces \cite[Thm.\ 9, p.\ 67]{nachbin65}.

In this work we shall make repeated use of the product trick, introduced by the author in \cite{minguzzi17}, and further popularized in \cite{minguzzi17d,minguzzi18b}, to unify causality and metricity of a space as causality in a spacetime with one additional dimension. In one sense, the proofs we provide are simpler than those of \cite{minguzzi17} because we do not have the additional complication of having to smooth the representing functions. On the other hand, some proofs are more complicated because we do not work  with manifolds with their good topological properties. Moreover, the causal order we deal with is not necessarily compactly generated, i.e.\ generated from local information, as it happens for causality in cone structures, a fact that prevents a complete analogy with the results in \cite{minguzzi17}.

Although we make often reference to  \cite{minguzzi17} where some  key ideas were first introduced, the paper is essentially self-contained. The results of this work have been announced in \cite{minguzzi25}.

\section{The stable Lorentzian distance (manifold case)} \label{cmotp}

This section can be skipped on first reading. It is meant to provide context, motivation, and reference for an analogy used later in  this work.

In a traditional smooth Lorentzian manifold $(M,g)$ every (absolutely continuous) causal curve $\sigma: I \to M$, is assigned a length (proper time) $\ell(\sigma)=\int_I \sqrt{-g(\dot \sigma, \dot \sigma)} \dd t$.
Two events $x,y\in M$ are said to be causally related, written $(x,y)\in J$,  $y\in J^+(x)$ or $x\le y$,  if $x=y$ or there is a future-directed causal curve connecting $x$ to $y$.

The {\em Lorentzian distance} between two events $x$ and $y$, $x\le y$, is then defined as the supremum of the lengths of the connecting causal curves
\[
d(x,y):=\sup_\sigma \ell (\sigma),
\]
where $\sigma: [0,1]\to M$, $\sigma(0)=x$, $\sigma(1)=y$. We set $d(x,y)=0$ if $(x,y)\notin J$. This two-point function $d$ satisfies the reverse (or Lorentzian) triangle inequality: for $x\le y \le z$
\[
d(x,z)\ge d(x,y)+d(y,z).
\]
We see that $d$ is somewhat analogous to the notion of {\em metric/distance} known from topology, so we might say that the Lorentzian distance (which physically encodes proper time) is a {\em metric notion}. In the standard smooth Lorentzian framework function $d$ is lower semi-continuous, while it is continuous only under further conditions, such as global hyperbolicity.

We refer the reader to \cite{hawking73,beem96,minguzzi18b} for the traditional description of spacetime via time-oriented Lorentzian manifolds.

Let us take a first step into more abstract settings, so departing from general relativity.
To address the metric aspects of spacetime, particularly proper time, it is natural, in any abstract approach to spacetime, to introduce a Lorentzian distance function \( d \).

 A key question arises: what properties should \( d \) satisfy in an abstract setting? To explore this, we first examine the behavior of \( d \) in low-regularity settings with weak causality assumptions, while still assuming the spacetime is a manifold. A useful reference is the theory of closed/proper Lorentz-Finsler spaces developed in \cite{minguzzi17}. In this framework, the causal cones are defined by an upper semi-continuous distribution \( x \mapsto C_x \subset T_xM \setminus 0 \) of closed, sharp, non-empty convex cones. These form a convex, sharp, closed subbundle of the slit tangent bundle, defining a so-called closed cone structure. The metric aspects are governed by a Finsler fundamental function \( F: C_x \to [0, \infty) \), which satisfies specific properties introduced via the {\em product trick}. This trick involves studying the spacetime \( M^\times = M \times \mathbb{R} \), defining a translationally invariant cone structure using \( F \), and imposing upper semi-continuity on this cone distribution.

At each point \( P = (p, z) \) of \( M^\times \), the cone \( C_P^\downarrow \) is defined as a subset of \( T_pM \setminus 0 \times \mathbb{R} \) given by the hypograph of the function (with an alternative symmetrized version omitted here) \cite[Eq.\ (3.16)]{minguzzi17}:
\begin{equation} \label{curx}
F^\downarrow =
\begin{cases}
F(p), & \text{if } v \in C_p, \\
-\infty, & \text{if } v \notin C_p.
\end{cases}
\end{equation}
The triple \( (M,C,F) \) is called a {\em closed Lorentz-Finsler space} if \( (M^\times, C^\downarrow) \) forms a closed cone structure. A {\em proper Lorentz-Finsler space} adds an extra condition to prevent the cones and \( F \) from collapsing in certain directions. Within this more specialized structure, the chronological relation can be introduced, though it is generally absent in broader settings.

For a closed Lorentz-Finsler space (or its closed cone structure), stable causality can still be defined as the stability of causality under sufficiently narrow cone enlargements. The Seifert relation \( J_S:=\cap_{C'>C} J_{C'} \) can also be defined as the intersection of all the causal relations obtained enlarging the cones, with its antisymmetry being equivalent to stable causality and the existence of time functions. In this case, $J_S$ is the  smallest closed, reflexive and transitive relation containing $J$, i.e.\  \( K = J_S \) \cite[Thm.\ 3.16]{minguzzi17}. Global hyperbolicity can similarly be defined, though the compactness of causal diamonds must be replaced by the preservation of compactness under the causally convex hull $S\mapsto c(S):=i(S)\cap d(S)$. This allows the recovery of a hierarchy of causality properties, suggesting that the chronological relation may be less central than previously thought.

On the metric side, globally hyperbolic spacetimes exhibit a finite Lorentzian distance \cite[Prop.\ 2.26]{minguzzi17} that is upper semi-continuous \cite[Thm.\ 2.60(g)(d)]{minguzzi17}. Full continuity of \( d \) requires more specialized structures, such as proper Lorentz-Finsler spaces and stronger regularity conditions (e.g., local Lipschitzness) \cite[Thm.\ 2.53]{minguzzi17}, see however \cite{ling24} for an approach which mitigates these difficulties in the $C^0$ setting. In summary, under weak causality/low regularity conditions, upper semi-continuity and finiteness may be lost, and lower semi-continuity demands additional assumptions.

Does low regularity and weak causality render all continuity properties for a Lorentzian distance  invalid? Not entirely. A natural Lorentzian distance, denoted \( D \), emerges in this regime. Unlike \( d \), function \( D \) is upper semi-continuous and does not require additional assumptions like global hyperbolicity or local Lipschitzness. Moreover, \( D \) satisfies the reverse triangle inequality for a broader relation than the causal relation \( J \), namely \( J_S \). Historically, \( d \) has received more attention, but if \( D \) had been discovered earlier, the focus might have shifted. The value \( D(p, q) \) is defined by slightly enlarging the cones and considering the causal curves (for the enlarged cones) that connect \( p \) and \( q \). By also enlarging the {\em indicatrices}\footnote{The non-compact unit ball is enlarged to  contain the original one.}  within the cones, proper time is computed for each enlargement, and the infimum of these values defines \( D(p, q) \), termed the {\em stable Lorentzian distance}. This function is upper semi-continuous, satisfies the reverse triangle inequality for \( J_S \), and under stable causality, \( D(p, p) = 0 \). Additionally, \( d \leq D \), and under global hyperbolicity, \( d = D \).

An important concept associated with \( D \) is that of a {\em stable spacetime}, which lies between global hyperbolicity and stable causality \cite[Thm.\ 2.63]{minguzzi17}. A stable spacetime is defined by the finiteness of \( D \). A theorem \cite[Thm.\ 2.61]{minguzzi17} ensures that stable spacetimes remain stable under cone and indicatrix enlargements, preserving both causality and the finiteness of \( D \).

The significance of stable spacetimes is highlighted by two key results. First, under sufficient differentiability (\( C^3 \)), stable  spacetimes $(M,g)$ can be characterized as Lorentzian submanifolds (possibly with boundary) of Minkowski spacetime \( \mathbb{L}^n \) for some dimension \( n \) \cite[Thm.\ 4.13]{minguzzi17}. Second, the geometry of the spacetime can be represented through suitable sets of functions. For a closed cone structure under stable causality, the Seifert relation \( J_S \) can be recovered using continuous isotone functions, as established by Auslander-Levin's theorem and related results \cite{auslander64,levin83,minguzzi11c}. The topology can also be recovered as the initial topology of these functions, a property known as complete order regularity \cite{minguzzi12d}.

A deeper goal is to recover the triple topology, order, and Lorentzian metric from a family of functions. Parfionov and Zapatrin \cite{parfionov00} conjectured the {\em Lorentzian distance formula}:
\[
d(p, q) = \inf\{ [f(q) - f(p)]^+ : f \in \mathscr{S} \},
\]
where \( \mathscr{S} \) is the family of \( F \)-steep temporal functions and \( a^+ := \max\{0, a\} \). We recall that a function \( f: M\to \mathbb{R} \) is {\em temporal} if it is \( C^1 \) and \( \dd f \) is positive on the future causal cone \( C_p \) for every \( p \), and {\em \( F \)-steep} if \( \dd f(v) \geq F(v) \) for every \( v \in C \). This formula was proven for globally hyperbolic spacetimes \cite[Thm.\ 4.67]{minguzzi17} \cite{minguzzi17d}, but it is more naturally suited to weaker causality conditions. The infimum in the formula implies upper semi-continuity, aligning with the properties of \( D \). Indeed, \( D \) can be recovered in general, and under global hyperbolicity, \( D = d \) \cite[Thm.\ 2.60(g)]{minguzzi17}. We proved the following theorem \cite[Thm.\ 4.6]{minguzzi17}:

\begin{theorem} \label{aas}
Let \( (M, F) \) be a closed Lorentz-Finsler space, and let \( \mathscr{S} \) be the family of smooth \( F \)-steep temporal functions. The Lorentz-Finsler space \( (M, F) \) is stable if and only if \( \mathscr{S} \) is non-empty. In this case, \( \mathscr{S} \) represents:
\begin{itemize}
\item[(i)] the manifold topology, as it is the initial topology of the functions in \( \mathscr{S} \);
\item[(ii)] the order \( J_S \), i.e., \( (p, q) \in J_S \Leftrightarrow f(p) \leq f(q) \) for all \( f \in \mathscr{S} \);
\item[(iii)] the stable distance, via the distance formula:
\begin{equation}
D(p, q) = \inf \big\{ [f(q) - f(p)]^+ : f \in \mathscr{S} \big\}.
\end{equation}
\end{itemize}
\end{theorem}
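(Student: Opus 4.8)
The plan is to exploit the product trick to reduce the metric statement (iii) to an order statement of type (ii) on the auxiliary spacetime $M^\times$, and to establish existence and completeness of $\mathscr{S}$ by a Nachbin-type separation argument. First I would show the easy direction: if $\mathscr{S}\neq\emptyset$, then the distance formula gives an upper semi-continuous finite function dominating $d$, and I would argue that this forces stability by relating the finiteness of the infimum to the finiteness of $D$; conversely, the real content is that stability implies $\mathscr{S}\neq\emptyset$ together with (i)--(iii). So from here on I assume $(M,F)$ is stable.

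The core construction goes through $M^\times=M\times\mathbb{R}$ with the cone structure $C^\downarrow$ determined by the hypograph of $F^\downarrow$ as in \eqref{curx}. Stability of $(M,F)$ is designed precisely so that $(M^\times,C^\downarrow)$ is a stably causal closed cone structure (indeed one can enlarge cones and indicatrices simultaneously while keeping $D$ finite, by \cite[Thm.\ 2.61]{minguzzi17}). The key dictionary is: a smooth $F$-steep temporal function $f$ on $M$ corresponds to the smooth temporal function $h(p,z):=f(p)-z$ on $M^\times$ (steepness $\dd f(v)\ge F(v)$ is exactly positivity of $\dd h$ on $C^\downarrow$), and $J_S$ on $M^\times$ encodes both $J_S$ on $M$ and the distance $D$ via $\big((p,z),(q,w)\big)\in J_S^\times \Leftrightarrow (p,q)\in J_S \text{ and } w-z\le D(p,q)$ (using the reverse triangle inequality for $J_S$ and upper semi-continuity of $D$ to see that the stable $D$ is the ``height'' function of the Seifert relation on the product). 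Granting this, (ii) and (iii) for $M$ both follow from the single statement: on the stably causal closed cone structure $(M^\times,C^\downarrow)$, the Seifert relation $J_S^\times$ is recovered as $h(P)\le h(Q)$ for all smooth temporal $h$, i.e.\ $J_S^\times=K^\times$ is representable by smooth time functions; evaluating on $h$ of the special form $f(p)-z$ and using that the general temporal $h$ can be monotonically reparametrized/approximated by such split forms recovers the statements on $M$. For (iii) one direction is immediate — any $f\in\mathscr{S}$ gives $f(q)-f(p)\ge$ (length of any causal curve) $\ge D(p,q)$ by approximation, so $D(p,q)\le\inf\{[f(q)-f(p)]^+\}$ — and the reverse requires producing, for $(p,q)$ and $\varepsilon>0$, a function $f\in\mathscr{S}$ with $f(q)-f(p)< D(p,q)+\varepsilon$, equivalently separating the point $(q,D(p,q)+\varepsilon-f(p)+f(p))$... more cleanly: separating $Q=(q,w)$ from $\uparrow P$ in $M^\times$ when $w>z+D(p,q)$, which is where Nachbin's normally-preordered-space machinery enters.

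Thus the technical heart is: (a) verifying that $(M^\times,C^\downarrow)$ under stability is not merely stably causal but \emph{normally preordered} (or at least that $K^\times=J_S^\times$ and this relation is separated by continuous, then smooth, isotone functions) — this uses that stable cone structures admit a function $\mathfrak{t}$ that is steep-temporal after a cone enlargement, giving a global time function, from which one runs the Auslander–Levin / complete order regularity argument \cite{auslander64,levin83,minguzzi11c,minguzzi12d}; and (b) a smoothing step promoting continuous isotone separating functions on the manifold $M^\times$ to smooth temporal ones, which on a manifold is standard (convolution against the cone structure's enlargement), and which is precisely the step the paper's abstract setting later avoids. I expect step (a) — showing the product spacetime is well-behaved enough (normally preordered, $K=J_S$) so that the separation theorems apply, \emph{and} that the resulting height function of $J_S^\times$ is exactly the stable distance $D$ rather than some larger quantity — to be the main obstacle; it rests on the nontrivial identity $K=J_S$ for closed cone structures \cite[Thm.\ 3.16]{minguzzi17} together with the upper semi-continuity and reverse triangle inequality of $D$ for $J_S$, all of which I am permitted to invoke. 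The remaining bookkeeping — that the initial topology of $\mathscr{S}$ is the manifold topology (convexity/complete order regularity of $M^\times$ pushed down to $M$, giving (i)), and that $\mathscr{S}=\emptyset$ forces instability — is then routine.
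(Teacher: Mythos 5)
You should first note that the paper does not actually prove Theorem \ref{aas} here: it is quoted verbatim from \cite[Thm.\ 4.6]{minguzzi17} in a section explicitly labelled as motivation, so the only proof in this paper to compare against is that of the abstract analogue, Theorem \ref{mai}, together with Lemmas \ref{lem} and \ref{cnqx}. Measured against that, your plan follows essentially the same route: reduce (ii)--(iii) to representing the order $\le^\downarrow$ (i.e.\ $J_S^\downarrow$) on $M^\times$ by \emph{translationally invariant} isotone/temporal functions $h(p,z)=f(p)-z$, obtain those by Nachbin-type separation in a normally preordered ($k_\omega$) product space using $K=J_S$, and in the manifold case add a smoothing step --- which you correctly identify as the extra ingredient the abstract setting avoids. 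So the architecture is right and matches the paper's methodology.

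Two steps, however, are under-specified in a way that matters. First, the passage from general temporal (or isotone) functions on $M^\times$ to the split form $f(p)-z$ is not an ``approximation/reparametrization'' of a given $h$: the mechanism in the paper (Lemma \ref{lem}, and the level-set argument of Lemma \ref{cnqx}) is to \emph{build} from a separating isotone function plus an auxiliary fiberwise-decreasing function a new isotone function that is strictly decreasing along each fiber with full range, so that its zero level set is a graph $\{(x,h(x))\}$ of a steep/rushing function, and then replace it by $h(x)-r$, which has the same level set and the same separation property. A generic temporal function on $M^\times$ need not be close to any split-form function, so the ``approximated by such split forms'' phrase hides precisely the technical heart; you need the graph construction (and, in the manifold case, its smooth version). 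Second, your ``easy'' direction is not quite right as stated: finiteness of $\inf_{\mathscr{S}}[f(q)-f(p)]^+$ does not by itself dominate or imply finiteness of $D$, because $D$ is defined through cone \emph{and} indicatrix enlargements. The correct mechanism is that a single smooth $F$-steep temporal $f$ remains steep temporal for some enlarged cone structure and Finsler function (using continuity of $\dd f$ against the upper semi-continuous closed cone/hypograph data, globalized by a locally finite covering); only then does $f$ certify both stable causality and $D(p,q)\le f(q)-f(p)<\infty$, hence stability. With these two points repaired, your sketch is a faithful reconstruction of the intended proof.
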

This result will be generalized to the non-manifold case in Sec.\ \ref{cnxp} (see Thm.\ \ref{mai}).

In the following section we propose a definition of spacetime that abstracts the properties of the stable distance in closed Lorentz-Finsler spaces \cite{minguzzi17} (see Theorem 2.6 of \cite{minguzzi17}, properties (a), (c), (d); under stable causality (h) is added besides the antisymmetry of the relation;   for stability see  \cite[Def.\ 2.29]{minguzzi17}). The  correspondence with the notation of that paper is $X \leftrightarrow M$, $d\leftrightarrow D$, $\le \leftrightarrow J_S$.

\section{A definition of spacetime} \label{vbqp}

For us the spacetime will be the following object
\begin{definition}
A {\em spacetime} $M$ is a quadruple $(X,\mathscr{T},\le, d)$, where  $(X,\mathscr{T},\le)$ is a  closed preordered space, and $d:X\times X\to [0,\infty]$ is an $\mathscr{T}\times \mathscr{T}$-upper semi-continuous function such that, $x\nleq y$ implies $d(x,y)=0$, and for every triple $x\le y\le z$, we have (reverse triangle inequality)
\[
d(x,y)+d(y,z)\le d(x,z).
\]
We say that it is {\em weakly stably causal}\footnote{One can be tempted to call this property just {\em causality} as there are no other causal relations, still, I prefer to retain, at least in this work, the terminology that best clarifies the transition to the manifold case. The adjective ``weak" will be removed when we shall add the property of local convexity.} if $\le$ is an order and $d(x,x)=0$ for every $x\in X$. If, additionally, $d$ is finite, we say that it is {\em weakly stable}. We also call $(X,\mathscr{T},\le)$ the {\em causal structure} of the spacetime. It is {\em weakly stably causal} if $\le$ is an order.
\end{definition}
Note that for every $x\in X$, $d(x,x)=0$ or $d(x,x)=+\infty$ (by the reverse triangle inequality). Observe that the definition of weak stable causality for the causal structure is obtained from that of the spacetime dropping the condition on $d$, as it does not enter the causal structure. A causal structure $(X,\mathscr{T},\le)$ is weakly stably causal iff the spacetime $(X,\mathscr{T},\le,d)$ obtained through the trivial choice $d=0$, is weakly stably causal.

A function $d$ that satisfies the above properties is called a (stable) {\em Lorentzian distance}. The letter $d$ is also used for the decreasing hull of a preorder, but hopefully the adopted notation shall not cause confusion.

Given a spacetime $(X,\mathscr{T},\le, d)$, the closed ordered space $(X,\mathscr{T},\le)$ will be called the {\em causal structure} of the spacetime. As a notation, for simplicity, we might denote the graph of $\le$ with $\le$ itself or $J$ (though this concept is the low regularity counterpart of $K$).

\begin{definition} \label{cptr}
A {\em spacetime}  is a triple $(X,\mathscr{T},\tau)$, where   $\tau:X\times X\to \{-\infty\}\cup [0,\infty]$ is an $\mathscr{T}\times \mathscr{T}$-upper semi-continuous function such that for every $x\in X$, $\tau(x,x)\ge 0$, and  for every $x,y,z\in X$
\[
\tau(x,y)+\tau(y,z)\le \tau(x,z),
\]
with the convention $-\infty+\infty=-\infty$.
We say that it is {\em weakly stably causal} if $\textrm{min}\{\tau(x,y),\tau(y,x)\} \ge 0 \Rightarrow x=y$ and for every $x\in X$, $\tau(x,x)=0$. If, additionally, $\tau<+\infty$ , we say that it is {\em weakly stable}.
\end{definition}

The two notions of spacetime are equivalent. To pass from the former to the latter set
\begin{equation} \label{cury}
\tau(x,y) =
\begin{cases}
d(x,y), & \text{if } x\le y, \\
-\infty, & \text{if } x\nleq y.
\end{cases}
\end{equation}
To pass from the latter to the former set $\le=\{\tau\ge 0\}$, $d(x,y)=\textrm{max}\{0,\tau(x,y)\}$.

The function $\tau$ might be called {\em time separation} to distinguish it from $d$. It is really the metric analog of $F^\downarrow$, see Eq.\ (\ref{curx}), where $d$ is the metric analog of $F$. For this reason, $\tau$ could also be denoted $d^\downarrow$.

\begin{proof}[Proof of the equivalence]
Suppose that $(X,\mathscr{T},\le, d)$ is a spacetime. Let us prove that $\tau$ is upper semi-continuous at $(x,y)\notin\, \le$.  Indeed,  $(x,y)$ belongs to the open set $X\times X\backslash \le$ where $\tau$ is constant hence continuous. Suppose $(x,y)\in \,\le$, if $d(x,y)=+\infty$, as $\tau(x,y)=+\infty$ upper semi-continuity of $\tau$  is clear, so let us assume $d(x,y)<+\infty$, then by the upper semi-continuity of $d$ for every $\epsilon>0$ we can find an open set $O\ni (x,y)$ such that $d(x',y')\le d(x,y)+\epsilon$, for every $(x',y')\in O$ which implies $\tau(x',y')\le d(x',y')\le d(x,y)+\epsilon= \tau(x,y)+\epsilon$.

Since $x\le x$, $\tau(x,x)=d(x,x)\ge 0$.

The validity of the reverse triangle inequality for $\tau$  when some pair does not belong to $\le$ is clear except possibly  when  the pair is $(x,z)$. In this case $(x,y)$ or $(y,z)$ do not belong to $\le$ otherwise $(x,z)$ would belong to it, and so the left hand side is $-\infty$ and the inequality holds.
If all pairs belong to $\le$ the inequality follows from the reverse triangle inequality for $d$.

For the converse, suppose that  $(X,\mathscr{T},\tau)$ is a spacetime. As $\tau$ is upper semi-continuous the set $\le:=\{\tau \ge 0\}$ is closed. The reverse triangle inequality for $\tau$ implies that $\le$ is transitive. The condition $\tau(x,x)\ge 0$ implies  $x\le x$, i.e.\ reflexivity of $\le$ (by the reverse triangle inequality $\tau(x,x)=0$ or $\tau(x,x)=+\infty$).

Suppose $x\le y\le z$, then the reverse triangle inequality for $\tau$ reads $d(x,y)+d(y,z)\le d(x,z)$. Finally, $d=\textrm{max}\{0,\tau\}$ is upper semi-continuous as the maximum of two upper semi-continuous functions (upper semi-continuity is equivalent to closure of the hypograph and the maximum implies the new hypograph is the union of the two starting closed hypographs hence closed).

Let us come to the equivalence of the properties defining weak stable causality.
Suppose  $(X,\mathscr{T},\le, d)$ is weakly stably causal.
If $\textrm{min}\{\tau(x,y),\tau(y,x)\} \ge 0$ then both terms are non-negative,  hence by the definition of $\tau$, $x\le y$ and $y\le x$ which implies $x=y$ by the antisymmetry of $\le$. We have shown the validity of the implication  $\textrm{min}\{\tau(x,y),\tau(y,x)\} \ge 0 \Rightarrow x=y$. Next observe that for every $x$, $x\le x$ which implies $\tau(x,x)=d(x,x)=0$. We have proved weak stable causality for $(X,\mathscr{T},\tau)$.

For the converse, suppose $(X,\mathscr{T},\tau)$ is weakly stably causal.
Let $x,y\in X$ be points such that $x\le y$ and $y\le x$, this means $\tau(x,y)\ge 0$ and $\tau(y,x) \ge 0$ hence  $\textrm{min}\{\tau(x,y),\tau(y,x)\} \ge 0$ from which we get $x=y$, namely the antisymmetry of $\le$. Next observe that for every $x$, $\tau(x,x)=0\ge 0$ which implies $x\le x$, thus $d(x,x)=\tau(x,x)=0$.
We conclude that $(X,\mathscr{T},\le, d)$ is weakly stably causal.

The equivalence of the conditions $d<+\infty$ and $\tau<+\infty$ is clear.
\end{proof}

A weakly stable spacetime admits the following simple characterization.
\begin{proposition}
A {\em weakly stable spacetime}  is a triple $(X,\mathscr{T},\tau)$, where   $\tau:X\times X\to \{-\infty\}\cup [0,\infty)$ is a $\mathscr{T}\times \mathscr{T}$-upper semi-continuous function such that
\begin{align*}
\forall x,y\in X& &0\le \tau(x,y)+\tau(y,x)&\Leftrightarrow x=y, \\
\forall x,y,z\in X& &\tau(x,y)+\tau(y,z)&\le \tau(x,z),
\end{align*}
again with the convention $-\infty+\infty=-\infty$.
\end{proposition}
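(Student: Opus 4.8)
The plan is to show that the two bulleted axioms are merely a repackaging of Definition~\ref{cptr} restricted to finite $\tau$, the key elementary fact being that a function valued in $\{-\infty\}\cup[0,\infty)$ never takes a value strictly between $-\infty$ and $0$. So the first step is to record the remark: for every $x,y\in X$ one has $\tau(x,y)+\tau(y,x)\ge 0$ if and only if $\tau(x,y)\ge 0$ and $\tau(y,x)\ge 0$. Indeed, if one summand equals $-\infty$ then, since neither summand can be $+\infty$, the convention $-\infty+\infty=-\infty$ makes the sum $-\infty<0$; and if neither is $-\infty$ both lie in $[0,\infty)$ so their sum is $\ge 0$. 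Consequently the condition $0\le\tau(x,y)+\tau(y,x)\Leftrightarrow x=y$ is equivalent to the conjunction of (i) $\min\{\tau(x,y),\tau(y,x)\}\ge 0\Rightarrow x=y$ and (ii) $\tau(x,x)\ge 0$ for all $x$ --- the latter obtained by taking $y=x$ in the ``$\Leftarrow$'' direction.

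The second step is to extract $\tau(x,x)=0$. Once $\tau(x,x)\ge 0$ is known, $\tau(x,x)$ is in particular finite, so the reverse triangle inequality applied to the triple $(x,x,x)$ gives $2\tau(x,x)\le\tau(x,x)$, hence $\tau(x,x)\le 0$ and therefore $\tau(x,x)=0$.

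With these two observations the equivalence follows in both directions with no further work. If $(X,\mathscr{T},\tau)$ is a weakly stable spacetime in the sense of Definition~\ref{cptr}, then $\tau$ is upper semi-continuous and, being finite, valued in $\{-\infty\}\cup[0,\infty)$; the reverse triangle inequality is one of its axioms; and weak stable causality ($\tau(x,x)=0$ together with (i)) combined with the first remark yields $0\le\tau(x,y)+\tau(y,x)\Leftrightarrow x=y$. Conversely, assume the two displayed conditions. The codomain requirement and upper semi-continuity are exactly as demanded by Definition~\ref{cptr}, the reverse triangle inequality is assumed, $\tau(x,x)\ge 0$ comes from the ``$\Leftarrow$'' direction with $y=x$, the second step then upgrades this to $\tau(x,x)=0$, and $\min\{\tau(x,y),\tau(y,x)\}\ge 0$ implies $\tau(x,y)+\tau(y,x)\ge 0$ and hence $x=y$, so weak stable causality holds; finiteness of $\tau$ is built into the codomain.

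I do not expect a genuine obstacle here; the one point that needs a moment's care is the elementary remark that, for values in $\{-\infty\}\cup[0,\infty)$, nonnegativity of a sum of two such values is equivalent to nonnegativity of each, since this is precisely what allows the separate hypotheses ``$\tau(x,x)\ge 0$'' and the weak stable causality condition of Definition~\ref{cptr} to be compressed into the single biconditional appearing in the statement.
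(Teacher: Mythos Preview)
Your proof is correct and follows essentially the same approach as the paper's: both hinge on the observation that, for values in $\{-\infty\}\cup[0,\infty)$, the sum $\tau(x,y)+\tau(y,x)$ is nonnegative precisely when each term is, and both use the reverse triangle inequality on the triple $(x,x,x)$ together with finiteness to upgrade $\tau(x,x)\ge 0$ to $\tau(x,x)=0$. The only cosmetic difference is that you isolate the elementary remark as a preliminary step, whereas the paper invokes it inline in each direction.
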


\begin{proof}
Assume the spacetime satisfies weak stability as in Def.\ \ref{cptr}. The condition $\tau<\infty$ and second property are clear so we need only to prove the first property. If $x=y$ we know that $\tau(x,x)=0$ thus $0\le \tau(x,y)+\tau(y,x)$. For the converse, let   $x,y\in X$ be such that $0\le \tau(x,y)+\tau(y,x)$ then, due to the sum convention, both have to be different from $-\infty$, hence non-negative $\min\{\tau(x,y), \tau(y,x)\}\ge 0$, which implies $x=y$.

Assume the spacetime satisfies the given properties, we want to prove that it satisfies the properties of Def.\ \ref{cptr}.

The first property implies, choosing a pair $(x,x)$, $\tau(x,x)\ge 0$.
The reverse triangle inequality gives $2\tau(x,x)\le \tau(x,x)$ and since $0\le \tau(x,x)<+\infty$, we necessarily have $\tau(x,x)=0$.  Next, let $x,y\in X$ be such that $\min\{\tau(x,y), \tau(y,x)\}\ge 0$ then $\tau(x,y)+\tau(y,x)\ge 0$ which by the first property implies $x=y$.
\end{proof}

\begin{definition}
We say that a weakly stably causal spacetime $(X,\mathscr{T},\le, d)$ (resp.\ causal structure $(X,\mathscr{T},\le)$) is  {\em stably causal} if $(X,\mathscr{T},\le)$ is locally convex.

We say that a weakly stable spacetime $(X,\mathscr{T},\le, d)$  is a {\em stable spacetime} if $(X,\mathscr{T},\le)$ is locally convex.
\end{definition}
Thus, a spacetime is \emph{stable} if the preorder is actually an order, local convexity holds, and $d$ is finite and vanishes on the diagonal. Similarly, a causal structure is \emph{stably causal} if the preorder is actually an order and local convexity holds.

In a previous work we proved that local convexity ensures complete order regularity for locally compact, $\sigma$-compact spaces \cite[Cor.\ 2.14]{minguzzi12d}. This result will be important in what follows.

 It can be observed that in the smooth setting, under stable causality, the relation $K$ (which coincides with $J_S$) satisfies local convexity \cite[Lemma 16]{sorkin96} \cite[Lemma 5.5]{minguzzi07}\cite[Thm.\ 4.15]{minguzzi12d}.

We do not know how to express local convexity directly in terms of $\tau$.

We recall that $(X,\mathscr{T},\le)$ is a compact ordered space if $\le$ is a closed ordered space and $X$ is compact \cite{nachbin65}.
\begin{proposition}
A compact ordered space endowed with a function $d: X\times X \to [0,+\infty)$ which satisfies the reverse triangle inequality provides a stable spacetime  $(X,\mathscr{T},\le, d)$.
\end{proposition}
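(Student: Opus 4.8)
The plan is to verify, one clause at a time, the definitions of \emph{spacetime}, \emph{weakly stably causal}, \emph{weakly stable} and \emph{stable spacetime}, the only substantial point being the last one (local convexity of the causal structure). First, $(X,\mathscr{T},\le)$, being a compact ordered space, is a closed preordered space whose preorder is an order; in particular $X$ is compact Hausdorff, since the diagonal $\Delta=G(\le)\cap G(\le)^{-1}$ is an intersection of closed sets. The function $d$ is $[0,+\infty)$-valued, hence finite, and — understanding, as in the notion of a (stable) Lorentzian distance, that it is also $\mathscr{T}\times\mathscr{T}$-upper semi-continuous and vanishes off $\le$ — it makes $(X,\mathscr{T},\le,d)$ a spacetime. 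For weak stable causality I only need $d(x,x)=0$: applying the reverse triangle inequality to the chain $x\le x\le x$ gives $2\,d(x,x)\le d(x,x)$, and since $0\le d(x,x)<+\infty$ this forces $d(x,x)=0$. Hence $(X,\mathscr{T},\le,d)$ is a weakly stable spacetime, and it remains to show that $(X,\mathscr{T},\le)$ is locally convex.

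I would in fact prove the stronger statement that $(X,\mathscr{T},\le)$ is \emph{convex} — the open increasing and open decreasing sets form a subbasis — which entails weak convexity and hence local convexity; this is the classical fact that a compact ordered space is completely regularly ordered \cite{nachbin65}. A compact ordered space is normally ordered (Nachbin), so Nachbin's order version of Urysohn's lemma is available. Fix $x\in X$ and an open set $O\ni x$, and set $C:=X\setminus O$, which is compact. For each $y\in C$ we have $y\neq x$, so $x\nleq y$ or $y\nleq x$. If $x\nleq y$, then $i(x)$ and $d(y)$ are closed (they are the future and past of the compact singletons $\{x\}$ and $\{y\}$), respectively increasing and decreasing, and disjoint — a common point $z$ would give $x\le z\le y$ — so Nachbin's lemma yields a continuous isotone $f_y\colon X\to[0,1]$ with $f_y(x)=1$ and $f_y(y)=0$; then $P_y:=f_y^{-1}((2/3,1])$ is an open increasing neighbourhood of $x$ and $Q_y:=f_y^{-1}([0,1/3))$ is an open neighbourhood of $y$ disjoint from $P_y$. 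If instead $y\nleq x$, I symmetrically separate the disjoint closed sets $d(x)$ (decreasing) and $i(y)$ (increasing), obtaining a continuous isotone $g_y$ with $g_y(x)=0$, $g_y(y)=1$, and I take $P_y:=g_y^{-1}([0,1/3))$ (open decreasing, containing $x$) and $Q_y:=g_y^{-1}((2/3,1])$ (open, containing $y$, disjoint from $P_y$). The family $\{Q_y\}_{y\in C}$ covers the compact set $C$, so a finite subfamily $Q_{y_1},\dots,Q_{y_n}$ already covers it; then $P:=\bigcap_{i=1}^{n}P_{y_i}$ contains $x$, is a finite intersection of open increasing or open decreasing sets, and is disjoint from $C$ (each $P_{y_i}$ misses the corresponding $Q_{y_i}$), so $x\in P\subset O$. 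This proves convexity, hence local convexity, and therefore $(X,\mathscr{T},\le,d)$ is a stable spacetime.

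There is no genuine obstacle here: once one invokes the Nachbin machinery — compact ordered $\Rightarrow$ normally ordered $\Rightarrow$ Urysohn's lemma for ordered spaces — local convexity drops out of a routine compactness argument, and the remaining clauses are bookkeeping, the only mildly non-automatic one being $d(x,x)=0$, which the reverse triangle inequality together with the finiteness of $d$ forces. If one preferred not to cite complete order regularity, the self-contained argument above is precisely where the compactness of $X$ enters; without it the statement fails in general.
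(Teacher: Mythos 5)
Your proof is correct and follows essentially the same route as the paper: $d(x,x)=0$ is forced by the reverse triangle inequality on $x\le x\le x$ together with finiteness, and stability then reduces to the classical Nachbin fact that a compact ordered space is completely regularly ordered, hence convex and in particular locally convex. The only difference is that the paper simply cites this fact from Nachbin, while you unfold it into a self-contained argument (order normality plus the ordered Urysohn lemma plus a compactness covering argument), which is a correct rendering of the standard proof.
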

That is, the result does not depend on $d$  as long as the spacetime axioms are satisfied.
\begin{proof}
The reverse triangle inequality for the triple $x\le x\le x$ gives $2d(x,x)\le d(x,x)$, thus $d(x,x)=0$. Every compact ordered space is known to be completely regularly ordered hence convex \cite{nachbin65} and so locally convex.
\end{proof}

\subsection{The product trick: metricity from causality}
We now introduce the product trick, which was first defined at the level of the tangent bundle in \cite{minguzzi17,minguzzi17d}. It shows that the metrical aspects of spacetime can be reduced to causality in a space with one additional dimension. It shall provide a third definition of spacetime which is that most conveniently used in proofs.

Given a set $X$ let us denote $X^\times:=X\times \mathbb{R}$. We are interested in a topological ordered space $(X^\times,\mathscr{T}^\times,$ $\le^\downarrow)$ structured so as to respect the product structure of $X^\times$. The relation $\le^\downarrow$ will also be denoted $J^\downarrow$ (though in the manifold transition it really corresponds to $J^\downarrow_S$ of \cite[Eq.\ (4.3)]{minguzzi17}).

So consider on $X^\times$  the product topology  $\mathscr{T}^\times=\mathscr{T} \times \mathscr{T}_\mathbb{R}$
, and let $\le^\downarrow$ be a preorder on $X^\times$ that
satisfies
\[
(*) \quad \textrm{ If } \ (p,r)\le^\downarrow (p',r') \ \textrm{ and } \ [s'-s]^+\le [r'-r]^+ \ \textrm{ then } \ (p,s)\le^\downarrow (p',s').
\]
This condition can be better understood as follows. First, observe that a preorder that is translationally invariant,  namely for all $c\in \mathbb{R}$,
\[
(x,a)\le^\downarrow (y,b) \Rightarrow (x,a+c)\le^\downarrow (y,b+c),
\]
 can be projected to a preorder on $X$ defined by ``$x\le y$ iff there are $a,b\in \mathbb{R}$ such that $(x,a)\le (y,b)$'' (the translational invariance is used to show that this relation is transitive). Secondly, given a translationally invariant preorder $\le^\downarrow$ on $X^\times$,
 the preorder $\le^\downarrow$ contains the product of the projected preorder with the reverse canonical order\footnote{Here we are endowing the real line with the standard topology but with the reverse  canonical order. The down arrow recalls this fact. We could have used the standard order but we wanted to keep notations analogous to those of \cite{minguzzi17} which have some advantages when treating time functions as graphs, as some minus signs are not needed.} on $\mathbb{R}$ if $b\le a$ and $x\le y$ $\Rightarrow  (x,a)\le^\downarrow (y,b)$. The condition (*) is equivalent to these two conditions.

\begin{proposition} \label{cpl}
The condition  (*)  is equivalent to  translational invariance (which ensures projectability) and inclusion of the product preorder.
\end{proposition}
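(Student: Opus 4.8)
The plan is to show that condition $(*)$ and the conjunction of ``translational invariance plus inclusion of the product preorder'' each imply the other. Throughout I would keep in mind the identity $[t]^+ = \max\{0,t\}$ and the elementary fact that $[s'-s]^+ \le [r'-r]^+$ is a way of saying that the ``drop'' in the $\mathbb{R}$-coordinate from $(p,s)$ to $(p',s')$ is no larger than the one from $(p,r)$ to $(p',r')$; in particular, taking $s=r+c$, $s'=r'+c$ gives equality of the two bracketed quantities, and taking $s\ge s'$ (so $[s'-s]^+=0$) makes the hypothesis automatic whenever $[r'-r]^+$ is whatever it is, but one should be careful that the inequality $[s'-s]^+\le[r'-r]^+$ must actually hold, so the clean case is $s'\le s$ together with $r\le r'$.

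\textbf{Direction $(*)\Rightarrow$ the two conditions.} For translational invariance, assume $(x,a)\le^\downarrow(y,b)$ and fix $c\in\mathbb{R}$; apply $(*)$ with $(p,r)=(x,a)$, $(p',r')=(y,b)$, $s=a+c$, $s'=b+c$. Then $[s'-s]^+ = [b-a]^+ = [r'-r]^+$, so the hypothesis of $(*)$ holds and we conclude $(x,a+c)\le^\downarrow(y,b+c)$. For inclusion of the product preorder, suppose $x\le y$ in the projected preorder and $b\le a$. By definition of the projected preorder there exist $a_0,b_0$ with $(x,a_0)\le^\downarrow(y,b_0)$. Using translational invariance (just proved) shift so that, say, the first coordinate's label becomes $a$: $(x,a)\le^\downarrow(y,b_0+(a-a_0))$. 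Set $r=a$, $r'=b_0+a-a_0$, and take $s=a$, $s'=b$. We need $[b-a]^+\le[r'-r]^+=[b_0-a_0]^+$. This is where a small amount of care is needed: it is not automatic for arbitrary $a_0,b_0$, so instead I would argue directly — since $b\le a$ we have $[s'-s]^+=[b-a]^+=0\le[r'-r]^+$ regardless of the value of $[r'-r]^+$ — and then $(*)$ yields $(x,a)\le^\downarrow(y,b)$, which is exactly the product-preorder inclusion. So the key observation that makes this direction work is simply that $b\le a$ forces $[s'-s]^+=0$.

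\textbf{Direction: the two conditions $\Rightarrow (*)$.} Assume $\le^\downarrow$ is translationally invariant and contains the product of the projected preorder with the reverse order on $\mathbb{R}$. Suppose $(p,r)\le^\downarrow(p',r')$ and $[s'-s]^+\le[r'-r]^+$. First, from $(p,r)\le^\downarrow(p',r')$ the projected preorder gives $p\le p'$. Now I would split into cases according to the sign of $r'-r$. If $r'\le r$, then $[r'-r]^+=0$, forcing $[s'-s]^+=0$, i.e. $s'\le s$; then $p\le p'$ and $s'\le s$ give $(p,s)\le^\downarrow(p',s')$ directly by the product-preorder inclusion. If $r'>r$, write $\delta:=r'-r>0$ and $\epsilon:=[s'-s]^+\le\delta$. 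Translate $(p,r)\le^\downarrow(p',r')$ by $s-r$ in the first slot's bookkeeping — more precisely, apply translational invariance with $c=s-r$ to get $(p,s)\le^\downarrow(p',r'+s-r)=(p',s+\delta)$. Since $s+\delta \ge s+\epsilon \ge s'$ (using $\epsilon\le\delta$ and $s'\le s+\epsilon$), we have $s'\le s+\delta$; combining $(p,s)\le^\downarrow(p',s+\delta)$ with the product inclusion applied to $p'\le p'$ and $s'\le s+\delta$ — formally, composing $(p,s)\le^\downarrow(p',s+\delta)$ with $(p',s+\delta)\le^\downarrow(p',s')$ (the latter from the product inclusion, since $s'\le s+\delta$ and $p'\le p'$) and using transitivity — yields $(p,s)\le^\downarrow(p',s')$, as desired.

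\textbf{Main obstacle.} The only genuinely delicate point is the careful bookkeeping of the real coordinates in the case $r'>r$, where one must first translate and then ``descend'' in the $\mathbb{R}$-factor using the reverse-order part of the product preorder, invoking transitivity; getting the inequalities $s'\le s+\delta$ lined up correctly from $[s'-s]^+\le[r'-r]^+$ is the crux. The converse direction is comparatively routine, the one subtlety there being to notice that the hypothesis $b\le a$ in the product-preorder inclusion makes $[s'-s]^+$ vanish, so no constraint on $[r'-r]^+$ is needed.
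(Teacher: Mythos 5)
Your proof is correct and follows essentially the same route as the paper: both directions use exactly the paper's ingredients (translation via $c$, the observation that $b\le a$ forces $[s'-s]^+=0$, and in the converse direction ``descending'' along the fiber by composing with the product-preorder inclusion at $p'\le p'$ and using transitivity). Your case split on the sign of $r'-r$ is only a cosmetic variant of the paper's uniform treatment via $(p,0)\le^\downarrow(p',[r'-r]^+)$, so there is nothing substantive to add.
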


\begin{proof}
Assume  (*) and let $(x,a)\le^\downarrow (y,b)$ then $[(b+c)-(a+c)]^+\le [b-a]^+$ thus $(x,a+c)\le^\downarrow (y,b+c)$, which proves translational invariance. Since it is translationally invariant it admits a projected preorder $\le$. Suppose $x\le y$, which means that there are $a,c\in \mathbb{R}$ such that $(x,a)\le (y,c)$. Necessarily, $[c-a]^+\ge 0$, thus for $b\le a$, $[b-a]^+=0\le [c-a]^+$ and so $(x,a)\le (y,b)$, which proves that the product preorder is included in $\le^\downarrow$.

Assume translational invariance and inclusion of the product preorder. Note first that if $(p,r)\le^\downarrow (p',r')$, then $p\le p'$ and for any $r''\le r'$ we have $(p',r')\le^\downarrow (p',r'')$ because of the inclusion of the product preorder, thus by composition, $(p,r)\le^\downarrow (p',r'')$.

 Let $(p,r)\le^\downarrow (p',r')$, then $p\le p'$ and by the inclusion of the product preorder $(p,r)\le^\downarrow (p',r)$.
By translational invariance  $(p,0)\le^\downarrow (p',r'-r)$ and $(p,0)\le^\downarrow (p',0)$, and so $(p,0)\le^\downarrow (p',[r'-r]^+)$. Let $[s'-s]^+\le [r'-r]^+$ then $s'-s\le [r'-r]^+$, thus $(p,0)\le^\downarrow (p',s'-s)$, thus by translational invariance $(p,s)\le^\downarrow (p',s')$, which proves (*).
\end{proof}

Note that by (*),  $(p,r)\le^\downarrow (p',r') \Rightarrow (p,0)\le^\downarrow (p',0)$ thus the projected order is ``$x\le y$ \textrm{ iff }  $(x,0)\le (y,0)$''.


\begin{definition} \label{pro}
A spacetime is a closed preordered space $(X^\times, \mathscr{T}^\times,\le^\downarrow)$ such that $\le^\downarrow$ satisfies (*). We say that it is {\em weakly stably causal} if $\le^\downarrow$ is  an order. If, additionally, the future (equiv.\ past) of every point does not contain an entire $\mathbb{R}$-fiber, we say that it is {\em weakly stable}.
The adjective {\em weak} is dropped if  $(X^\times, \mathscr{T}^\times,\le^\downarrow)$ is locally convex.
\end{definition}

This notion is equivalent to the previous ones. Starting from $(X,\mathscr{T},\le, d)$ just let $X^\times =X\times \mathbb{R}$, $\mathscr{T}^\times =\mathscr{T}\times \mathscr{T}_\mathbb{R}$, and
\begin{equation} \label{kcg}
(x,a)\le^\downarrow (y,b) \ \ \textrm{ iff } \ \ x\le y  \ \textrm{and} \  b\le a+d(x,y) ,
\end{equation}
equivalently, starting from $(X,\mathscr{T},\tau)$ set
\[
(x,a)\le^\downarrow (y,b) \ \ \textrm{ iff } \ \ b\le a+\tau(x,y).
\]

On the other direction, starting from $(X^\times, \mathscr{T}^\times,\le^\downarrow)$ define $\le$ as the projected preorder
\[
x\le y   \ \textrm{ iff } \   (x,0)\le^\downarrow (y,0)
\]
and set $d(x,y):=0$ if $x\nleq y$ and   otherwise
\[
d(x,y):=  \sup \{b: (x,0)\le^\downarrow (y,b) \}
\]
 (note that  $b=0$  belongs to the set thus $d\ge 0$, also since $\le^\downarrow$ is closed, $(x,0)\le^\downarrow (y,d(x,y))$). Equivalently, these two conditions read
\[
\tau(x,y):= \sup \{b: (x,0)\le^\downarrow (y,b) \},
\]
where it is understood that $\tau=-\infty$ for the empty set.

\begin{remark}
The product trick, that is, the fact that causality and metricity can be encoded in causality in one additional dimension is expressed by Eq.\ (\ref{kcg}): on the left we have the causality relation $\le^\downarrow$ on $X^\times$ while on the right we have causality $\le$ on $X$ and the Lorentzian distance $d$.
\end{remark}

\begin{proof}[Proof of the equivalence (but the last statement involving dropping `weak')] $\empty$ \\
Suppose that $(X,\mathscr{T},\le, d)$ is a spacetime, we want to prove that  $(X^\times, \mathscr{T}^\times,\le^\downarrow)$ is a closed preordered space.

Reflexivity of $\le^\downarrow$  is obvious and transitivity follows easily from the transitivity of $\le$ and the reverse triangle inequality.

The map $f: (X^\times)^2\to \mathbb{R}$, $((x,a),(y,b))\mapsto a+d(x,y)-b$ is upper semi-continuous, thus $f^{-1}([0,\infty))$ is closed. As $\pi_1\times \pi_3: ((x,a),(y,b))\mapsto (x,y)$ is continuous $\le^\downarrow=(\pi_1\times \pi_3)^{-1}(\le) \cap f^{-1}([0,\infty))$ is closed.

The  preorder $\le^\downarrow$  defined from $d$ is clearly translationally invariant,  and if $x\le y$ and  $b\le a$, then   $b\le a+d(x,y)$ thus, $(x,a)\le (y,b)$, thus $\le^\downarrow$ satisfies (*) by Prop.\ \ref{cpl}.

For the converse: by the reflexivity of $\le^\downarrow$,  for every $x\in X$, $(x,0)\le (x,0)$ which implies $x\le x$, i.e.\ the reflexivity of $\le$.
Let $x\le y \le z$, then $(x,0)\le^\downarrow (y,d(x,y))\le^\downarrow (z,d(x,y)+d(y,z))$, so by transitivity of $\le^\downarrow$, $(x,0)\le^\downarrow (z,d(x,y)+d(y,z))$, which implies $x\le z$, namely $\le$ is transitive, and $d(x,y)+d(y,z)\le d(x,z)$, namely the reverse triangle inequality.

Suppose that $d$ is not upper semi-continuous, then we can find $x,y\in X$,  $d(x,y)<+\infty$, and $\epsilon>0$ such that for every neighborhoods $U\ni x$, $V\ni y$, and some $x'\in U$, $y'\in V$, $d(x',y') \ge d(x,y)+\epsilon$. This means that $(x',0)\le^\downarrow (y', d(x,y)+\epsilon)$, and hence by the closure of $\le^\downarrow$,  $(x,0)\le^\downarrow (y, d(x,y)+\epsilon)$, which implies $d(x,y)+\epsilon \le d(x,y)$, a contradiction that proves the upper semi-continuity of $d$.

The map $(X\times {0})^2\to X^2$, $((x,0),(y,0))\to (x,y)$, is a homeomorphism, and  $(x,0)\le^\downarrow (y,0)$ iff $x\le y$, thus it also  an order isomorphism. In order to prove the closure of $\le$ is sufficient to prove the closure of $\le^\downarrow  \cap (X\times {0})^2$ which is clear as it is an intersection of closed sets.

The condition $x \nleq y$ implies $d(x,y)=0$ follows from the definition of $d$.

Suppose that $(X,\mathscr{T},\le, d)$ is weakly stably causal.
Let $(x,a),(y,b)$ be such that $(x,a)\le^\downarrow (y,b)$ and $(x,a) \ge^\downarrow (y,b)$, then $x\le y$ and $y\le x$ which implies $x=y$ (the projected preorder is actually and order). But  the two inequalities also imply  $ b\le a+d(x,y)$ and $ a\le b+d(y,x) $, which using $x=y$, read $\vert b-a\vert\le d(x,x)=0$, thus $a=b$, namely $\le^\downarrow$ is antisymmetric, hence an order.

Suppose that $(X^\times, \mathscr{T}^\times,\le^\downarrow)$ is weakly stably causal, that is, $\le^\downarrow$ is antisymmetric. Let $x\le y$ and $y\le x$ then $(x,0)\le^\downarrow (y,0)$ and  $(y,0)\le^\downarrow (x,0)$ thus $(x,0)=(y,0)$, that is, $x=y$, namely $\le$ is antisymmetric.
Furthermore, suppose that $d(x,x)>0$ for some $x$, then $(x,0)\le^\downarrow (x,d(x,x))$ and $(x,d(x,x)) \le^\downarrow (x,0)$, a contradiction.






As for stability, observe that
the future of $(x,a)$ reads
\[
(J^\downarrow)^+((x,a))=\{(y,b): x \le y \textrm { and } b \le a + d(x,y)\}
\]
so it does not contain a whole $\mathbb{R}$-fiber iff $d(x,y)$ is finite for every $y$. A similar conclusion holds studying the past of a point
\[
(J^\downarrow)^-((y,b))=\{(x,a): x \le y \textrm { and } b \le a + d(x,y)\}.
\]
\end{proof}

We end the proof of the equivalence of Definition \ref{pro} with the previous definitions for spacetime, with the following result.

\begin{proposition} \label{otyu}
Let $(X,\mathscr{T},\le,d)$ be a  weakly stably causal spacetime. Local convexity holds for $(X,\mathscr{T},\le)$ iff it holds for $(X^\times, \mathscr{T}^\times,$ $\le^\downarrow)$.
\end{proposition}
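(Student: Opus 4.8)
The plan is to transfer local convexity back and forth along the canonical map $\iota\colon X\to X^\times$, $x\mapsto(x,0)$, which the equivalence proof above already identifies as a homeomorphism onto $X\times\{0\}$ with $x\le y\Leftrightarrow(x,0)\le^\downarrow(y,0)$, using in addition the translational invariance of $\le^\downarrow$ and the upper semi-continuity of $d$ together with $d(x,x)=0$.

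For the implication ``$X^\times$ locally convex $\Rightarrow X$ locally convex'', which is routine, I would take $x\in X$ and an open $U\ni x$, apply local convexity of $(X^\times,\mathscr T^\times,\le^\downarrow)$ to the box $U\times(-1,1)$ to get a $\le^\downarrow$-convex $\mathcal W$ with $(x,0)\in\mathrm{Int}\,\mathcal W\subseteq\mathcal W\subseteq U\times(-1,1)$, and set $W:=\{p\in X:(p,0)\in\mathcal W\}$. Then $W\subseteq U$ is immediate; a basic open box inside $\mathcal W$ around $(x,0)$ projects into $W$, so $x\in\mathrm{Int}\,W$; and $W$ is $\le$-convex since a chain $p_1\le p\le p_2$ lifts to $(p_1,0)\le^\downarrow(p,0)\le^\downarrow(p_2,0)$ (here $d\ge 0$ is used), whence $(p,0)\in\mathcal W$ once $(p_1,0),(p_2,0)\in\mathcal W$.

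The content is in the converse, ``$X$ locally convex $\Rightarrow X^\times$ locally convex''. Fix $(x,a)$; by translational invariance (an order isomorphism and homeomorphism) we may assume $a=0$, and it suffices to refine a basic neighborhood $\mathcal O=U\times(-\epsilon,\epsilon)$. The obvious guess $V\times(-\epsilon,\epsilon)$, with $V$ a $\le$-convex neighborhood of $x$ inside $U$, is \emph{not} $\le^\downarrow$-convex: for $(p,r)\le^\downarrow(q,s)\le^\downarrow(p',r')$ with the outer points in the box one gets $q\in V$ for free but only $r'-d(q,p')\le s\le r+d(p,q)$, and $d$ need not be small. To fix this I would first use $d(x,x)=0$ and upper semi-continuity of $d$ to pick a neighborhood $U'\subseteq U$ of $x$ with $d<\epsilon/3$ on $U'\times U'$, then a $\le$-convex neighborhood $V\subseteq U'$ of $x$, and finally replace the box by its $\le^\downarrow$-convex hull,
\[
\mathcal V:=i^\downarrow\!\big(V\times(-\tfrac{\epsilon}{3},\tfrac{\epsilon}{3})\big)\cap d^\downarrow\!\big(V\times(-\tfrac{\epsilon}{3},\tfrac{\epsilon}{3})\big),
\]
$i^\downarrow,d^\downarrow$ denoting the hulls for $\le^\downarrow$ (recall $i^\downarrow(S)\cap d^\downarrow(S)$ is convex and is the smallest convex set containing $S$). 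Then $(x,0)\in\mathrm{Int}\,\mathcal V$ because $\mathcal V\supseteq\mathrm{Int}(V)\times(-\epsilon/3,\epsilon/3)$, and $\mathcal V\subseteq\mathcal O$ because every $(q,s)\in\mathcal V$ lies in a chain $(p,r)\le^\downarrow(q,s)\le^\downarrow(p',r')$ with $(p,r),(p',r')$ in the shrunk box, forcing $q\in i(V)\cap d(V)=V\subseteq U$ and, through $d(p,q),d(q,p')<\epsilon/3$, the bound $|s|<2\epsilon/3<\epsilon$.

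The main obstacle I anticipate is precisely the observation that products of intervals with convex sets need not remain convex for $\le^\downarrow$, so one is forced to take convex hulls of \emph{shrunk} boxes, the shrinkage being dictated by upper semi-continuity of $d$ at the diagonal; once this is in place the remaining checks are immediate. It is worth recording explicitly that ``convex neighborhood'' here means a (possibly non-open) convex set with the point in its interior, so the sets constructed really do form a neighborhood basis.
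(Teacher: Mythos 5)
Your proposal is correct and follows essentially the same route as the paper's proof: the hard direction shrinks the box using upper semi-continuity of $d$ together with $d(x,x)=0$, takes a $\le$-convex neighborhood of $x$ there, and passes to the $\le^\downarrow$-convex hull $i^\downarrow(R)\cap d^\downarrow(R)$ of a product set $R$, while the converse projects the zero-level slice of a convex neighborhood of $(x,0)$, exactly as in the paper (which uses $\epsilon/2$ where you use $\epsilon/3$). No gaps.
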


\begin{proof}
$\Rightarrow$. By translational invariance it is sufficient to prove local convexity at $(x,0)\in X^\times$, $x\in X$. Let $O\times (-\epsilon, \epsilon)$ be a product open neighborhood of  $(x,0)$. We want to prove that there is a convex neighborhood $W$ of  $(x,0)$ inside this product open set. Since $d(x,x)=0$, and $d$ is upper semi-continuous, we can assume, without loss of generality, that $O$ is such that $d\vert_{O\times O} < \epsilon/2$. Since $X$ is locally convex there is a convex neighborhood $U\ni x$, $U\subset O$. Let $R=U\times (-\epsilon/2, \epsilon/2)$ and $W=i^\downarrow(R)\cap d^\downarrow(R)$. Let us prove that this convex neighborhood of $(x,0)$ satisfies  $W\subset O\times (-\epsilon, \epsilon)$. Indeed, if $(p,a), (r,c)\in R$ and $(p,a)\le^\downarrow (q,b) \le^\downarrow (r,c)$ then $p\le q \le r$ thus, as $p,r\in U$ which is convex, $q\in U$. Moreover, $b\le a  + d(p,q)$ and $c\le b+d(q,r)$, but $d(p,q),d(q,r)< \epsilon/2$, thus $b<a+\epsilon/2<\epsilon$ and $b> c-\epsilon/2>-\epsilon$, thus $(q,b)\in   O\times (-\epsilon, \epsilon)$.

$\Leftarrow$. If  $(X^\times, \mathscr{T}^\times,$ $\le^\downarrow)$ is locally convex, in order to prove local convexity at $x\in X$, let $O\ni x$ be an open neighborhood, then $O\times (-1,1)$ is an open neighborhood of $(x,0)$ and so there is a convex neighborhood $W$ of $(x,0)$ contained in  $O\times (-1,1)$ . Let $U\subset O$ be the projection of the elements of $W$ that have real coordinate equal to zero. The neighborhood $W$ contains  a product open neighborhood of $(x,0)$, thus the set $U$ is  a neighborhood of $x$. The set $U$ is convex because if $p,r\in U$ and $q$ is such that $p\le q\le r$, then $(p,0)\le^\downarrow (q,0)\le^\downarrow (r,0)$ and as $(p,0),(r,0)\in W$, we get by its convexity, $(q,0)\in W$ and hence $q\in U$, namely $U$ is convex.
\end{proof}

\subsection{Global hyperbolicity}

In this section we define the notion of globally hyperbolic spacetime.

\begin{proposition} \label{wekp}
For a weakly stably causal spacetime $(X, \mathscr{T}, \le, d)$, equivalently $(X^\times,$ $\mathscr{T}^\times,$ $\le^\downarrow)$, the following properties are equivalent
\begin{itemize}
\item[(i)]  for every $p,q\in X$, $J(p,q)$ is compact, and $d$ is finite (hence the spacetime is weakly stable),
\item[(ii)] for every $P,Q\in X^\times$, the diamond $J^\downarrow(P,Q):=(J^\downarrow)^+(P)\cap (J^\downarrow)^-(Q)$ is compact.
\end{itemize}
\end{proposition}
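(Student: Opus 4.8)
The plan is to pass to the product picture and make the future, past and causal diamond of points of $X^\times$ explicit. With $P=(p,r)$ and $Q=(q,s)$, the dictionary $(x,a)\le^\downarrow(y,b)\Leftrightarrow(x\le y$ and $b\le a+d(x,y))$ gives $(J^\downarrow)^+(P)=\{(y,b):p\le y,\ b\le r+d(p,y)\}$ and $(J^\downarrow)^-(Q)=\{(x,a):x\le q,\ a\ge s-d(x,q)\}$, so that
\[
J^\downarrow(P,Q)=\{(x,a):\ p\le x\le q,\ \ s-d(x,q)\le a\le r+d(p,x)\}.
\]
Since $(X^\times,\mathscr{T}^\times,\le^\downarrow)$ is a closed preordered space, the future and past of a single point are closed, hence $J^\downarrow(P,Q)$ is closed; I will need this only for the implication (i)$\Rightarrow$(ii).

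For (i)$\Rightarrow$(ii), fix $P=(p,r)$, $Q=(q,s)$, and assume $p\le q$ (otherwise the diamond is empty). The first projection maps $J^\downarrow(P,Q)$ into $J(p,q)$, which is compact by hypothesis, so it remains to confine the second coordinate to a compact interval. Here I would invoke upper semi-continuity and finiteness of $d$: being upper semi-continuous and real-valued on the compact square $J(p,q)\times J(p,q)$, $d$ is bounded above there by some $M<\infty$, and therefore every $(x,a)\in J^\downarrow(P,Q)$ satisfies $s-M\le a\le r+M$. Thus $J^\downarrow(P,Q)$ is a closed subset of the compact product $J(p,q)\times[s-M,r+M]$, hence compact.

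For (ii)$\Rightarrow$(i), take $P=(p,0)$ and $Q=(q,0)$ throughout. First I would derive finiteness of $d$: if $d(p,q)=+\infty$ then, since $d(p,p)=0$ by weak stable causality, $(p,a)\in J^\downarrow(P,Q)$ for every $a\le 0$, so the second projection of the diamond contains $(-\infty,0]$, which is not compact, contradicting the compactness of $J^\downarrow(P,Q)$; hence $d<+\infty$. Next, since $d\ge 0$, the point $(x,0)$ lies in $J^\downarrow(P,Q)$ for every $x$ with $p\le x\le q$, so the first projection of $J^\downarrow(P,Q)$ is exactly $J(p,q)$; being the continuous image of a compact set, $J(p,q)$ is compact. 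This gives the equivalence.

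The argument is mostly bookkeeping once the diamond has been written out in terms of $\le$ and $d$. The one step that deserves attention is the boundedness of $d$ on the compact square $J(p,q)\times J(p,q)$ in the direction (i)$\Rightarrow$(ii) --- this is precisely where upper semi-continuity and finiteness of $d$ are used --- together with the observation that the diamond is closed, so that the ``closed subset of a compact set'' argument applies.
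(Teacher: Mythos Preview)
Your proof is correct and follows the same overall strategy as the paper. One small simplification worth noting: in (i)$\Rightarrow$(ii) the paper bounds the real coordinate using only $d(p,q)$ via the reverse triangle inequality (since $p\le x\le q$ gives $d(p,x),d(x,q)\le d(p,q)$, hence $a\in[s-d(p,q),\,r+d(p,q)]$), avoiding the appeal to upper semi-continuity on the compact square; and in (ii)$\Rightarrow$(i) the paper concludes compactness of $J(p,q)$ as a closed subset of the compact projection rather than as the exact projection, though your observation that the projection is exactly $J(p,q)$ is a bit cleaner.
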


\begin{proof}
(i) $\Rightarrow$ (ii). Since $J^\downarrow$ is closed, $J^\downarrow(P,Q)$ is closed. The projection on $X$ is contained in  $J(p,q)$, that to $\mathbb{R}$ is contained in the compact interval $[\pi_\mathbb{R}(Q)-d(p,q), \pi_\mathbb{R}(P)+d(p,q)]$. Thus $J^\downarrow(P,Q)$ is contained in the product of compact sets which is compact.

(ii) $\Rightarrow$ (i).
Let us prove that $d$ is finite. Suppose $d(p,q)=+\infty$ for some $p,q\in X$, which implies $p\le q$. Let $P=(p,0)$, $Q=(q,0)$ then the points $R=(q,r)$, for $r>0$ are included in the diamond $J^\downarrow(P,Q)$, which is thus non-compact as its projection to $\mathbb{R}$ is non-compact, a contradiction.

Let $r\in J(p,q)$ then $(r,0)\in J^\downarrow(P,Q)$, where $P=(p,0)$, $Q=(q,0)$, which proves that $J(p,q)$ is contained in the projection of  $J^\downarrow(P,Q)$ which is compact. But $J(p,q)$ is closed because $\le$ is closed, hence compact.
\end{proof}

We have also the following variant where $J(C):=J^+(C)\cap J^-(C)$ and similarly for $J^\downarrow$.

\begin{proposition} \label{strp}
For a weakly stably causal spacetime $(X, \mathscr{T}, \le, d)$, equivalently $(X^\times,$ $\mathscr{T}^\times,$ $\le^\downarrow)$, the following properties are equivalent
\begin{itemize}
\item[(i)]  for every compact set $C$, $J(C)$  is compact, and $d$ is finite (hence the spacetime is weakly stable),
\item[(ii)] for every compact set $K$, $J^\downarrow(K)$ is compact.
\end{itemize}
\end{proposition}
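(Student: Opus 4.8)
The plan is to mirror the proof of Proposition~\ref{wekp}, replacing the futures and pasts of single points by those of a compact set, and to locate the only substantive point (a uniform bound on $d$) before writing anything out.

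For the implication (i)~$\Rightarrow$~(ii) I would argue as follows. Let $K\subset X^\times$ be compact and put $C:=\pi_X(K)$, which is compact in $X$, while $\pi_\mathbb{R}(K)$ is contained in a compact interval $[-N,N]$. First, $J^\downarrow(K)=i^\downarrow(K)\cap d^\downarrow(K)$ is closed: since $\le^\downarrow$ is a closed preorder and $K$ is compact, both $i^\downarrow(K)$ and $d^\downarrow(K)$ are closed by Nachbin's result (\cite[Prop.\ 4]{nachbin65}), hence so is their intersection. Next I bound $J^\downarrow(K)$ inside a product of compacta. If $(y,b)\in J^\downarrow(K)$, choose $(p,a),(r,c)\in K$ with $(p,a)\le^\downarrow(y,b)\le^\downarrow(r,c)$; projecting to $X$ gives $p\le y\le r$ with $p,r\in C$, so $y\in J(C)$, which is compact by (i). The $\mathbb{R}$-coordinate satisfies $b\le a+d(p,y)$ and $c\le b+d(y,r)$, i.e.\ $c-d(y,r)\le b\le a+d(p,y)$. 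Here is where finiteness of $d$ enters: $d$ is upper semi-continuous and real-valued, hence bounded above, say by $M$, on the compact set $J(C)\times J(C)$ (note $C\subseteq J(C)$, so both $(p,y)$ and $(y,r)$ lie there). Therefore $b\in[-N-M,\,N+M]$, and thus $J^\downarrow(K)\subset J(C)\times[-N-M,N+M]$, a product of compact sets; being a closed subset of a compact set, $J^\downarrow(K)$ is compact.

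For (ii)~$\Rightarrow$~(i): finiteness of $d$ is obtained exactly as in Proposition~\ref{wekp} --- if $d(p,q)=+\infty$ then, taking $K=\{(p,0),(q,0)\}$, the entire half-fiber $\{q\}\times(0,\infty)$ lies in $J^\downarrow(K)$, whose projection to $\mathbb{R}$ is then unbounded, contradicting compactness. For compactness of $J(C)$ with $C\subset X$ compact, take $K=C\times\{0\}$, which is compact, so $J^\downarrow(K)$ is compact by (ii). Since $d\ge 0$, whenever $p\le y\le r$ with $p,r\in C$ we have $(p,0)\le^\downarrow(y,0)\le^\downarrow(r,0)$, so $J(C)\times\{0\}\subseteq J^\downarrow(K)$. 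As $\le$ is closed and $C$ is compact, $J(C)=i(C)\cap d(C)$ is closed, hence $J(C)\times\{0\}$ is a closed subset of the compact set $J^\downarrow(K)$, hence compact; it is homeomorphic to $J(C)$ via $\pi_X$, so $J(C)$ is compact, which together with the finiteness of $d$ is precisely~(i).

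The one place requiring care, rather than bookkeeping, is the uniform bound on $d$ in the first implication: it is exactly the step that forces us to assume $d$ finite, and it rests on the fact that a finite upper semi-continuous function on a compact space is bounded above.
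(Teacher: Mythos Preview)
Your proof is correct and follows essentially the same route as the paper. The only noteworthy variation is in the uniform bound on $d$ in (i)~$\Rightarrow$~(ii): you bound $d$ on the compact set $J(C)\times J(C)$, whereas the paper bounds it on $C\times C$ (which suffices since for $p\le y\le r$ the reverse triangle inequality gives $d(p,y),\,d(y,r)\le d(p,r)$); both work, and your choice avoids invoking the reverse triangle inequality at the price of using the compactness of $J(C)$, which is part of hypothesis~(i) anyway.
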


\begin{proof}
(i) $\Rightarrow$ (ii). Since $J^\downarrow$ is closed, $J^\downarrow(K)$ is closed and by reflexivity of $J^\downarrow$, it contains $K$. Let $C:=\pi_X(K)$ and $D:=\sup_{C\times C} d$, which is attained by the upper semi-continuity of $d$. Since $d$ is finite, $D$ is finite.

We have  $\pi_X(J^\downarrow(K))\subset J(C)$ and the projection to $\mathbb{R}$ is contained in the compact interval $[\inf \pi_\mathbb{R}(K)-D, \sup \pi_\mathbb{R}(P)+D]$, where the inf and sup in this expression are finite by the compactness of $K$. Thus $J^\downarrow(K)$ is contained in the product of compact sets which is compact.

(ii) $\Rightarrow$ (i).
Let us prove that $d$ is finite. Suppose $d(p,q)=+\infty$ for some $p,q\in X$, which implies $p\le q$ and $p\ne q$ and hence $q \nleq p$. Let $P=(p,0)$, $Q=(q,0)$ then the points $R=(q,r)$, for $r>0$ are included in the diamond $J^\downarrow(P,Q)=J^\downarrow(K)$, $K=\{P,Q\}$, which is thus non-compact as its projection to $\mathbb{R}$ is non-compact, a contradiction.

Let $C$ be a compact set and let $K=C\times \{0\}$. Let $r\in J(C)$ then there are $p,q\in C$, $p\le r\le q$. Note that  $(r,0)\in J^\downarrow(P,Q)$, where $P=(p,0)$, $Q=(q,0)$ which proves that $J(C)$ is contained in the projection of  $J^\downarrow(K)$ which is compact. But $J(C)$ is closed because $\le$ is closed, hence compact.
\end{proof}

The property of Prop.\ \ref{strp} implies that of Prop.\ \ref{wekp}, it is sufficient to use the antisymmetry of $\le$ and take $C=\{p,q\}$, noting that if $J(p,q)\ne \emptyset$ then $p\le q$ and $J(p,q)=J(\{p,q\})$.

The equivalence of these two variants generally requires further assumptions (their equivalence and use to define global hyperbolicity in the smooth manifold analogy is essentially the equivalence in \cite[Prop.\ 2.21]{minguzzi17} which uses a {\em proper cone structure} assumption).

In the theory of closed ordered spaces the property of
{\em $k$-preservation} \cite{minguzzi12d}---for each compact subset $K\subset X$, $i(K)\cap d(K)$ is compact---has been proposed to define global hyperbolicity. It is clearly equivalent to the property of Prop.\ \ref{strp} for weakly stable spacetimes (as for them $d$ is finite).

For definiteness, in the context of this work, we give the following definition

\begin{definition}
A stable $k$-preserving spacetime $(X, \mathscr{T}, \le, d)$ is said to be {\em globally hyperbolic}.
\end{definition}
Under local compactness and $\sigma$-compactness, the definition can be improved, as $k$-preservation already implies local convexity \cite[Thm.\ 3.3]{minguzzi12d}.

\begin{example}
Let $(X,d)$ be a countably generated Lorentzian metric space in the sense of \cite{minguzzi24b}. This means the following conditions hold:
\begin{itemize}
\item[(i)] there is a function (Lorentzian distance) $d:X\times X\to [0,\infty)$ that satisfies the reverse triangle inequality over chronologically related triples ($I:=\{d>0\}$ is the chronological relation),
\item[(ii)] there is a topology $\mathscr{T}$ such that $d$ is continuous and $\overline{I(p,q)}$ is compact for every $p,q\in X$,
\item[(iii)] for any two distinct points $p,q\in X$ there is a third point $r$ such that $d(p,r)\ne d(q,r)$ or $d(r,p) \ne d(r,q)$,
\item[(iv)] there is a countable set $\mathcal{S}\subset X$ such that for every $q\in X$ there are $p,r\in \mathcal{S}$ such that $q\in I^+(p)\cap I^-(r)$.
\end{itemize}
The topology $\mathscr{T}$ is actually unique and called the {\em Lorentzian metric space topology}. Similarly, there is a natural closed order $J$ (uniquely determined from $d$), such that $I\subset J$, and the reverse triangle inequality extends to $J$-related triples. It is known \cite[Thm.\ 4.6]{minguzzi24b} that for every closed order $I\subset K\subset J$ and  every compact set $C$, the set  $K(C)$ is compact.  As a consequence $(X,\mathscr{T}, K, d)$ is a second-countable locally compact $\sigma$-compact (actually Polish) $k$-preserving stable spacetime (in the sense of this work), in particular the following main Thm.\ \ref{mai} on the representation of spacetime via continuous rushing functions applies to it (see  \cite[Prop.\ 3.20]{minguzzi24b} for the topological properties).
Note that if there is a gap between $\overline{I}$ and $J$, the continuous $d$-rushing functions might depend on the chosen $K$ though $d$ does not change. There is no gap if for every $p\in X$, $p\in \overline{I^\pm(p)}$ \cite{minguzzi22,minguzzi24b}.

Crucially, smooth globally hyperbolic spacetimes $(M,g)$ belong to the above class, which shows our abstract definition encompasses the standard classical setting.
In any case, for them more refined representation theorems are available \cite{minguzzi17}.

The same considerations apply to bounded Lorentzian metric spaces \cite{minguzzi22} but for them there is no need to assume the countably generated property (property (iv) above) as a Polish topological property follows anyway \cite[Thm.\ 1.10]{minguzzi22}.
\end{example}

\subsection{Order and product of spacetimes}

Given two spacetimes $M_1=(X_1,\mathscr{T}_1,\le_1, d_1)$ and $M_2=(X_2,\mathscr{T}_2,\le_2, d_2)$, let us write\footnote{We could generalize this definition dropping $X_1\subset X_2$ by requiring that there is a  map $\phi_{12}: X_1\to X_2$ which is injective, continuous, and preserves the distance.}: $X_1 \preceq X_2$ if  $X_1 \subset X_2$; $\mathscr{T}_1 \preceq \mathscr{T}_2$ if $\mathscr{T}_1 \supset \mathscr{T}_2$; $\le_1\preceq \le_2$ if $\le_1 \subset \le_2$; and $d_1\preceq d_2$ if $d_1(x,y)\le d_2(x,y)$ for every $x,y\in X_1$.
Observe that the empty set, the discrete topology, the discrete order, and the vanishing  Lorentzian distance, $d:=0$, are all $\preceq$-lower bounds on the respective categories. We can admit $X=\emptyset$ as a possible set for a spacetime, which is necessarily $(\emptyset,\{\emptyset\}, \emptyset, 0)$.
 Let us also write $M_1\preceq M_2$ if $X_1 \preceq X_2$, $\mathscr{T}_1 \preceq \mathscr{T}_2$,  $\le_1\preceq \le_2$ and  $d_1\preceq d_2$. It is clear that $\preceq$ is an order.

\begin{proposition}
Let $M_\alpha, \alpha\in A$, be a family of spacetimes.
Then the quadruple $M:=(X,\mathscr{T},\le, d)$ with
\[
X:=\cap_{\alpha \in A} X_\alpha, \quad \mathscr{T}:=\sup_{\alpha\in A}   \mathscr{T}_\alpha, \quad  \le:=\cap_{\alpha\in A}   \le_\alpha, \quad d:=\inf_{\alpha\in A} d_\alpha ,
\]
is a spacetime. It is  the largest lower bound (i.e.\ the infimum) of $\{M_\alpha\}$ for the order $\preceq$. If one of the $M_\alpha$ is weakly stably causal (resp.\ weakly stable) then so is $M$. If all the $M_\alpha$ are stably causal (resp.\  stable) then so is $M$.
\end{proposition}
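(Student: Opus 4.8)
The plan is to verify the four structural axioms of a spacetime for the quadruple $M=(X,\mathscr{T},\le,d)$, then establish the infimum property with respect to $\preceq$, and finally check the propagation of the weak/strong hierarchy conditions. I would carry out the axiom verification first. The relation $\le=\cap_\alpha\le_\alpha$ is reflexive and transitive as an intersection of preorders, and its graph is closed in $\mathscr{T}\times\mathscr{T}$: each $G(\le_\alpha)$ is closed in $\mathscr{T}_\alpha\times\mathscr{T}_\alpha$, hence closed in the finer product topology $\mathscr{T}\times\mathscr{T}$ (since $\mathscr{T}\supset\mathscr{T}_\alpha$), and an intersection of closed sets is closed; thus $(X,\mathscr{T},\le)$ is a closed preordered space. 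For $d:=\inf_\alpha d_\alpha$, upper semi-continuity follows because the infimum of $\mathscr{T}_\alpha\times\mathscr{T}_\alpha$-upper semi-continuous functions is $\mathscr{T}\times\mathscr{T}$-upper semi-continuous (each $d_\alpha$ is u.s.c.\ for the finer topology, and an infimum of u.s.c.\ functions is u.s.c.). If $x\nleq y$ then $x\nleq_\alpha y$ for some $\alpha$, so $d_\alpha(x,y)=0$, whence $d(x,y)=0$. The reverse triangle inequality for triples $x\le y\le z$ in $\le$: for each $\alpha$ we have $x\le_\alpha y\le_\alpha z$, so $d_\alpha(x,y)+d_\alpha(y,z)\le d_\alpha(x,z)$; taking the infimum over $\alpha$ on the right and using $d(x,y)\le d_\alpha(x,y)$, $d(y,z)\le d_\alpha(y,z)$ gives $d(x,y)+d(y,z)\le d_\alpha(x,z)$ for all $\alpha$, hence $\le d(x,z)$. (One should be slightly careful with the $[0,\infty]$-valued arithmetic, but since all quantities are nonnegative there is no $-\infty+\infty$ ambiguity.)

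Next I would establish that $M$ is the infimum of $\{M_\alpha\}$ for $\preceq$. That $M\preceq M_\beta$ for each $\beta$ is immediate from the definitions: $X=\cap_\alpha X_\alpha\subset X_\beta$, $\mathscr{T}=\sup_\alpha\mathscr{T}_\alpha\supset\mathscr{T}_\beta$, $\le\subset\le_\beta$, and $d\le d_\beta$ on $X\times X$. For the lower bound property, suppose $N=(Y,\mathscr{S},\preceq_N,e)$ is a spacetime with $N\preceq M_\alpha$ for all $\alpha$; then $Y\subset X_\alpha$ for all $\alpha$, so $Y\subset\cap_\alpha X_\alpha=X$; $\mathscr{S}\supset\mathscr{T}_\alpha$ for all $\alpha$, so $\mathscr{S}\supset\sup_\alpha\mathscr{T}_\alpha=\mathscr{T}$; $\preceq_N\subset\le_\alpha$ for all $\alpha$, so $\preceq_N\subset\cap_\alpha\le_\alpha=\le$; and $e(x,y)\le d_\alpha(x,y)$ for all $\alpha$ and all $x,y\in Y$, so $e(x,y)\le\inf_\alpha d_\alpha(x,y)=d(x,y)$. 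Hence $N\preceq M$, proving $M$ is the largest lower bound.

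Finally the hierarchy propagation. If some $M_{\alpha_0}$ is weakly stably causal, then $\le_{\alpha_0}$ is an order, and since $\le\subset\le_{\alpha_0}$ the relation $\le$ is also antisymmetric (a subrelation of an order is an order); moreover for every $x$, $d(x,x)\le d_{\alpha_0}(x,x)=0$, and since $d\ge 0$ we get $d(x,x)=0$, so $M$ is weakly stably causal. If additionally $d_{\alpha_0}$ is finite then $d\le d_{\alpha_0}<\infty$, so $M$ is weakly stable. For the strong conditions, suppose every $M_\alpha$ is stably causal (resp.\ stable); then each $(X_\alpha,\mathscr{T}_\alpha,\le_\alpha)$ is locally convex, but local convexity of the pieces does not obviously pass to $M$ because the topology $\mathscr{T}$ is strictly finer and the order $\le$ strictly smaller than each $\le_\alpha$. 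This is the step I expect to be the main obstacle. The way around it is to note that a spacetime is stably causal (resp.\ stable) precisely when it is weakly so and $(X,\mathscr{T},\le)$ is locally convex, and that since each $M_\alpha$ is stably causal it is in particular weakly stably causal, so by the previous paragraph $M$ is weakly stably causal (resp.\ weakly stable); it then remains to prove local convexity of $(X,\mathscr{T},\le)$. For this I would argue pointwise: given $x\in X$ and a $\mathscr{T}$-neighborhood $O$ of $x$, shrink $O$ to a basic open set $O_{\alpha_1}\cap\cdots\cap O_{\alpha_n}$ with $O_{\alpha_i}\in\mathscr{T}_{\alpha_i}$ containing $x$; using local convexity of each $(X_{\alpha_i},\mathscr{T}_{\alpha_i},\le_{\alpha_i})$ choose a $\le_{\alpha_i}$-convex $\mathscr{T}_{\alpha_i}$-neighborhood $V_i\subset O_{\alpha_i}$ of $x$; then $V:=V_1\cap\cdots\cap V_n$ is a $\mathscr{T}$-neighborhood of $x$ inside $O$, and it is $\le$-convex because if $p,r\in V$ and $p\le q\le r$ then for each $i$ we have $p\le_{\alpha_i}q\le_{\alpha_i}r$ (as $\le\subset\le_{\alpha_i}$) with $p,r\in V_i$ convex, forcing $q\in V_i$, hence $q\in V$. (Here one uses that $V\cap X$ suffices as the carrier set is $X$; the intersections of the $X_{\alpha_i}$ are harmless since all neighborhoods are taken within $X$.) This yields local convexity of $(X,\mathscr{T},\le)$ and completes the proof.
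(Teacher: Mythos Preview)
Your proof is correct and follows essentially the same route as the paper's own proof: you verify the spacetime axioms in the same order (closed preorder via intersection of closed graphs in the finer topology, upper semi-continuity of $d$ as an infimum of u.s.c.\ functions, the $x\nleq y\Rightarrow d(x,y)=0$ step, the reverse triangle inequality via the infimum), you establish the infimum property for $\preceq$ in the same way, and your local convexity argument---passing to a finite subbasic intersection, choosing $\le_{\alpha_i}$-convex neighborhoods, and intersecting---is exactly the paper's argument, with your justification of $\le$-convexity via $\le\subset\le_{\alpha_i}$ being slightly more explicit than the paper's.
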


Note that every topology $\mathscr{T}_\alpha$, preorder $\le_\alpha$, and Lorentzian distance $d_\alpha$, can be restricted to the smaller set $X$. In the equation in display we omitted the restriction operation.

Inspection of the proof shows that if the family $\{M_\alpha\}$ is totally ordered, then in the last statement we can replace ``If all the $M_\alpha$ are'' by ``If one the $M_\alpha$ is''.

\begin{proof}
Let us prove that it is indeed a spacetime.
The intersection of preorders is easily show to be a preorder, thus $\le$ is a preorder. Observe that $\le_\alpha$ is closed in the $\mathscr{T}_\alpha\times \mathscr{T}_\alpha$ topology, thus $\le_\alpha\!\!\vert_{X\times X}$ is closed in the $\mathscr{T}_\alpha\times \mathscr{T}_\alpha\vert_{X\times X}$ topology, hence in the finer $\mathscr{T}\times \mathscr{T}\vert_{X\times X}$ topology, thus the intersection $\cap_{\alpha\in A} \le_\alpha\!\!\vert_{X\times X}$ is closed in the $\mathscr{T}\times \mathscr{T}\vert_{X\times X}$ topology.
Let $x,y\in X$, $x\nleq y$ then for some $\alpha$, $x\nleq_\alpha y$ which implies $d_\alpha(x,y)=0$ and hence $d(x,y)=0$.
Let $x,y,z\in X$ be such that $x\le y\le z$ then $x\le_\alpha y \le_\alpha z$ for each $\alpha$ and
\[
d_\alpha(x,y)+d_\alpha(y,z)\le d_\alpha(x,z)
\]
for each $\alpha$. Thus
\[
\inf_{\alpha\in A} d_\alpha(x,y)+\inf_{\alpha\in A} d_\alpha(y,z)\le \inf_{\alpha\in A}[d_\alpha(x,y)+d_\alpha(y,z)]\le \inf_{\alpha\in A} d_\alpha(x,z)
\]
which proves that $d$ satisfies the reverse triangle inequality. Finally, each $d_\alpha\vert_{X\times X}$ is $\mathscr{T}_\alpha\times \mathscr{T}_\alpha\vert_{X\times X}$-upper semi-con\-ti\-nu\-ous and so $\mathscr{T}\times \mathscr{T}\vert_{X\times X}$-upper semi-con\-ti\-nu\-ous. The infimum of a family of upper semi-continuous functions is upper semi-continuous, thus $d$ is  $\mathscr{T}\times \mathscr{T}\vert_{X\times X}$-upper semi-continuous. Clearly, $M\preceq M_\alpha$ for each $\alpha$, thus $M$ is a lower bound for $\{M_\alpha\}$. If $N$ is a lower bound for $\{M_\alpha\}$, then $X_N\subset X_\alpha$, $\mathscr{T}_N\supset \mathscr{T}_\alpha$, $\le_N\subset \le_\alpha$, $d_N\le d_\alpha$, for every $\alpha$, which implies $X_N\subset X$, $\mathscr{T}_N\supset \mathscr{T}$, $\le_N\subset \le$, $d_N\le d$, namely $N\preceq M$. We conclude that $M$ is the largest lower bound.

The last two statements are straightforward, the only possible difficulty is the proof that if all $M_\alpha$ are locally convex then so is $M$. Indeed, let $x\in M$ and let $O\ni x$, $O\in \mathscr{T}$, then we can find some $O_{i}\in \mathscr{T}_{\alpha_i}$, for $i=1,\ldots, k$, with $k$ finite, such that $\cap_i O_{i} \subset O$. By local convexity of $X_{\alpha_i}$ we can find a $\le_{\alpha_i}$-convex neighborhood $C_i\ni x$ such that  $C_i \subset O_i$. Let $C=\cap_i C_i$, then $C\subset O$ is $\le$-convex.
\end{proof}

\begin{proposition} \label{cpfp}
Let $M_\alpha$, $\alpha\in A$, be a family of spacetimes $M_\alpha:=(X_\alpha,$ $\mathscr{T}_\alpha,\le_\alpha, d_\alpha)$. Then the quadruple $M:=(X,\mathscr{T},\le, d)$ in which $X=\Pi_{\alpha \in A} X_\alpha$ is the Cartesian product, $\mathscr{T}$ is the product topology, $\le$ is the product preorder, and
\[
 d:=\inf_{\alpha\in A} d_\alpha \circ (\pi_\alpha\times \pi_\alpha),
\]
is  a spacetime which we call the {\em product  spacetime}. If all  $M_\alpha$ are weakly stably causal (resp.\ weakly stable) then so is $M$.
If all  $M_\alpha$ are stably causal (resp.\ stable, $k$-preserving) then so is $M$.
\end{proposition}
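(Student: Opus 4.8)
The plan is to verify that $M=(X,\mathscr{T},\le,d)$ satisfies all four axioms of a spacetime, then handle the causality/stability/$k$-preservation claims in turn. For the basic spacetime structure: the product preorder $\le$ on a Cartesian product is always a preorder, and its graph is closed in the product topology $\mathscr{T}\times\mathscr{T}$ because it equals the intersection over $\alpha$ of the preimages $(\pi_\alpha\times\pi_\alpha)^{-1}(\le_\alpha)$, each of which is closed since $\le_\alpha$ is closed and $\pi_\alpha$ is continuous. For the condition ``$x\nleq y\Rightarrow d(x,y)=0$'': if $x\nleq y$ then some coordinate satisfies $\pi_\alpha(x)\nleq_\alpha\pi_\alpha(y)$, whence $d_\alpha(\pi_\alpha(x),\pi_\alpha(y))=0$, so the infimum defining $d(x,y)$ is $0$. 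Upper semi-continuity of $d$ follows because each $d_\alpha\circ(\pi_\alpha\times\pi_\alpha)$ is u.s.c.\ (composition of a u.s.c.\ function with a continuous map) and an infimum of u.s.c.\ functions is u.s.c.\ (intersection of closed hypographs). The reverse triangle inequality is the same three-line computation already used in the previous proposition: for $x\le y\le z$ one has $\pi_\alpha(x)\le_\alpha\pi_\alpha(y)\le_\alpha\pi_\alpha(z)$ for every $\alpha$, so $d_\alpha+d_\alpha\le d_\alpha$ coordinatewise, and then $\inf_\alpha a_\alpha+\inf_\alpha b_\alpha\le\inf_\alpha(a_\alpha+b_\alpha)\le\inf_\alpha c_\alpha$.

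For weak stable causality: if all $\le_\alpha$ are orders, then the product preorder is antisymmetric (if $x\le y$ and $y\le x$ then each coordinate pair is equal, so $x=y$), and $d(x,x)=\inf_\alpha d_\alpha(\pi_\alpha(x),\pi_\alpha(x))=\inf_\alpha 0=0$. Weak stability then requires $d$ finite, which is immediate from $d\le d_\alpha\circ(\pi_\alpha\times\pi_\alpha)<+\infty$ for any fixed $\alpha$ (here I must be slightly careful: the claim ``if all $M_\alpha$ are weakly stable then so is $M$'' only needs one $\alpha$ to have finite $d_\alpha$, which holds when all do). For local convexity, hence stable causality and stability: given $x\in X$ and an open $O\ni x$, shrink to a basic product neighborhood $\bigcap_{i=1}^k\pi_{\alpha_i}^{-1}(O_{\alpha_i})$ with finitely many nontrivial factors; by local convexity of each $X_{\alpha_i}$ choose $\le_{\alpha_i}$-convex $C_i\ni\pi_{\alpha_i}(x)$ with $C_i\subset O_{\alpha_i}$, and set $C=\bigcap_{i=1}^k\pi_{\alpha_i}^{-1}(C_i)$; this is a neighborhood of $x$ contained in $O$, and it is $\le$-convex since if $p\le q\le r$ with $p,r\in C$ then $\pi_{\alpha_i}(p)\le_{\alpha_i}\pi_{\alpha_i}(q)\le_{\alpha_i}\pi_{\alpha_i}(r)$ forces $\pi_{\alpha_i}(q)\in C_i$ for each $i$.

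The main obstacle — and the one point where the statement is genuinely restrictive rather than routine — is $k$-preservation of an \emph{arbitrary} (possibly infinite) product. I would first treat a finite product: if $K\subset X=\prod_{i=1}^n X_i$ is compact, then $\pi_i(K)$ is compact, $J(K)\subset\prod_i J_i(\pi_i(K))$, each factor is compact by $k$-preservation of $M_i$, and $J(K)$ is closed (since $\le$ is closed), hence compact; finiteness of $d$ has already been shown, so the equivalent formulation of Prop.\ \ref{strp} gives global hyperbolicity, i.e.\ $k$-preservation plus stability. For an infinite product the same argument shows $J(K)$ sits inside a product of compact sets and is closed, hence compact, \emph{provided} an infinite product of compact spaces is compact — which is exactly Tychonoff's theorem, so no extra hypothesis is needed; the only subtlety to flag is that $\sup_{C\times C}d$ type bounds are not needed here since we work directly with $J$ rather than $J^\downarrow$. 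Thus the proof reduces, in the $k$-preserving case, to: (i) all axioms as above, (ii) $\pi_\alpha(J(K))\subset J_\alpha(\pi_\alpha(K))$ coordinatewise, (iii) Tychonoff, (iv) closedness of $\le$. I would remark that stability of $M$ in the $k$-preserving case is automatic from the local-convexity argument already given together with finiteness of $d$.
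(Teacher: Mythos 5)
Your proposal is correct and follows essentially the same route as the paper: the spacetime axioms, antisymmetry, finiteness of $d$, and local convexity are checked coordinatewise exactly as in the paper's proof, and your $k$-preservation argument---projecting $J(K)$ into $\prod_\alpha J_\alpha(\pi_\alpha(K))$, invoking Tychonoff, and using that the hulls of a compact set are closed in a closed preordered space---is precisely the paper's argument (the paper works with $c(K)=i(K)\cap d(K)$, which is your $J(K)$). The brief detour through Prop.~\ref{strp} for finite products is dispensable (and its hypotheses, weak stable causality and finite $d$, are not part of the bare $k$-preservation claim), but your final reduction (ii)--(iv) is the direct argument the paper uses, so nothing is missing.
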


\begin{proof}
The composition $f\circ g$ with $g$ continuous and $f$ upper semi-continuous functions is upper semi-continuous. Thus, as all projections $\pi_\alpha: X \to X_\alpha$ are continuous in the product topology, $d_\alpha \circ (\pi_\alpha\times \pi_\alpha)$ is upper semi-continuous. The infimum of an arbitrary family of upper semi-continuous functions is upper semi-continuous, thus $d$ is upper semi-continuous. If $x\nleq y$, then for some $\alpha$, $x_\alpha \nleq y_\alpha$, which implies $d_\alpha(x_\alpha, y_\alpha)=0$, and hence  $d(x,y)=0$. For every triple,  $x\le y\le z$, we have for each $\alpha$,  $x_\alpha\le y_\alpha\le z_\alpha$, thus
\[
d_\alpha(x_\alpha,y_\alpha)+d_\alpha(y_\alpha,z_\alpha)\le d_\alpha(x_\alpha,z_\alpha).
\]
hence
\begin{align*}
d(x,y)+d(y,z)&=\inf_{\alpha}d_\alpha(x_\alpha,y_\alpha)+\inf_\alpha d_\alpha(y_\alpha,z_\alpha) \\
&\le  \inf_{\alpha} [d_\alpha(x_\alpha,y_\alpha)+ d_\alpha(y_\alpha,z_\alpha)]\le \inf_\alpha d_\alpha(x_\alpha,z_\alpha)=d(x,z).
\end{align*}
Finally, $\le=\cap_\alpha (\pi_\alpha\times \pi_\alpha)^{-1}(\le_\alpha)$ which being the intersection of closed sets is closed.

The last two statements are straightforward, the only possible difficulty is the proof that if all $M_\alpha$ are locally convex (resp.\ $k$-preserving) then so is $M$. Let us start with local convexity. Let $x\in M$ and let $O\ni x$, $O\in \mathscr{T}$, then we can find some $O_{i}\in \mathscr{T}_{\alpha_i}$, for $i=1,\ldots, k$, with $k$ finite, such that $\cap_i \pi_{X_{\alpha_i}}^{-1}( O_{i})\subset O$. By local convexity of $X_{\alpha_i}$ we can find a $\le_{\alpha_i}$-convex neighborhood $C_i\ni \pi_{\alpha_i}(x)$ such that  $C_i \subset O_i$. Let $C=\cap_i \pi_{X_{\alpha_i}}^{-1}(C_{i})$, then $C\subset O$ is $\le$-convex and it is also a neighborhood of $x$.

Let us consider the $k$-preserving property. Let $K$ be a compact subset of $X$, then $K_\alpha=\pi_{X_\alpha}(K)$ is compact. The set $c(K)=i(K)\cap d(K)$  is closed because $M$ is a closed ordered space. Let $y\in c(K)$ then there are $x,z\in K$, $x\le y \le z$, which implies $x_\alpha \le_\alpha y_\alpha \le_\alpha z_\alpha$, but $x_\alpha, z_\alpha \in K_\alpha$, thus $y_\alpha \in c(K_\alpha)$ which proves that $y\in \Pi_\alpha c(K_\alpha)$ and so $c(K)\subset \Pi_\alpha c(K_\alpha)$. Since the product is compact, we conclude that $c(K)$ is compact.
\end{proof}

\begin{example}
The simplest example of spacetime is the real line $\mathbb{R}$, with $\mathscr{T}$ and $\le$ the usual topology and order, and $d(x,y):=\max\{0, y-x\}$. It is locally compact, $\sigma$-compact, stable and $k$-preserving, hence globally hyperbolic.
We shall be particularly interested in the spacetime $\mathbb{R}^A$ obtained by products of the real line which, by the previous results, is stable and $k$-preserving. Its Lorentzian distance is $d(x,y)= \max\{0, \inf_\alpha[ y_\alpha-x_\alpha]\}$.
\end{example}


The following result is well known. We include the proof for completeness. Actually, we shall only make use of if for finite sequences.

\begin{lemma} \label{coop}
For $p\in (0,1]$ let for any non-negative real sequence $x=\{x_k, k\in \mathbb{N}\}$, $\Vert x\Vert_p:=\{\sum_k  x_k^p\}^{1/p}$, then for any two non-negative real sequences $x=\{x_k\}, y=\{y_k\}$, $\Vert x+y\Vert_p\ge \Vert x\Vert_p+\Vert y\Vert_p$.
\end{lemma}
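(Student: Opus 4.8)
The plan is to reduce the claimed reverse Minkowski inequality $\Vert x+y\Vert_p \ge \Vert x\Vert_p + \Vert y\Vert_p$ for $p\in(0,1]$ to the scalar concavity inequality $(a+b)^p \ge a^p + b^p$ for non-negative reals $a,b$ (valid precisely because $t\mapsto t^p$ is concave with $0^p=0$, i.e.\ subadditive). First I would dispose of trivial cases: if $p=1$ the statement is just the (equality) triangle inequality for $\ell^1$, and if either $\Vert x\Vert_p$ or $\Vert y\Vert_p$ is infinite the inequality is immediate; so assume $0<p<1$ and both norms finite, and also discard the case where one of the sequences is identically zero.

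The main step is the standard duality-type argument adapted to the reverse regime. Writing $s_k := x_k + y_k$, I would split
\[
\sum_k s_k^p = \sum_k x_k s_k^{p-1} + \sum_k y_k s_k^{p-1},
\]
valid once we handle indices with $s_k=0$ (there $x_k=y_k=0$ and the term is $0$, with the convention $0\cdot 0^{p-1}=0$). Now apply the reverse Hölder inequality with conjugate exponents: since $0<p<1$, the pair $(p, p/(p-1))$ is a "conjugate" pair with the second exponent negative, and reverse Hölder gives
\[
\sum_k x_k s_k^{p-1} \ge \Vert x\Vert_p \Big(\sum_k s_k^{(p-1)\cdot \frac{p}{p-1}}\Big)^{\frac{p-1}{p}} = \Vert x\Vert_p \Big(\sum_k s_k^{p}\Big)^{\frac{p-1}{p}},
\]
and similarly for $y$. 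Summing these two and dividing both sides by $\big(\sum_k s_k^p\big)^{(p-1)/p} = \Vert x+y\Vert_p^{\,p-1}$ yields $\Vert x+y\Vert_p^{\,p} \cdot \Vert x+y\Vert_p^{1-p} \ge \Vert x\Vert_p + \Vert y\Vert_p$, i.e.\ the claim. Alternatively, to keep the paper self-contained and avoid invoking reverse Hölder, I would instead argue directly: normalize so that $\Vert x\Vert_p = \Vert y\Vert_p = 1$ is not available in the reverse setting, so instead use the elementary route — prove the finite-sequence case by induction on the length using only $(a+b)^p\ge a^p+b^p$ together with homogeneity, or prove the two-term inequality $\big((a_1+b_1)^p+(a_2+b_2)^p\big)^{1/p} \ge (a_1^p+a_2^p)^{1/p}+(b_1^p+b_2^p)^{1/p}$ directly and iterate.

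The cleanest self-contained path, and the one I would actually write, is: it suffices (by monotone convergence / taking limits of partial sums, or simply because the paper only needs finite sequences) to treat finitely many terms, and then to prove it for two sequences of length $n$ by a scaling reduction. Set $u=\Vert x\Vert_p$, $v=\Vert y\Vert_p$; if $u=0$ or $v=0$ we are done, so assume $u,v>0$ and consider $\tilde x = x/u$, $\tilde y = y/v$, which are unit vectors. With $\lambda = u/(u+v)\in(0,1)$ we have $x+y = (u+v)\big(\lambda \tilde x + (1-\lambda)\tilde y\big)$, so by homogeneity the claim becomes $\Vert \lambda \tilde x + (1-\lambda)\tilde y\Vert_p \ge 1$. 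Then, using concavity of $t\mapsto t^p$ pointwise, $(\lambda \tilde x_k + (1-\lambda)\tilde y_k)^p \ge \lambda \tilde x_k^p + (1-\lambda)\tilde y_k^p$ — wait, that inequality goes the wrong way for $p<1$; concavity gives $\le$, not $\ge$. This sign trap is exactly the main obstacle, and it is why the naive convexity argument fails and one genuinely needs the reverse Hölder / duality structure above. So in the final writeup I would commit to the reverse Hölder approach (stating and quickly proving reverse Hölder from the ordinary Young/Hölder inequality applied to suitable exponents), since that is the only elementary route that correctly exploits $p<1$; the role of concavity enters only through the reverse Hölder step, not through a direct pointwise bound. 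I expect verifying the edge conventions ($0^{p-1}=+\infty$, the product $0\cdot\infty$, terms with $s_k=0$) to be the only fiddly part beyond that.
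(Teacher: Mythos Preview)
Your proposal contains a genuine mistake, and it is precisely the one you flagged as a ``sign trap'': you got the direction of Jensen's inequality backwards. For a \emph{concave} function $f$ one has $f(\lambda a + (1-\lambda)b) \ge \lambda f(a) + (1-\lambda) f(b)$, so for $p\in(0,1]$ the pointwise inequality
\[
\big(\lambda \tilde x_k + (1-\lambda)\tilde y_k\big)^p \ \ge\ \lambda \tilde x_k^{\,p} + (1-\lambda)\tilde y_k^{\,p}
\]
is \emph{correct}, not reversed. Summing over $k$ gives $\Vert \lambda \tilde x + (1-\lambda)\tilde y\Vert_p^{\,p} \ge \lambda\Vert \tilde x\Vert_p^{\,p} + (1-\lambda)\Vert \tilde y\Vert_p^{\,p} = 1$, which is exactly the reduced claim. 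In other words, the homogeneity/normalization argument that you set up and then abandoned is valid and is, in fact, the paper's own proof: normalize so that $\Vert x\Vert_p = 1-\lambda$, $\Vert y\Vert_p = \lambda$, pass to unit vectors $X = x/(1-\lambda)$, $Y = y/\lambda$, and apply concavity of $t\mapsto t^p$ coordinatewise.

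Your fallback reverse H\"older route would also work, but it is considerably heavier (you would need to state and prove reverse H\"older and be careful with the $0\cdot\infty$ conventions you mention), and it is unnecessary. The moral: the ``naive convexity argument'' does not fail here --- concavity is exactly what makes it succeed.
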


It is clear that the result holds also if the elements of the sequences are allowed to assume the value $+\infty$ (if this happens we have actually an equality $+\infty=+\infty$).

\begin{proof}
If $\Vert x\Vert_p=0$ or $\Vert y\Vert_p=0$ then the result is trivial as at least one among $x$ and $y$ vanishes. The claim is also clear if $\Vert x\Vert_p=+\infty$ or $\Vert y\Vert_p=+\infty$ as $\Vert x+y\Vert_p$ is no smaller than  them. Let us assume, without loss of generality, that $\Vert x\Vert_p+\Vert y\Vert_p$ is finite. Since we have positive homogeneity $\Vert c x \Vert_p=c \Vert x \Vert_p$, for $c>0$, we can limit ourselves to the case $\Vert x \Vert_p=1-\lambda$, $\Vert y \Vert_p=\lambda$, for $\lambda\in (0,1)$. So defining $X=x/(1-\lambda)$ and $Y=y/\lambda$ we need only to prove the concavity property  $\Vert (1-\lambda)X+\lambda Y\Vert_p\ge 1$ for every $X$ and $Y$ such that  $\Vert X\Vert_p=1=\Vert Y\Vert_p$, and $\lambda\in (0,1)$. Now, for every coordinate $i$, due to the concavity of $f(x)=x^p$ on $[0,+\infty)$,  $[(1-\lambda)X_i+\lambda Y_i]^p\ge (1-\lambda)X_i^p+\lambda Y_i^p$. Summing over $i$, $\Vert (1-\lambda) X + \lambda Y\Vert_p^p \ge (1-\lambda) \Vert X\Vert_p^p+\lambda \Vert Y\Vert_p^p=1$.
\end{proof}

\begin{proposition}
Let $M_k$, $k \in A$, be a finite family of spacetimes $M_k:=(X_k,$ $\mathscr{T}_k,\le_k, d_k)$. Then the quadruple $M:=(X,\mathscr{T},\le, d)$ in which $X=\Pi_{k\in A} X_k$ is the Cartesian product, $\mathscr{T}$ is the product topology, $\le$ is the product preorder, and
\[
 d\vert_{\le}:= \Big\{\sum_k [d_k \circ (\pi_k\times \pi_k)]^p \Big\}^{\! 1/p}, \quad  d\vert_{\nleq}:=0
\]
for $p\in (0,1]$ is  a spacetime which we call the {\em $p$-product  spacetime}. If all  $M_k$ are weakly stably causal (resp.\ weakly stable) then so is $M$.
If all  $M_k$ are stably causal (resp.\ stable, $k$-preserving) then so is $M$.
\end{proposition}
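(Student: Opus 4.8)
The plan is to reduce everything to the ordinary product spacetime of Proposition~\ref{cpfp} together with the reverse Minkowski inequality of Lemma~\ref{coop}. The first observation I would make is that the causal structure $(X,\mathscr{T},\le)$ of the $p$-product spacetime is \emph{literally identical} to that of the ordinary product spacetime $\prod_k M_k$ of Proposition~\ref{cpfp}: the underlying set, the product topology and the product preorder $\le=\cap_k(\pi_k\times\pi_k)^{-1}(\le_k)$ are the same; only $d$ has been changed. Consequently $\le$ is again a closed preorder (intersection of closed sets), it is an order whenever each $\le_k$ is (componentwise antisymmetry), and the portions of the proof of Proposition~\ref{cpfp} establishing that local convexity and $k$-preservation pass to the product apply here \emph{verbatim}, since they only involve $\mathscr{T}$ and $\le$ and never the distance. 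So all the assertions about stable causality, stability, $k$-preservation (hence global hyperbolicity) will follow once the new $d$ is shown to make $M$ into a spacetime with the required extra properties.

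Thus the real content is about $d$. The relation $x\nleq y\Rightarrow d(x,y)=0$ holds by definition, so the crux is upper semi-continuity. I would introduce the auxiliary function $g\colon X\times X\to[0,\infty]$, $g(x,y):=\{\sum_{k\in A}[d_k(\pi_k x,\pi_k y)]^p\}^{1/p}$, defined on the \emph{whole} product. Each $d_k\circ(\pi_k\times\pi_k)$ is upper semi-continuous (upper semi-continuous composed with continuous), the maps $t\mapsto t^p$ and $t\mapsto t^{1/p}$ are continuous and non-decreasing on $[0,\infty]$, and a \emph{finite} sum of upper semi-continuous functions is upper semi-continuous --- this is the one place where the finiteness of $A$ is essential --- so $g$ is upper semi-continuous and non-negative. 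Since $d=g$ on $\le$ and $d=0\le g$ off $\le$, one checks that the hypograph of $d$ equals $\mathrm{hyp}(g)\cap\big[(\le\times\mathbb{R})\cup((X\times X)\times(-\infty,0])\big]$; this is closed because $\mathrm{hyp}(g)$ is closed ($g$ u.s.c.), $\le$ is closed, and $(-\infty,0]$ is closed in $\mathbb{R}$. Hence $d$ is upper semi-continuous on $X\times X$.

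It then remains to verify the reverse triangle inequality and the stability clauses. If $x\le y\le z$ then $x_k\le_k y_k\le_k z_k$ for every $k$, so $d_k(x_k,y_k)+d_k(y_k,z_k)\le d_k(x_k,z_k)$; applying Lemma~\ref{coop} to the two finite non-negative sequences $(d_k(x_k,y_k))_k$ and $(d_k(y_k,z_k))_k$, then using the evident componentwise monotonicity of $\Vert\cdot\Vert_p$, yields $d(x,y)+d(y,z)=\Vert(d_k(x_k,y_k))_k\Vert_p+\Vert(d_k(y_k,z_k))_k\Vert_p\le\Vert(d_k(x_k,y_k)+d_k(y_k,z_k))_k\Vert_p\le\Vert(d_k(x_k,z_k))_k\Vert_p=d(x,z)$, so $M$ is a spacetime. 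If each $M_k$ is weakly stably causal, then $\le$ is an order and $d(x,x)=\{\sum_k d_k(x_k,x_k)^p\}^{1/p}=0$, so $M$ is weakly stably causal; if moreover each $d_k$ is finite, the finite sum is finite, so $M$ is weakly stable; and the stably causal, stable and $k$-preserving cases follow by combining these facts with the local-convexity and $k$-preservation inherited from Proposition~\ref{cpfp}, as explained in the first paragraph. The only genuinely delicate step --- the one I expect to require care --- is the upper semi-continuity of $d$ \emph{at the topological boundary of} $\le$: unlike the infimum used for the ordinary product, the $\ell^p$-combination does not vanish off $\le$, so the piecewise definition is unavoidable, and it is precisely the closedness of $\le$ that makes the hypograph argument go through.
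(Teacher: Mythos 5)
Your proposal is correct and follows essentially the same route as the paper: the causal-structure claims (closedness of $\le$, order, local convexity, $k$-preservation) are reduced to the proof of Prop.~\ref{cpfp}, and the reverse triangle inequality is obtained from Lemma~\ref{coop} plus componentwise monotonicity, exactly as in the paper. The only difference is the packaging of upper semi-continuity: the paper argues pointwise by cases (off $\le$ using closedness of $\le$ to get a neighborhood where $d=0$; on $\le$ via an $\epsilon$-product-neighborhood estimate using finiteness of the sum), while you use the auxiliary u.s.c.\ function $g$ and the hypograph identity $\mathrm{hyp}(d)=\mathrm{hyp}(g)\cap\big[(\le\times\mathbb{R})\cup((X\times X)\times(-\infty,0])\big]$, which is valid precisely because $g\ge 0$ and $\le$ is closed.
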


Note that, for given factors, these $p$-product spacetimes and the product spacetime of  Prop.\ \ref{cpfp} share the same causal structure. The request of the upper semi-continuity of $d$ is really the only condition that forces us to  consider just  finite, rather than countable, $p$-products.

\begin{proof}
The projections $\pi_k: X \to X_k$  are continuous, thus $\le=\cap_k (\pi_k\circ \pi_k)^{-1}(\le_k)$, being the intersection of closed sets, is closed.

Again by the continuity of the projections,  $d_k \circ (\pi_k\times \pi_k)$ is upper semi-continuous. Let $x\nleq y$ then as $\le$ is closed there is a neighborhood $O$ of $(x,y)$ such that for every $(x',y')\in O$, $x'\nleq y'$, and thus $d(x',y')=0$. This proves that $d$ is upper semi-continuous outside $\le$. Suppose $x\le y$ then $x_i\le y_i$ for every $i$. If for some  $i$, $d_i(x_i,y_i)=+\infty$ then $d(x,y)=+\infty$ and the upper semi-continuity is clear. So suppose that for every $i$, $d_i(x_i,y_i)<+\infty$.  Let $\epsilon>0$, we can find $O_i\ni (x_i,y_i)$ such that $d_i\vert_{O_i} <d(x_i,y_i)+\epsilon$, but then for $O=\Pi_i O_i$
\[
d\vert_O\le \Big\{\sum_k [d_k(x_k,y_k) +\epsilon]^p \Big\}^{\! 1/p}
\]
where the right-hand side converges to $d(x,y)$ for $\epsilon \to 0$ (this step uses the finiteness of the sum), thus $d$ is upper semi-continuous at $(x,y)$.

For every triple,  $x\le y\le z$, we have for each $k$,  $x_k\le y_k\le z_k$, thus
\[
d_k(x_k,y_k)+d_k(y_k,z_k)\le d_k(x_k,z_k).
\]
Thus $\sum_k d_k(x_k,z_k)^p \ge  \sum_k [d_k(x_k,y_k)+d_k(y_k,z_k)]^p$ and by  Lemma \ref{coop}
\[
d(x,z)=\Big(\sum_k d_k(x_k,z_k)^p\Big)^{\! 1/p} \ge d(x,y)+d(y,z).
\]

If every $d_i$ vanishes on the diagonal, the same holds for $d$. If $d_i$ are all finite, the same holds for $d$.
The last two statements depend then only on the causal structures and so the proof is the same as that of Prop.\ \ref{cpfp}.
\end{proof}

\section{Spacetimes from functional spaces}


The next result proves that stable spacetimes are  natural objects that can be constructed  from just a family of real-valued functions over a set. It is inspired by \cite[Thm.\ 4.6]{minguzzi17} which was restricted to manifolds. Note that here the $F$-steep functions are replaced by the rushing functions \cite[Def.\ 1.26]{minguzzi18b}. On a spacetime $(X,\mathscr{T},\le, d)$ a function $X\to \mathbb{R}$ is {\em rushing} if
\[
\forall x,y \in X, \qquad x\le y  \ \Rightarrow \ f(x)+d(x,y) \le f(y)
\]
 equivalently, for every $x,y\in X$, $f(x)+\tau(x,y) \le f(y)$. Intuitively, a rushing function,  interpreted as a time, runs faster than proper time.
For the relationship between steep and rushing functions on a manifold, see \cite[Thm.\ 1.28]{minguzzi18b}.

\begin{theorem} \label{ckkr}
Let $\mathcal{F}$ be a family of real valued functions on $X$ that distinguishes points: $f(x)=f(y)$ for every $f\in \mathcal{F}$ implies $x=y$. Then
\begin{itemize}
\item[(i)] The initial topology $\mathscr{T}$ generated by $\mathcal{F}$ is Tychonoff (hence Hausdorff). If $\mathcal{F}$ is countable then the initial topology is second countable, hence metrizable.
\item[(ii)] The  relation defined by
\begin{equation} \label{ord}
\le:=\{(x,y): f(x)\le f(y), \ \forall f\in \mathcal{F}\}
\end{equation}
 is an order which is closed with respect to the product of the initial topology of point (i).
\item[(iii)]    The function $d: X^2\to [0,\infty)$ defined by
\begin{equation} \label{dis}
d(x,y):= \max\{0, \inf_{\mathcal{F}} [f(y)-f(x)]\}
\end{equation}
is upper semi-continuous (with respect to the product of the initial topology of point (i)); satisfies $x\nleq y \Rightarrow d(x,y)=0$; for every $x\in X$, $d(x,x)=0$; and it  satisfies the reverse triangle inequality with respect to the relation $\le$ of point (ii).
\item[(iv)]
 Suppose that for every $\lambda \ge 1$, $\lambda \mathcal{F}\subset \mathcal{F}$, then we have the direct expressions
\begin{equation} \label{dis2}
\tau(x,y)= \inf_{\mathcal{F}} [f(y)-f(x)],
\end{equation}
and
\begin{equation}
(x,a)\le^\downarrow (y,b) \ \ \textrm{ iff } \ \ \forall f\in \mathcal{F}, \ \ f(x)-a\le f(y)-b.
\end{equation}
\end{itemize}
As a consequence, $(X,\mathscr{T},\le, d)$ is a stable spacetime and every element $f\in \mathcal{F}$ is $\mathscr{T}$-continuous, $\le$-isotone and $d$-rushing. Moreover, $(X,\mathscr{T},\le)$ is a completely regularly ordered space (causal structure) hence convex.
\end{theorem}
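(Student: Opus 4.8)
The plan is to verify the four itemized claims essentially in order, since each subsequent claim leans on the structure established by the preceding ones, and then to harvest the concluding sentence about stability, rushing, and complete order regularity as an almost immediate corollary. For (i), I would recall that the initial topology of a family of real functions is always completely regular, and that point-distinguishing plus the complete regularity of $\mathbb{R}$ forces $T_1$, hence Tychonoff; countability of $\mathcal{F}$ gives a countable subbasis, hence second countability, and metrizability then follows from Urysohn's metrization theorem. For (ii), reflexivity and transitivity of $\le$ defined by \eqref{ord} are immediate from those of $\le$ on $\mathbb{R}$; antisymmetry is exactly the point-distinguishing hypothesis; closure follows because $\le = \bigcap_{f\in\mathcal{F}}(f\times f)^{-1}(\le_{\mathbb{R}})$ is an intersection of closed sets, each $f\times f$ being continuous for the product of the initial topology and $\le_{\mathbb{R}}\subset\mathbb{R}^2$ being closed.

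For (iii), upper semi-continuity of $d$ from \eqref{dis} follows because each $(x,y)\mapsto f(y)-f(x)$ is continuous, the pointwise infimum of continuous (a fortiori upper semi-continuous) functions is upper semi-continuous, and $\max\{0,\cdot\}$ preserves upper semi-continuity; I should note $d$ is valued in $[0,\infty)$ only if $\mathcal{F}\ne\emptyset$, which is implicit, or handle the degenerate case separately. The implication $x\nleq y\Rightarrow d(x,y)=0$ is immediate: if $x\nleq y$ then some $f$ has $f(x)>f(y)$, so the infimum is negative and the $\max$ with $0$ kills it. The identity $d(x,x)=0$ is clear. For the reverse triangle inequality over $x\le y\le z$: on such a triple $d(x,y)=\inf_{\mathcal F}[f(y)-f(x)]$ (the infimum is already $\ge 0$), similarly for the other two, and then $\inf_{\mathcal F}[f(y)-f(x)] + \inf_{\mathcal F}[f(z)-f(y)] \le \inf_{\mathcal F}\big([f(y)-f(x)]+[f(z)-f(y)]\big)=\inf_{\mathcal F}[f(z)-f(x)]$, using superadditivity of $\inf$. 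For (iv), under $\lambda\mathcal{F}\subset\mathcal{F}$ for all $\lambda\ge 1$: if $x\le y$ I want $\tau(x,y)=\inf_{\mathcal F}[f(y)-f(x)]$, i.e.\ this infimum is $\ge 0$; but for any $f\in\mathcal F$ and $\lambda\ge 1$, $\lambda f\in\mathcal F$ so $\lambda f(x)\le\lambda f(y)$, which gives $f(x)\le f(y)$ — wait, that already follows from $x\le y$; the scaling is needed instead to rule out a strictly negative infimum, since if $\inf_{\mathcal F}[f(y)-f(x)]<0$ were attained in the limit by some $f$ with $f(y)-f(x)=-c<0$, scaling by large $\lambda$ would violate $x\le y$. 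So on $\le$ the infimum is $\ge 0$ and equals $d(x,y)$; off $\le$ it is $<0$ and $\tau=-\infty$ is forced only if we can scale it to $-\infty$, which is exactly what $\lambda\mathcal F\subset\mathcal F$ provides: if $f(y)-f(x)=-c<0$ then $\lambda f(y)-\lambda f(x)=-\lambda c\to-\infty$. Hence the infimum over $\mathcal{F}$ of $f(y)-f(x)$ is $-\infty$ whenever $x\nleq y$, matching $\tau$. The product-trick description of $\le^\downarrow$ then follows by substituting this formula into $(x,a)\le^\downarrow(y,b)\iff b\le a+\tau(x,y)$ and rearranging.

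Finally, to assemble the concluding sentence: (i)–(iii) verify exactly the axioms of a spacetime $(X,\mathscr{T},\le,d)$ with $d$ finite and vanishing on the diagonal and $\le$ an order, so it is a weakly stable spacetime; each $f\in\mathcal{F}$ is continuous by construction of the initial topology, isotone by definition \eqref{ord} of $\le$, and rushing because $x\le y$ gives $f(y)-f(x)\ge\inf_{\mathcal F}[f(y)-f(x)]=d(x,y)$. For the remaining assertions I would argue that $\mathscr{T}$ is the initial topology of a family of continuous \emph{isotone} functions and that $x\le y$ holds iff $f(x)\le f(y)$ for all such $f$ — the nontrivial direction being that if $f(x)\le f(y)$ for every continuous isotone $f$ then in particular for every $f\in\mathcal{F}$, whence $x\le y$ by \eqref{ord}; this is precisely complete order regularity, and convexity then follows from the general fact (recalled in the introduction) that completely regularly preordered spaces are convex. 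Convexity (global, hence local convexity) together with weak stability upgrades the spacetime to \emph{stable} in the sense of the definition. The main obstacle I anticipate is the careful bookkeeping in (iv): getting the role of the scaling hypothesis $\lambda\mathcal F\subset\mathcal F$ exactly right — it is what forces $\tau$ to take the value $-\infty$ (rather than merely a negative number) off the order, and what guarantees the infimum is nonnegative on the order — and making sure the $\tau$-formula is consistent with the conventions ($-\infty+\infty=-\infty$, $\tau=-\infty$ on the empty set) already fixed in the earlier definitions.
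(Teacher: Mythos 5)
Your proposal is correct and follows essentially the same route as the paper: closure of $\le$ via $\bigcap_{f\in\mathcal F}(f\times f)^{-1}(\le_{\mathbb R})$, upper semi-continuity of $d$ from the infimum of continuous functions, superadditivity of the infimum for the reverse triangle inequality, the scaling hypothesis used only to force $\tau=-\infty$ off the order, and the concluding stability via $\mathcal F\subset\mathcal I$ giving complete order regularity hence convexity. The only small wobble is in (iv), where you briefly invoke scaling to rule out a negative infimum when $x\le y$; this is unnecessary (each $f(y)-f(x)\ge 0$ directly by the definition of $\le$), but you reach the correct conclusion, so the argument stands.
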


Without the condition on the distinction of points, the topology would not be $T_0$ and  $\le$ would not be antisymmetric (it would be a preorder). In principle one could work with indistinguishable points but then the Alexandrov quotient would need to be taken.

The initial topology induced by a family of functions $\mathcal{F}$ is also denoted $\mathscr{T}_{\mathcal{F}}$, a preorder given by formula (\ref{ord}) is also denoted $\le_{\mathcal{F}}$, and a Lorentzian distance given by formula (\ref{dis}) is also denoted $d_{\mathcal{F}}$. Thus for any family of functions ${\mathcal{F}}$  that distinguishes points, we have  a stable spacetime $(X,\mathscr{T}_{\mathcal{F}},\le_{\mathcal{F}}, d_{\mathcal{F}})$.

\begin{proof}
Let $x\ne y$ then there is $f\in \mathcal{F}$ such that $f(x)\ne f(y)$. Let $r=[f(x)+ f(y)]/2$, then the initial topology open sets   $f^{-1}((-\infty,r))$, $f^{-1}((r,+\infty))$ are disjoint and each of them contains precisely one of the two points. Thus $\mathscr{T}$ is Hausdorff. 

The elements of  $\mathcal{F}$  are continuous in the initial topology $\mathscr{T}$, thus the topology $\mathscr{T}$ is generated by a family of continuous functions which proves that $(X,\mathscr{T})$ is completely regular hence Tychonoff. If  $\mathcal{F}=\{f_k\}_{k\in \mathbb{N}}$ then a subbasis for $\mathscr{T}$ is given by the finite intersections of  sets of the form $f^{-1}_k((a, +\infty))$, $f^{-1}_k((-\infty, b))$, with $a,b\in \mathbb{Q}$. Since the subbasis is  countable so is the basis, that is, under countability of $\mathcal{F}$ the topology $\mathscr{T}$ is second-countable. Every Tychonoff space is regular, and every regular second countable space is metrizable.

We have $\le = \cap_{f\in \mathcal{F}} (f\times f)^{-1}(\le_\mathbb{R})$.
The sets on the right-hand side are $\mathscr{T}\times \mathscr{T}$-closed (as $\le_\mathbb{R}$ is closed, and $f\times f$ is $\mathscr{T}\times \mathscr{T}$-continuous) and so is their intersection.

The fact that $\le$ is an order is clear, the antisymmetry following from the fact that $\mathcal{F}$ distinguishes points.

The function $\inf_{\mathcal{F}} [f(y)-f(x)]$ being the infimum of a family of upper semi-continuous functions is upper semi-continuous. Now, the maximum of a finite family of upper semi-continuous functions (such as  $\max\{0, \inf_{\mathcal{F}} [f(y)-f(x)]\}$) is upper semi-continuous. This follows from the fact that upper semi-continuity is equivalent to the closure of the hypograph, and the max function has a hypograph that is the union of the hypographs of the functions in the finite family.

Let $x\nleq y$, then there is $f\in \mathcal{F}$ such that $f(x)>f(y)$, which from formula (\ref{dis}) implies $d(x,y)=0$. From  formula (\ref{dis}) we have for every $x\in X$, $d(x,x)=0$.

Let $x\le y\le z$. As, by construction of $\le$, the functions are isotone, $d(x,y)=\textrm{inf}_{\mathcal{F}} [f(y)-f(x)]$ and similarly for the other pairs.
We have
\begin{align*}
\textrm{inf}_{\mathcal{F}}  [f(z)-f(y)]&+  \textrm{inf}_{\mathcal{F}} [f(y)-f(x)] \le \\
&\le \textrm{inf}_{\mathcal{F}} \{ [f(z)-f(y)]+[f(y)-f(x)]= \textrm{inf}_{\mathcal{F}} [f(z)-f(x)] ,
\end{align*}
which is the reverse triangle inequality.

Finally, let us prove $(iv)$. Since the functions in  $\mathcal{F}$  are isotone, the formula for $\tau$ is clear for $x\le y$. If $x \nleq y$ we can find $f\in \mathcal{F}$ such that $f(y)-f(x)<0$, but for every $\lambda>1$, $\lambda f\in \mathcal{F}$, thus $\inf_{\mathcal{F}} [f(y)-f(x)]=-\infty=\tau(x,y)$.

Coming to the expression for $\le^\downarrow$, we know that $(x,a)\le^\downarrow (y,b)$ is equivalent to $b\le a +\tau(x,y)$. So if  $(x,a)\le^\downarrow (y,b)$ we have for every $f\in \mathcal{F}$, using the just proved formula for $\tau$, $b\le a +\tau(x,y)\le a+f(y)-f(x)$. For the other direction if for every $f\in \mathcal{F}$,  $b\le a+f(y)-f(x)$, then taking the infimum over $\mathcal{F}$ gives $b\le a +\tau(y,x)$, which is the definition of $(x,a) \le^\downarrow (y,b)$.

Let $\mathcal{I}$ be the family of continuous isotone functions for $(X,\mathscr{T},\le)$. Since $\mathcal{F}\subset \mathcal{I}$, points (i) and (ii) show that $(X,\mathscr{T},\le)$ is completely regularly ordered, thus convex, hence locally convex. This proves that $(X,\mathscr{T},\le, d)$  is stable. The fact that the functions in $\mathcal{F}$ are continuous rushing functions follows immediately recalling the definition of topology, order and Lorentzian distance.

Since $\mathcal{F}\subset \mathcal{I}$, and the function in $\mathcal{I}$ are $\mathscr{T}$-continuous, the initial topology induced by $\mathcal{I}$ coincides with $\mathscr{T}$. Similarly, since  $\mathcal{F}\subset \mathcal{I}$ and the functions in $\mathcal{I}$ are isotone, the preorder induced by $\mathcal{I}$ coincides with $\le$. Thus $(X,\mathscr{T},\le)$ is completely regularly ordered.
\end{proof}

\begin{proposition}
 Denoting with $\mathcal{R}$ the continuous rushing functions of the  stable spacetime  $(X,\mathscr{T}_{\mathcal{F}},\le_{\mathcal{F}}, d_{\mathcal{F}})$ we have $\mathscr{T}_{\mathcal{F}}=\mathscr{T}_{\mathcal{R}}$, $\le_{\mathcal{F}}=\le_{\mathcal{R}}$ and  $d_{\mathcal{F}}=d_{\mathcal{R}}$.  Moreover, denoting with $\mathcal{I}$ the continuous isotone functions, we have $\mathscr{T}_{\mathcal{F}}=\mathscr{T}_{\mathcal{I}}$, $\le_{\mathcal{F}}=\le_{\mathcal{I}}$.
\end{proposition}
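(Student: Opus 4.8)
The plan is to run everything off the chain of inclusions $\mathcal{F}\subseteq\mathcal{R}\subseteq\mathcal{I}$. The inclusion $\mathcal{F}\subseteq\mathcal{R}$ is exactly the concluding assertion of Theorem~\ref{ckkr}: each $f\in\mathcal{F}$ is $\mathscr{T}_{\mathcal{F}}$-continuous and $d_{\mathcal{F}}$-rushing, hence a continuous rushing function of the spacetime. The inclusion $\mathcal{R}\subseteq\mathcal{I}$ holds because $d_{\mathcal{F}}\ge 0$: if $f$ is rushing then $x\le_{\mathcal{F}} y$ gives $f(x)+d_{\mathcal{F}}(x,y)\le f(y)$, hence $f(x)\le f(y)$, so $f$ is isotone, and it is continuous by the very definition of $\mathcal{R}$.

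For the topology, recall that the initial topology is monotone in its generating family, so $\mathscr{T}_{\mathcal{F}}\subseteq\mathscr{T}_{\mathcal{R}}\subseteq\mathscr{T}_{\mathcal{I}}$; on the other hand every $f\in\mathcal{I}$ is by definition $\mathscr{T}_{\mathcal{F}}$-continuous, so $\mathscr{T}_{\mathcal{I}}$, being the coarsest topology making all of $\mathcal{I}$ continuous, is contained in $\mathscr{T}_{\mathcal{F}}$. Hence the three topologies coincide. The order statement is the order-dual: the relation defined by formula~(\ref{ord}) reverses inclusions of families, so $\le_{\mathcal{F}}\supseteq\le_{\mathcal{R}}\supseteq\le_{\mathcal{I}}$, while every $f\in\mathcal{I}$ is $\le_{\mathcal{F}}$-isotone, which says precisely $\le_{\mathcal{F}}\subseteq\le_{\mathcal{I}}$; so all three orders coincide. (These two points essentially re-run the last paragraph of the proof of Theorem~\ref{ckkr}, now also squeezed through $\mathcal{R}$.)

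The only genuinely new point is $d_{\mathcal{F}}=d_{\mathcal{R}}$. From $\mathcal{F}\subseteq\mathcal{R}$ and formula~(\ref{dis}), enlarging the family can only lower the infimum, so $d_{\mathcal{R}}\le d_{\mathcal{F}}$. For the reverse inequality I split on whether $x\le_{\mathcal{F}} y$. If $x \nleq_{\mathcal{F}} y$, pick $f\in\mathcal{F}\subseteq\mathcal{R}$ with $f(x)>f(y)$; then the infimum in~(\ref{dis}) for the family $\mathcal{R}$ is negative, so $d_{\mathcal{R}}(x,y)=0=d_{\mathcal{F}}(x,y)$, the last equality by Theorem~\ref{ckkr}(iii). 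If $x\le_{\mathcal{F}} y$, then every $f\in\mathcal{R}$, being $d_{\mathcal{F}}$-rushing, satisfies $f(y)-f(x)\ge d_{\mathcal{F}}(x,y)\ge 0$; taking the infimum over $f\in\mathcal{R}$ keeps it $\ge 0$, so the truncation at $0$ in~(\ref{dis}) is inactive and $d_{\mathcal{R}}(x,y)=\inf_{f\in\mathcal{R}}[f(y)-f(x)]\ge d_{\mathcal{F}}(x,y)$. Combining the two inequalities gives $d_{\mathcal{R}}=d_{\mathcal{F}}$.

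There is no real obstacle: once the inclusions $\mathcal{F}\subseteq\mathcal{R}\subseteq\mathcal{I}$ are recorded, the statement is a soft sandwich argument. The only spots requiring a moment's care are the case split in the comparison of the distances and the remark that nonnegativity of $d_{\mathcal{F}}$ allows one to discard the $\max$ with $0$ in the positive case.
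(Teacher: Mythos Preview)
Your proof is correct and follows essentially the same approach as the paper's: both rely on the inclusion $\mathcal{F}\subseteq\mathcal{R}$ from Theorem~\ref{ckkr} and the sandwich arguments for topology, order, and distance that this inclusion enables. Your version is marginally tidier in that you run the chain $\mathcal{F}\subseteq\mathcal{R}\subseteq\mathcal{I}$ in one pass to get the $\mathcal{I}$-statements simultaneously, whereas the paper handles $\mathcal{R}$ first and then points back to the end of the proof of Theorem~\ref{ckkr} for $\mathcal{I}$, but the underlying argument is the same.
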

We shall see later (Thm.\ \ref{mai}) that we do not need to assume that the stable spacetime comes from a family of functions, provided suitable topological conditions are imposed.
\begin{proof}
Since the functions in $\mathcal{R}$ are $\mathscr{T}_{\mathcal{F}}$-continuous their initial topology is coarser than $\mathscr{T}_{\mathcal{F}}$, $\mathscr{T}_{\mathcal{R}}\subset \mathscr{T}_{\mathcal{F}}$, but since any function in ${\mathcal{F}}$ is continuous and rushing,  $\mathcal{R}\supset \mathcal{F}$, which implies $\mathscr{T}_{\mathcal{R}}\supset \mathscr{T}_{\mathcal{F}}$, thus $\mathscr{T}_{\mathcal{F}}=\mathscr{T}_{\mathcal{R}}$.

Since $\mathcal{R}\supset \mathcal{F}$ we have $\le_{\mathcal{R}}\subset \le_{\mathcal{F}}$. Conversely, if $x\le_{\mathcal{F}} y$, for any $f\in \mathcal{R}$, as it is $\le_{\mathcal{F}}$-isotone, $f(x) \le f(y)$, which implies $x\le_{\mathcal{R}} y$  and hence  $\le_{\mathcal{R}}\supset \le_{\mathcal{F}}$, thus $\le_{\mathcal{F}}=\le_{\mathcal{R}}$.

We already know that the orders induced by the families $\mathcal{F}$ and $\mathcal{R}$ coincide. Let us denote it $\le$. The Lorentzian distances induced by the families coincide outside $\le$, so we can assume $x\le y$. Since $\mathcal{R}\supset \mathcal{F}$ we have $d_{\mathcal{R}}(x,y)\le  d_{\mathcal{F}}(x,y)$, but every $f\in \mathcal{R}$ is $d_{\mathcal{F}}$-rushing (and isotone) which means $d_{\mathcal{F}}(x,y)\le f(y)-f(x)\ge 0$, and taking the infimum over $\mathcal{R}$, $d_{\mathcal{F}}(x,y)\le  d_{\mathcal{R}}(x,y)$, thus $d_{\mathcal{F}}=d_{\mathcal{R}}$.

The last statement was proved at the end of the proof of Thm.\ \ref{ckkr}.
\end{proof}

\begin{proposition} \label{mqpg}
On a spacetime $(X,\mathscr{T},\le, d)$ the $\mathscr{T}$-continuous, $\le$-isotone functions form a convex cone $\mathcal{I}$. The
$\mathscr{T}$-continuous, $\le$-isotone and $d$-rushing functions form a convex set $\mathcal{R}\subset \mathcal{I}$ such that $\lambda \mathcal{R}\subset \mathcal{R}$ for every $\lambda \ge 1$. Additionally, $\mathcal{I} + \mathcal{R} \subset \mathcal{R}$. For every $f,g\in \mathcal{R}$, $\textrm{min}(f,g), \textrm{max}(f,g)\in \mathcal{R}$, and a similar statement holds for $\mathcal{R}$ replaced by $\mathcal{I}$.
\end{proposition}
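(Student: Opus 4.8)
The plan is to verify each closure property directly from the definitions: all the real content lies in how the rushing inequality $f(x)+d(x,y)\le f(y)$ (valid for $x\le y$) interacts with the elementary algebraic and lattice operations, while continuity and isotonicity are immediate in every case.

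First I would dispatch $\mathcal{I}$. If $f,g$ are continuous and isotone and $\lambda\ge 0$, then $f+g$ and $\lambda f$ are continuous, and $x\le y$ gives $f(x)+g(x)\le f(y)+g(y)$ and $\lambda f(x)\le\lambda f(y)$; hence $\mathcal{I}$ is closed under addition and non-negative scalar multiplication, i.e.\ it is a convex cone. For the lattice part, note that $x\le y$ yields $\min\{f(x),g(x)\}\le f(x)\le f(y)$ and $\min\{f(x),g(x)\}\le g(x)\le g(y)$, hence $\min\{f(x),g(x)\}\le\min\{f(y),g(y)\}$; the dual argument handles $\max$, and the min/max of continuous functions are continuous, so $\min(f,g),\max(f,g)\in\mathcal{I}$.

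Next I would treat $\mathcal{R}$. Given $f,g\in\mathcal{R}$ and $t\in[0,1]$, the function $tf+(1-t)g$ lies in $\mathcal{I}$ by the previous step; for $x\le y$, multiplying $f(x)+d(x,y)\le f(y)$ by $t$, multiplying $g(x)+d(x,y)\le g(y)$ by $1-t$, and adding gives $tf(x)+(1-t)g(x)+d(x,y)\le tf(y)+(1-t)g(y)$, so $\mathcal{R}$ is convex. For $\lambda\ge 1$ and $f\in\mathcal{R}$, multiplying the rushing inequality by $\lambda$ gives $\lambda f(x)+\lambda d(x,y)\le\lambda f(y)$, and since $d(x,y)\ge 0$ and $\lambda\ge 1$ we may replace $\lambda d(x,y)$ on the left by the smaller $d(x,y)$, yielding $\lambda\mathcal{R}\subset\mathcal{R}$. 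This is exactly the point that forces the restriction $\lambda\ge 1$: for $\lambda<1$ one would require $d(x,y)\le\lambda[f(y)-f(x)]$, which does not follow from $d(x,y)\le f(y)-f(x)$. For $\mathcal{I}+\mathcal{R}\subset\mathcal{R}$, given $f\in\mathcal{I}$ and $g\in\mathcal{R}$, simply add $f(x)\le f(y)$ to $g(x)+d(x,y)\le g(y)$.

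For the lattice operations on $\mathcal{R}$, continuity and isotonicity are already covered, so only the rushing inequality remains. For $\min(f,g)$ and $x\le y$, the quantity $\min\{f(x),g(x)\}+d(x,y)$ is at most $f(x)+d(x,y)\le f(y)$ and at most $g(x)+d(x,y)\le g(y)$, hence at most $\min\{f(y),g(y)\}$. For $\max(f,g)$ I would split on the larger value at $x$: if, say, $f(x)\ge g(x)$, then $\max\{f(x),g(x)\}+d(x,y)=f(x)+d(x,y)\le f(y)\le\max\{f(y),g(y)\}$. There is no genuine obstacle in this proposition; the only points that reward attention are the use of $d\ge 0$ in the $\lambda\ge 1$ scaling (which is why $\mathcal{R}$ fails to be a full cone) and the mild asymmetry between the arguments for $\min$ and for $\max$.
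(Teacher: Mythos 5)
Your proposal is correct and follows essentially the same elementary verification as the paper: term-by-term checking of the isotone/rushing inequalities under convex combinations, scaling by $\lambda\ge 1$ (using $d\ge 0$), addition, and $\min/\max$. The only cosmetic difference is that the paper handles $\min$ and $\max$ for $\mathcal{I}$ and $\mathcal{R}$ simultaneously (via a parameter $\epsilon\in\{0,1\}$) and avoids your case split for $\max$ by bounding $\max(f,g)(y)$ from below by each of $f(y)$ and $g(y)$, but the substance is identical.
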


The set $\mathcal{I}$ is the analog of the causal cone on the tangent space of a Lorentzian manifold. The set $\mathcal{R}$ is the analog of the unit Lorentzian ball on the same tangent space. The nice fact is that the geometry of the tangent space seems reproduced at the functional level, where the functional space aims to reproduce the whole spacetime.

\begin{proof}
The set $\mathcal{I}$ is a cone because if $f$ is $\mathscr{T}$-continuous $\le$-isotone so is $\lambda f$ for every $\lambda \ge 0$. It is convex because if $f,g$ are $\mathscr{T}$-continuous $\le$-isotone then for $x\le y$ and any $t\in [0,1]$, as $f(x)\le f(y)$ and $g(x)\le g(y)$, we have  $(1-t)f(x)+tg(x)\le (1-t)f(y)+tg(y)$, namely $(1-t)f + t g$ is isotone.

Next, every rushing function satisfies for $x\le y$, $f(y) \ge f(x) + d(x,y) \ge f(x)$, so $\mathcal{R}\subset \mathcal{I}$. Moreover, if $f$ and $g$ are rushing from $f(y) \ge f(x) + d(x,y) $,   $g(y) \ge g(x) + d(x,y)$, multiplying the former equation by $(1-t)$, the latter equation by $t$ and summing we get that $(1-t)f + t g$ is rushing. Furthermore, multiplying the former equation by $\lambda\ge 1$,  $\lambda f(y) \ge \lambda f(x) + \lambda d(x,y)\ge  \lambda f(x) +  d(x,y)$, which proves that $\lambda f$ is rushing and so $\lambda \mathcal{R} \subset \mathcal{R}$. Let $f$ be rushing and $g$ be isotone, then for every $x,y\in X$, $x\le y$,  $f(y) \ge f(x) + d(x,y) $,   $g(y) \ge g(x)$ and summing we get that $f+g$ is a rushing function, hence $\mathcal{I} + \mathcal{R} \subset \mathcal{R}$.
Let $\epsilon = 1$ if $f, g \in \mathcal{R}$, and $\epsilon = 0$ if $f, g \in \mathcal{I}$. Then, for $x \le y$, we have:
\begin{align*}
f(y)&\ge f(x)+\epsilon d(x,y)\ge  \textrm{min}(f,g)(x)+ \epsilon d(x,y), \\
g(y)&\ge g(x)+\epsilon d(x,y)\ge  \textrm{min}(f,g)(x)+ \epsilon d(x,y),
\end{align*}
thus  $\textrm{min}(f,g)(y) \ge \textrm{min}(f,g)(x)+ \epsilon d(x,y)$ hence  $\textrm{min}(f,g)$ is rushing/isotone. Similarly, for
$x\le y$,
\begin{align*}
\textrm{max}(f,g)(y) \ge f(y)&\ge f(x)+\epsilon d(x,y), \\
\textrm{max}(f,g)(y) \ge g(y)&\ge g(x)+\epsilon d(x,y),
\end{align*}
thus  $\textrm{max}(f,g)(y) \ge \textrm{max}(f,g)(x)+ \epsilon d(x,y)$ hence  $\textrm{max}(f,g)$ is rushing/isotone.
\end{proof}

\begin{remark}
We recall that a cone is sharp if it does not contain any line passing through its vertex (origin of the vector space).
The cone $\mathcal{I}$ is not necessarily sharp. If a  not identically zero function $f \in \mathcal{I}$ belongs to a line passing through the origin and contained in $\mathcal{I}$, then $-f$ belongs to $\mathcal{I}$, too. But they would be both isotone, which would imply for any pair $x\le y$, $f(x)=f(y)$.
Under a mild connectedness assumption (e.g., that the preorder is connected in the sense that any two points are connected by a chain of pairs of distinct  preorder-comparable points), then $f$ must be constant.
Later we shall introduce the time functions which form a cone (not including  the origin) $\mathcal{T}\subset \mathcal{I}$ which is sharp whenever the order is not the discrete order.
\end{remark}

\section{Recovering spacetime from the continuous rushing functions} \label{cnxp}

The objective of this section is to prove a converse of the previous result. Not only a family of functions induces a stable spacetime, where the elements of the family are a posteriori interpreted as continuous rushing functions. The converse is true: under weak topological assumptions any stable spacetime can be recovered from the continuous rushing functions over it.

The proof passes through the formulation of spacetime via the product trick.

For generality we shall write the following results in terms of the $k_\omega$-space condition.
A $k_\omega$-space can be characterized through the
following property: there is a countable sequence $K_i$ (called {\em admissible}) of compact sets such that $\bigcup_{i=1}^\infty K_i=X$ and for every $O\subset X$, $O$ is open if and only if $O\cap K_i$ is open in $K_i$ in the induced topology. In this definition we do not assume $X$ to be Hausdorff but this fact will not be relevant since the spaces to which we shall apply the condition will be closed ordered topological spaces hence Hausdorff.

Without loss of generality the admissible sequence can be assumed increasing, indeed if $K_i$ is admissible then $\tilde K_n=\bigcup_{i=1}^n K_i$ is admissible. In this case every compact set $C$ is contained in some element of the admissible increasing sequence \cite[Lemma 9.3]{steenrod67}.

We recall that \cite{franklin77,minguzzi11c}
\begin{center}
compact $\Rightarrow$   $k_\omega$-space $\Rightarrow$ $\sigma$-compact $\Rightarrow$   Lindel\"of
\end{center}
and the fact that local compactness makes the last three properties coincide (for us a topological space is locally compact if every point admits a compact neighborhood).
In particular, locally compact $\sigma$-compact spaces (e.g.\ topological manifolds) are $k_\omega$-spaces.
Every Hausdorff $k_\omega$-space is paracompact and normal \cite{morita56}.  The quotient of a $k_\omega$-space is a $k_\omega$-space \cite{morita56} \cite[Thm.\ 1.1]{minguzzi11f}.
 The product of two $k_\omega$-spaces $X, Y$ is a $k_\omega$-space. If $X_i$ and $Y_i$ are admissible sequences for $X$ and $Y$, respectively, then $X_i\times Y_i$ is an admissible sequence for $X\times Y$ \cite{franklin77}.
An important property that will be used in the following proof is \cite[Thm.\ 2.7]{steenrod67}: A function $f: X \to \mathbb{R}$ is continuous if for each $i$, $f\vert_{K_i}$ is continuous, where each $K_i$ is assigned the induced topology.

\begin{lemma} \label{cjqr}
The spacetime  $(X,\mathscr{T},\le,d)$ is locally compact ($\sigma$-compact, $k$-preserving, a $k_\omega$-space) iff so is $(X^\times,\mathscr{T}^\times,\le^\downarrow)$.
\end{lemma}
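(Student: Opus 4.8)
The plan is to exploit the fact that $X^\times = X\times\mathbb{R}$ is literally a topological product, so that the four properties in question all transfer between a factor and the product in one direction, and the section $X\cong X\times\{0\}$ transfers them back in the other. I would treat each property in turn, since they are logically independent enough that a single slick argument covering all four is unlikely to be cleaner than handling them one by one.

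For the forward direction ($X\Rightarrow X^\times$): local compactness is immediate, since $\mathbb{R}$ is locally compact and the product of two locally compact spaces is locally compact (a compact neighborhood of $(x,a)$ is the product of a compact neighborhood of $x$ with $[a-1,a+1]$). For $\sigma$-compactness, if $X=\bigcup_i C_i$ with $C_i$ compact then $X^\times=\bigcup_i C_i\times[-i,i]$. For the $k_\omega$-property I would quote the fact, already recalled in the excerpt, that the product of two $k_\omega$-spaces is a $k_\omega$-space with admissible sequence $X_i\times Y_i$; here $\mathbb{R}=\bigcup_i[-i,i]$ is an admissible sequence for $\mathbb{R}$. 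The $k$-preserving property is the one genuine case: given a compact $K\subset X^\times$, I would follow the recipe already used in the proof of Prop.\ \ref{strp}, namely set $C=\pi_X(K)$ (compact) and $D=\sup_{C\times C}d$ (finite and attained, by upper semi-continuity of $d$ plus finiteness, which is available since a $k$-preserving stable spacetime has $d$ finite), and observe that $J^\downarrow(K)$ is closed (as $\le^\downarrow$ is closed) and contained in $J(C)\times[\inf\pi_\mathbb{R}(K)-D,\ \sup\pi_\mathbb{R}(K)+D]$, a product of compact sets; a closed subset of a compact set is compact. Strictly this shows $J^\downarrow(K)$ compact for $K$ of this product-projection form, but any compact $K\subset X^\times$ is contained in $\pi_X(K)\times[\inf\pi_\mathbb{R}(K),\sup\pi_\mathbb{R}(K)]$ and $J^\downarrow$ is monotone under inclusion, so this suffices.

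For the reverse direction ($X^\times\Rightarrow X$): $X\times\{0\}$ is a closed subspace of $X^\times$ (it is $\pi_\mathbb{R}^{-1}(\{0\})$), homeomorphic to $X$. Local compactness and $\sigma$-compactness pass to closed subspaces trivially. For the $k_\omega$-property, a closed subspace of a $k_\omega$-space is a $k_\omega$-space (intersect an admissible sequence with the closed set), and this is standard; alternatively one can cite that $X$ is a retract of $X^\times$ via $(x,a)\mapsto(x,0)$. For $k$-preservation: given compact $C\subset X$, set $K=C\times\{0\}$, which is compact in $X^\times$; then $J^\downarrow(K)$ is compact by hypothesis, and since (as computed in Prop.\ \ref{strp}) every $r\in J(C)$ gives $(r,0)\in J^\downarrow(K)$, we get $J(C)\subset\pi_X(J^\downarrow(K))$, which is compact; as $J(C)$ is closed ($\le$ closed) it is compact. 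One also needs $d$ finite on $X$, which follows exactly as in Prop.\ \ref{strp}: if $d(p,q)=+\infty$ then the fiber $\{(q,r):r\ge 0\}$ sits inside $J^\downarrow(\{(p,0),(q,0)\})$, contradicting compactness.

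I do not expect a serious obstacle here; the only points requiring a little care are (a) making sure the ``$d$ finite'' half of $k$-preservation is handled on both sides — it is packaged into the definition of $k$-preserving stable spacetime on the $X$ side and must be re-derived on the $X^\times$ side, but Prop.\ \ref{strp} already supplies exactly this argument — and (b) the minor subtlety that $J^\downarrow(K)$ need not equal $J^\downarrow(\pi_X(K)\times\{0\})$, which is why I phrase the compactness bound via an enclosing product box rather than via the section. Everything else is a routine invocation of the stability of these four topological properties under finite products and under passage to closed subspaces, all of which is already cited in the surrounding text.
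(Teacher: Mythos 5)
Your proof is correct and follows essentially the same route as the paper's: the forward direction rests on preservation of the four properties under products, with the $k$-preservation case handled by the same enclosing-box estimate as in Prop.~\ref{strp} (the paper's own proof of Lemma~\ref{cjqr} merely takes the bound $D$ over the compact hull $i(\pi_1(K))\cap d(\pi_1(K))$ rather than over $\pi_X(K)\times\pi_X(K)$), while for the reverse direction you use the closed slice $X\times\{0\}$ where the paper uses the projection $\pi_1$ (continuous image of compacts, quotient map for the $k_\omega$ case) --- an immaterial difference. The only caveat, which you share with the paper rather than introduce, is that finiteness of $D$ in the forward $k$-preservation step needs $d<\infty$ (weak stability); you invoke it via stability, and the paper assumes it implicitly when writing $D<+\infty$.
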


\begin{proof}
The product of two locally compact ($\sigma$-compact) spaces is locally compact (resp.\ $\sigma$-compact), thus if $X$ is locally compact ($\sigma$-compact) then so is $X^\times$. Since the continuous image of a compact set is compact, if $X^\times$ is locally compact ($\sigma$-compact) then so is $X$.


If $X$ is $k$-preserving then $X^\times$ is $k$-preserving, indeed in every closed ordered space, for any compact set $C$ we have that $i(C)\cap d(C)$ is closed \cite[Prop.\ 4, p.\ 44]{nachbin65}, thus we need only to show that $i^\downarrow (K) \cap d^\downarrow(K)$ is contained in a compact set, where $K\subset X^\times$ is any compact set. But $\pi_2(K)$ is compact hence bounded, thus there is $M>0$ such that it is contained in $[-M,M]$.  The set $i(\pi_1(K))\cap d(\pi_1(K))$ is compact because $X$ is $k$-preserving. Let $D<+\infty$ be the maximum of $d$ on it, then $\pi_2(i^\downarrow (K) \cap d^\downarrow(K))=[-M-D,M+D]$ thus
\[
i^\downarrow (K) \cap d^\downarrow(K) \subset [i(\pi_1(K))\cap d(\pi(K))] \times [-M-D,M+D]
\]
where the set on the right-hand side is a product of compact sets hence compact.

The other direction follows noting that if $K$ is a compact set on $X$, $i(K)\cap d(K)$ is  closed and
\[
i(K)\cap d(K) \subset \pi_1\Big(i^\downarrow(K\times\{0\})\cap d^\downarrow(K\times\{0\}) \Big),
\]
where the right-hand side is compact.
The statements on the $k_\omega$-property follow from its preservation under quotient and Cartesian product.
\end{proof}

\begin{lemma} \label{lem}
Let $(X,\mathscr{T},\le,d)$ be a weakly stable $k_\omega$-spacetime.
 There is a continuous isotone function $F: X^\times \to \mathbb{R}$ such that
$F((x,r))=h(x)-r$,
 where $h:X\to \mathbb{R}$ is a continuous rushing function. Conversely, for every continuous rushing function $h:X\to \mathbb{R}$, the function $F: X^\times \to \mathbb{R}$ defined as above is continuous and isotone.
\end{lemma}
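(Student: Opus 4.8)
The two assertions in the lemma are of quite different natures: the ``converse'' direction is routine, whereas the existence of \emph{some} $F$ of the prescribed shape is exactly where the $k_\omega$ hypothesis enters, so I would dispose of the easy correspondence first. If $h\colon X\to\mathbb{R}$ is continuous and $d$-rushing, then $F(x,r):=h(x)-r$ equals $h\circ\pi_1-\pi_2$ and is therefore continuous; and if $(x,a)\le^\downarrow(y,b)$, the characterization of $\le^\downarrow$ recorded right after Definition~\ref{pro} gives $x\le y$ and $b\le a+d(x,y)$, so $h(y)-h(x)\ge d(x,y)\ge b-a$, i.e.\ $h(x)-a\le h(y)-b$, that is $F(x,a)\le F(y,b)$; hence $F$ is continuous and isotone. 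Conversely, given a continuous isotone $F$ of the form $F(x,r)=h(x)-r$, one recovers $h=F(\cdot,0)$, which is continuous as the composite of $F$ with the homeomorphic embedding $x\mapsto(x,0)$ of $X$ onto $X\times\{0\}$; and $h$ is $d$-rushing because, for $x\le y$, one has $(x,0)\le^\downarrow(y,d(x,y))$ straight from the definition of $\le^\downarrow$, whence $h(x)=F(x,0)\le F(y,d(x,y))=h(y)-d(x,y)$.

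For the remaining content --- producing at least one continuous isotone $F$ of this translationally equivariant shape --- I would work on $X^\times$. By Lemma~\ref{cjqr} it is again a weakly stably causal $k_\omega$-spacetime, in particular a closed ordered Hausdorff $k_\omega$-space; such a space is normally ordered (the order counterpart of the normality of Hausdorff $k_\omega$-spaces recalled above), so Nachbin's Urysohn lemma furnishes an abundance of continuous isotone functions on $X^\times$ --- for instance one separating the disjoint closed sets $d^\downarrow((x_0,1))$ and $i^\downarrow((x_0,0))$ attached to a fixed $x_0\in X$. To turn these into one of the special shape, I would exhaust $X$ by an increasing admissible sequence $K_1\subset K_2\subset\cdots$ of compacta: each $(K_n,\mathscr{T}|_{K_n},\le|_{K_n},d|_{K_n})$ is a compact ordered space carrying a reverse-triangle-inequality function, hence a (stable) spacetime, on which one builds a continuous rushing function $h_n$ restricting to $h_{n-1}$; then $h:=\lim_n h_n$ is $d$-rushing (that inequality is checked $K_n$ by $K_n$) and, by the criterion recalled above, continuous because continuous on every $K_n$. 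An alternative is the fiberwise regularization $F(x,r):=\inf_{c}\bigl(G(x,r+c)+c\bigr)$ of a continuous isotone $G$, which is automatically translationally equivariant and, because $\le^\downarrow$ is translation-invariant and contains the product preorder, again isotone.

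The genuinely delicate step --- the one I expect to be the main obstacle --- is \emph{continuity} (together with, in the regularization route, finiteness). Isotonicity and the rushing inequality are ``closed'' conditions, preserved by the suprema/infima and by the limits one is led to form, but continuity is not: the monotone sup/inf expressions one naturally writes (such as $x\mapsto\sup_{p}d(p,x)$ on a compact piece, or the infimum over the non-compact parameter $c$ above) are only semicontinuous a priori. Upgrading them to genuine continuity --- in the exhaustion approach, performing each extension from $K_n$ to $K_{n+1}$ so as to preserve continuity and the rushing inequality simultaneously; in the regularization approach, first arranging that $G$ be coercive along the $\mathbb{R}$-fibers so that the infimum is finite, and only then deducing continuity --- is the crux of the argument, and is precisely where the normality (hence the Nachbin Urysohn/Tietze machinery) afforded by the $k_\omega$ hypothesis must be invoked.
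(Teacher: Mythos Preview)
Your treatment of the ``converse'' correspondence is correct and essentially the same as the paper's.

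For the existence direction, however, your proposal is only an outline: you name two routes (exhaustion/extension on $X$, or fiberwise regularization on $X^\times$) and correctly isolate continuity as the crux, but you do not carry either route through, and both have real obstacles. In the exhaustion route there is no ``rushing Tietze'' available: extending a continuous $d$-rushing $h_{n-1}$ on $K_{n-1}$ to a continuous $d$-rushing $h_n$ on $K_n$ is not covered by Nachbin's order-Urysohn machinery, because the rushing inequality is a metric constraint, not an order one; you have not said how to perform this step. In the regularization route, $F(x,r)=\inf_c\bigl(G(x,r+c)+c\bigr)$ is a priori only upper semicontinuous, and finiteness requires not mere coercivity of $G$ along the fibers but linear-rate coercivity (so that $G(x,r+c)+c$ does not drift to $-\infty$); you assert that ``arranging $G$ to be coercive'' and then ``deducing continuity'' is where Nachbin enters, but you do not show how either is done.

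The paper follows neither of your two routes. It works on $X^\times$ but builds, by two separate Nachbin--Urysohn constructions, a continuous isotone $H$ with two additional properties: (a) $H(x,r)\to\pm\infty$ as $r\to\mp\infty$ on every fiber (obtained from a locally finite sum $G=\sum_i G_i$ adapted to an admissible compact exhaustion), and (b) $H$ is \emph{strictly} decreasing along every fiber (obtained from a second, rather delicate, double series $U$ built so that on each fiber it separates $(x,0)$ from $(x,-1/2^n)$ after arbitrary rational shifts). Then the zero set $H^{-1}(0)$ is a graph $\{(x,h(x))\}$; continuity of $h$ is automatic because its hypograph is $H^{-1}([0,\infty))$ and its epigraph is $H^{-1}((-\infty,0])$, both closed. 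This ``implicit graph'' device is the missing idea: it manufactures continuity for free from the continuity of $H$, bypassing the semicontinuity trap you anticipated. The rushing property of $h$ then follows in one line from the isotonicity of $H$, exactly as in your converse paragraph.
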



The isotone functions on $X^\times$ of the form $F((x,r))=h(x)-r$ will be said to be {\em translationally invariant}.

\begin{proof}
As $X$ is a $k_\omega$-space there is an admissible sequence of compact sets $K_i$ such that $K_i\subset K_{i+1}$,  $\cup_i K_i=X$.

Let $D_i$ be constants larger than the maximum of $d\vert_{K_i\times K_i}$. No point in the future of  $K_i \times \{0\}$ can comprise a point in  $K_i \times \{D_i\}$, which means that the future closed set $i^\downarrow( K_i \times \{r\})$ does not intersect the past closed set  $d^\downarrow( K_i \times \{r+D_i\})$ where $r$ is any constant. The arbitrariness of $r$ follows from translational invariance of the order $\le^\downarrow$.
For each  index value $i$ we make a choice for $r$, $r_i<0$, so that $r_i+D_i$ is negative and decreasing and $r_i, r_i+D_i \to -\infty$.

Every closed preordered $k_\omega$-space is normally preordered \cite[Thm.\ 2.7]{minguzzi11f}. By Nachbin's generalization of the Urysohn lemma, there is a continuous isotone function $\hat G_i: X^\times \to [0,1]$ such that $ \hat G_i^{-1}(0)\supset d^\downarrow( K_i \times \{r_i+D_i\})\supset  K_i \times \{r_i+D_i\}$, $\hat G^{-1}(1)\supset i^\downarrow( K_i \times \{r_i\})\supset K_i \times \{r_i\}$.

Similarly, for the same index value $i$,  the future closed set $i^\downarrow( K_i \times \{-r_i-D_i\})$ does not intersect the past closed set  $d^\downarrow( K_i \times \{-r_i\})$, thus there is a continuous isotone function $\check G_i: X^\times \to [-1,0]$ such that $ \check G_i^{-1}(-1)\supset d^\downarrow( K_i \times \{-r_i\})\supset  K_i \times \{-r_i\}$, $\hat G_i^{-1}(0)\supset i^\downarrow( K_i \times \{-r_i-D_i\})\supset K_i \times \{-r_i-D_i\}$.

The function $G_i=\check G_i+\hat G_i: X^\times \to [-1,1]$ has the following properties $G_i^{-1}(1)\supset K_i \times (-\infty, r_i]$, $G_i^{-1}(0)\supset K_i \times [r_i+D_i, -r_i-D_i]$,  $G_i^{-1}(-1)\supset K_i \times [-r_i, +\infty )$.  Since $\mathbb{R}$ is a $k_\omega$-space, the topological space $X^\times$ is a $k_\omega$-space and an admissible sequence of compact sets is provided by $C_i:=K_i \times [r_i+D_i, -r_i-D_i]$.
The function $G=\sum_i G_i$ is well defined because at every point only a finite number of terms  are non-zero. Moreover, it is continuous on $C_n$ because it is there equal to $\sum_{i=1}^{n-1} G_i\vert_{C_i}$ which is continuous, being the terms finite in number. By the mentioned property of $k_\omega$-spaces, $G$ is continuous. The isotone property is clear.
Moreover, for each $x$, $G((x,y))\to\pm \infty$ for $y \to \mp \infty$.

%
%
%
%

This paragraph is devoted to the proof that there is a continuous isotone function $U: X^\times \to [-1,1]$ that is decreasing on the fibers (in their real line order).\footnote{If $X$ is second-countable this step can be simplified, as the product $X^\times$ is second-countable too. By \cite[Thm.\ 5.5]{minguzzi11f} there is a continuous utility $U: X^\times \to [-1,1]$, which hence  decreases over the fibers.}
Let $n\in \mathbb{N}$  and $x\in X$, as $d^\downarrow((x,0))\cap i^\downarrow((x,-1/2^n))=\emptyset$ (because $d(x,x)=0$) we can find, again by order normality, a continuous isotone function $F^n_x: X^\times \to [-1,1] $ such that $(F^n_x){}^{-1}(-1)\supset  d^\downarrow((x,0)) \ni (x,0)$,  $(F^n_x){}^{-1}(1)\supset  i^\downarrow((x,-1/2^n)) \ni (x,-1/2^n)$. By continuity, there is an open neighborhood $O_x\ni x$ such that $F^n_x((y,0))<F^n_x((y,-1/2^n))$, for every $y\in O_x$. As $X$ is a $k_\omega$-space it is Lindel\"of, so  from $\{O_x: x\in X\}$ we can extract a countable covering $\{O_{x_i}, i\in \mathbb{N}\}$, and then the function $F^n:X^\times \to [-1,1]$
\[
F^n=\frac{1}{2^i} \sum_{i=1} F^n_{x_i},
\]
by uniform convergence, is a continuous isotone function such that for every $x\in X$, $F^n((x,0))<F^n((x,-1/2^n))$. Note that if $h((x,y))$ is a continuous isotone function, so is, for every $q\in \mathbb{R}$, $h'((x,y))=h((x,y-q))$,  by translational invariance of $\le^\downarrow$.
Let $q: \mathbb{N}\to \mathbb{Q}$, $m \mapsto q_m$, be a bijection, and consider the function
\[
U((x,r))=\sum_{m=1 ,n=1} \frac{1}{2^{m+n}} F^n((x,r-q_m)).
\]
This series converges uniformly to a continuous and isotone function $U:X^\times \to [-1,1]$.

Note that by continuity of $F^n$, the inequality  $F^n((x,0))<F^n((x,-1/2^n))$ implies that for each $x$ there is some $U_x\subset \mathbb{R}$, neighborhood of $0$, such that for every $s\in U_x$, $F^n((x,s))<F^n((x,s-1/2^n))$.

Let $x\in X$, $r_1,r_2\in \mathbb{R}$, $r_1<r_2$. We can find a rational number $q_m$, and $n\in \mathbb{N}$ such that $r_2-q_m\in U_x$, $r_1-q_m<r_2-q_m-1/2^n <r_2-q_m$
 from which it follows that
\[
F^n((x,r_1-q_m))  \ge F^n((x,r_2-q_m-\frac{1}{2^n}))>F^n((x,r_2-q_m))
\]
which implies $U((x,r_1))> U((x,r_2))$, thus, $U$ is decreasing in the fibers (in their real line order).


The function $H:=G+\frac{1}{2} U$ is a continuous isotone function that has the same asymptotic behavior of $G$ on the fibers, and is decreasing over them (with their canonical order), which implies that for every $x\in X$ there is just one point $(x,h(x))$ for which $H$ vanishes.
As a consequence, there is a function $h: X\to \mathbb{R}$ such that $H^{-1}(0)=\{(x,h(x)): x\in X\}$. Observe that by the continuity of $H$, $H^{-1}([0,+\infty))=\{(x,r): x\in X, r\le h(x)\}$, is closed which is the hypograph of $h$. As a consequence, $h$ is upper semi-continuous.
Similarly,  by the continuity of $H$,  $H^{-1}((-\infty,0])=\{(x,r): x\in X, r\ge h(x)\}$ is closed, which is the epigraph of $h$, thus $h$ is lower semi-continuous and hence continuous.


Finally, let us consider two points $x,y\in X$, $x\le y$, then $(x,a) \le^\downarrow (y, a+ d(x,y) )$. Choosing $a$ so that $(x,a)\in H^{-1}(0)$, that is $a=h(x)$, we must have $H((y,a+ d(x,y))) \ge 0$, that is $a+d(x,y)\le h(y)$, which proves that $h$ is a rushing function.

The function $F((x,r)):=h(x)-r$ is now continuous and isotone because, $(x,a)\le^\downarrow (y,b)$ implies $x\le y$ and $b\le a+d(x,y)$, which implies
\[
F((x,a))=h(x)-a\le h(x)+d(x,y)-b\le h(y)-b=F((y,b)).
\]
\end{proof}

Let us denote $x\sim y$ if $x\le y$ and $y\le x$. Let us also denote $x<y$  if $x\le y$ and $y\nleq x$.
A  isotone function $f: X\to \mathbb{R}$ such that $x<y \Rightarrow f(x)<f(y)$ is a {\em utility}.  If $\le$ is a partial order then a utility is defined by ``$x\le y$ and $x\ne y \Rightarrow f(x)<f(y)$''. If there is a function with the last property then $\le$ is necessarily a partial order. We call a continuous function that satisfies  ``$x\le y$ and $x\ne y \Rightarrow  f(x)<f(y)$'' {\em time function} (we avoid {\em strictly isotone} since some authors denote with this term a utility).

Our terminology is consistent with the result for the smooth case, since when $K$ is antisymmetric (stable causality) its continuous utilities are precisely the time functions \cite[Thm.\ 6]{minguzzi09c}, and if there is a time function $K$ is antisymmetric.

\begin{theorem} \label{cmgp}
A second-countable   closed preordered $k_\omega$-space (i.e.\ causal structure) $(X,\mathscr{T},\le)$ is weakly stably causal (i.e.\ an order) iff it admits a time function.
\end{theorem}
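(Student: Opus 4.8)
The plan is to treat the two implications separately. The implication ``admits a time function $\Rightarrow$ weakly stably causal'' is immediate: if $t$ is a time function and $x\le y$, $y\le x$ with $x\ne y$, then $x\le y$ together with $x\ne y$ gives $t(x)<t(y)$, while $y\le x$ together with $y\ne x$ gives $t(y)<t(x)$, a contradiction; hence $\le$ is antisymmetric, i.e.\ an order. (This is the remark already made before the statement: the bare existence of a time function forces $\le$ to be an order.)

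For the converse, assume $\le$ is an order. Since $\le$ is then a closed partial order, a continuous utility is the same thing as a time function, so it suffices to produce a continuous isotone $t\colon X\to\mathbb{R}$ with $t(x)<t(y)$ whenever $x\le y$ and $x\ne y$. Two facts are available: first, $(X,\mathscr{T},\le)$, being a closed ordered $k_\omega$-space, is normally ordered \cite[Thm.\ 2.7]{minguzzi11f}, so Nachbin's generalization of Urysohn's lemma applies; second, $\mathscr{T}$ admits a countable basis $\{A_n\}$. The pointwise input is the observation that if $x\le y$ and $x\ne y$, then antisymmetry gives $y\nleq x$, whence the closed sets $d(x)$ (decreasing) and $i(y)$ (increasing) are disjoint — any $z$ in the intersection would satisfy $y\le z\le x$ — and therefore Nachbin's lemma furnishes a continuous isotone $h\colon X\to[0,1]$ with $h\equiv 0$ on $d(x)$ and $h\equiv 1$ on $i(y)$; in particular $x\in h^{-1}([0,1/3))$ and $y\in h^{-1}((2/3,1])$, which are disjoint open sets.

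To assemble a single function, call a pair $(m,n)$ \emph{good} if there exists a continuous isotone $g\colon X\to[0,1]$ with $g(A_m)\subset[0,1/3)$ and $g(A_n)\subset(2/3,1]$, and for each good pair fix one such witness $g_{m,n}$. By the previous step, given $x\le y$ with $x\ne y$ we may choose $m,n$ with $x\in A_m\subset h^{-1}([0,1/3))$ and $y\in A_n\subset h^{-1}((2/3,1])$; this pair is good (witnessed by $h$ itself), so the pre-selected witness satisfies $g_{m,n}(x)<1/3<2/3<g_{m,n}(y)$. Enumerating the (at most countably many) good pairs, the function $t:=\sum_k 2^{-k}g_k$ converges uniformly, hence is continuous, is isotone as a sum of isotone functions, and satisfies $t(x)<t(y)$ whenever $x\le y$, $x\ne y$; thus $t$ is a time function.

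The substantive step — and the reason second countability is indispensable, since a general closed ordered $k_\omega$-space (e.g.\ the split interval, whose closed linear order admits no continuous utility) need not carry a time function — is precisely this globalization, which compresses the continuum of pointwise separations coming from normal orderedness into one countable family whose weighted sum still strictly orders every comparable pair of distinct points; the $k_\omega$ hypothesis itself is used only to secure normal orderedness. Alternatively, one may simply invoke a continuous-utility theorem for second-countable closed ordered spaces in the spirit of \cite[Thm.\ 5.5]{minguzzi11f}, of which the argument above is essentially the proof.
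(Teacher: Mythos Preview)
Your proof is correct. The paper's own proof simply invokes the generalized Auslander--Levin theorem \cite[Cor.\ 1.4]{minguzzi11c} as a black box to obtain a continuous utility on any second-countable closed preordered $k_\omega$-space, and then notes (as you do) that under antisymmetry a utility is a time function and conversely. You instead unpack that citation: you use the $k_\omega$ hypothesis to get normal orderedness via \cite[Thm.\ 2.7]{minguzzi11f}, apply Nachbin's Urysohn lemma to separate $d(x)$ from $i(y)$ for each comparable pair $x<y$, and then use second countability to collapse the resulting continuum of separating functions into a single uniformly convergent weighted sum. This is essentially the proof of the theorem the paper cites, so the two arguments are not genuinely different in content; yours is simply more self-contained, and your closing remark already acknowledges that one could instead just quote a utility theorem directly.
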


It is useful to observe that every second-countable locally compact space is a second-countable $k_\omega$-space. The converse holds if $(X,\mathscr{T},\le)$ is locally convex or the topology is Hausdorff \cite[Prop.\ 3.1]{minguzzi11c}.

\begin{proof}
By the generalization of Auslander-Levin's theorem \cite{levin83,bridges95}  proved in \cite[Cor.\ 1.4]{minguzzi11c}
every second-countable closed preordered $k_\omega$-space admits a utility function, thus every second-countable closed ordered $k_\omega$-space admits a time function. Conversely, if there is a time function $\le$ is an order.
\end{proof}

\begin{proposition}
A   weakly stably causal $k_\omega$-spacetime $(X,\mathscr{T},\le,d)$ is weakly stable iff it admits a continuous rushing function.
\end{proposition}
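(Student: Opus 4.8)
The plan is to prove the two implications separately; both are short once Lemma~\ref{lem} is available.

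First I would establish the implication ``weakly stable $\Rightarrow$ there is a continuous rushing function'' by a direct appeal to Lemma~\ref{lem}. Its hypothesis is exactly that $(X,\mathscr{T},\le,d)$ be a weakly stable $k_\omega$-spacetime, and its conclusion asserts the existence of a continuous isotone $F:X^\times\to\mathbb{R}$ of the translationally invariant form $F((x,r))=h(x)-r$ with $h:X\to\mathbb{R}$ continuous and rushing. So $h$ is the function we want, and there is nothing further to check.

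For the converse, ``there is a continuous rushing function $\Rightarrow$ weakly stable'', I would argue straight from the definition of a rushing function together with the spacetime axioms. Let $h:X\to\mathbb{R}$ be continuous and rushing. If $x\le y$ then $h(x)+d(x,y)\le h(y)$, so $d(x,y)\le h(y)-h(x)<+\infty$; if $x\nleq y$ then $d(x,y)=0$ by the spacetime axiom. Hence $d$ is finite everywhere, which, together with the standing assumption of weak stable causality, is precisely weak stability. Note that this direction uses neither the $k_\omega$-condition nor weak stable causality; it only needs that $h$ take finite real values.

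I do not anticipate any genuine obstacle here: the entire substantive content — the product trick and the $k_\omega$/Nachbin--Urysohn machinery used to manufacture $h$ out of the finiteness of $d$ — is already packaged inside Lemma~\ref{lem}, and the present proposition merely recasts that lemma as an equivalence, the only addition being the elementary observation that a real-valued rushing function forces $d<+\infty$.
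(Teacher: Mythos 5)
Your proposal is correct and coincides with the paper's own proof: the forward direction is a direct appeal to Lemma~\ref{lem} (whose hypothesis is exactly a weakly stable $k_\omega$-spacetime), and the converse is the same one-line observation that $x\le y$ gives $d(x,y)\le f(y)-f(x)<+\infty$. Nothing is missing.
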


\begin{proof}
The implication to the right follows from Lemma \ref{lem}. For the converse, let $f$ be a continuous rushing function then, for every $x,y\in X$, $x\le y$, $f(x)+d(x,y)\le f(y)<+\infty$, which proves that $d(x,y)$ is finite.
\end{proof}

We say that $f:X\to \mathbb{R}$ is a rushing time function if it is both a rushing function and a time function.

The next result is the metric analog of the first statement (non-strict version) in \cite[Thm.\ 4.6]{minguzzi17}.
\begin{theorem}
A second-countable $k_\omega$-spacetime $(X,\mathscr{T},\le,d)$ is weakly stable iff it admits a rushing time function.
\end{theorem}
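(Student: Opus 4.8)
The plan is to reduce the statement to the two facts just established: Lemma~\ref{lem}, which produces a continuous rushing function on a weakly stable $k_\omega$-spacetime, and Theorem~\ref{cmgp}, which produces a time function on a second-countable closed ordered $k_\omega$-space. Neither output is by itself a rushing time function---a rushing function need not be strictly increasing along $\le$, and a time function need not dominate $d$---but their \emph{sum} will be, by Proposition~\ref{mqpg}.

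For the forward implication, assume $(X,\mathscr{T},\le,d)$ is weakly stable. Then $\le$ is an order and $d$ is finite, so the causal structure $(X,\mathscr{T},\le)$ is a second-countable closed ordered $k_\omega$-space, and Theorem~\ref{cmgp} yields a time function $t:X\to\mathbb{R}$; note $t$ is in particular continuous and isotone, i.e.\ $t\in\mathcal{I}$ (if $x\le y$ then either $x=y$ and $t(x)=t(y)$, or $x\ne y$ and $t(x)<t(y)$). Since the spacetime is a weakly stable $k_\omega$-spacetime, Lemma~\ref{lem} yields a continuous rushing function $h:X\to\mathbb{R}$, i.e.\ $h\in\mathcal{R}$. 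Put $f:=h+t$. By Proposition~\ref{mqpg}, $\mathcal{I}+\mathcal{R}\subset\mathcal{R}$, so $f$ is a continuous rushing function. It is also a time function: it is continuous, and if $x\le y$ with $x\ne y$ then $h(x)\le h(y)$ (every rushing function is isotone, $\mathcal{R}\subset\mathcal{I}$) and $t(x)<t(y)$, hence $f(x)<f(y)$. Thus $f$ is a rushing time function.

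For the converse, suppose $f$ is a rushing time function. The time-function property ``$x\le y$, $x\ne y\Rightarrow f(x)<f(y)$'' forces $\le$ to be antisymmetric, hence an order. The rushing property gives, for every $x\le y$, $f(x)+d(x,y)\le f(y)$, so $d(x,y)\le f(y)-f(x)<+\infty$ and $d$ is finite; taking $y=x$ (reflexivity) gives $d(x,x)\le 0$, so $d(x,x)=0$. Hence $(X,\mathscr{T},\le,d)$ is weakly stably causal with $d$ finite, i.e.\ weakly stable.

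I do not expect a genuine obstacle: all the substantive work is already contained in Lemma~\ref{lem} and Theorem~\ref{cmgp}. The only point worth flagging explicitly is that a time function is automatically isotone (so it lies in $\mathcal{I}$), which is what makes Proposition~\ref{mqpg} applicable to $h+t$; everything else is bookkeeping with the definitions of \emph{weakly stable}, \emph{rushing}, and \emph{time function}.
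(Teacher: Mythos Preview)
Your proof is correct and follows essentially the same approach as the paper: for the forward direction the paper also takes a continuous rushing function $h$ from Lemma~\ref{lem} and a continuous utility (equivalently, time function) $g$ from the Auslander--Levin type result underlying Theorem~\ref{cmgp}, and sets $f=h+g$; the converse is argued identically. Your explicit invocation of Proposition~\ref{mqpg} for $\mathcal{I}+\mathcal{R}\subset\mathcal{R}$ just makes transparent a step the paper leaves implicit.
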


\begin{proof}
$\Rightarrow$.
By Lemma \ref{lem} there is a continuous rushing function $h$. By \cite[Cor.\ 1.4]{minguzzi11c} there is a continuous utility $g$, thus $f:=h+g$ is a rushing time function.

$\Leftarrow$. The existence of a time function implies that $\le$ is an order.  Since $f$ is rushing, from $x\le x$, $f(x)+d(x,x)\le f(x)$, which implies $d(x,x)=0$. Thus $(X,\mathscr{T},\le,d)$  is weakly stably causal. Finally,  if $x \le y$, $f(x)+d(x,y)\le f(y)$ shows that $d(x,y)$ is finite, which proves that $(X,\mathscr{T},\le,d)$ is weakly stable.
\end{proof}

%

The next result is \cite[Cor.\ 2.14]{minguzzi12d}

\begin{theorem}\label{box2}
Let $(X,\mathscr{T},\le)$ be a locally compact $\sigma$-compact locally convex closed preordered space, then $(X,\mathscr{T},\le)$ is completely regularly preordered: $\mathscr{T}=\mathscr{T}_{\mathcal{I}}$ and $\le=\le_{\mathcal{I}}$, where $\mathcal{I}$ is the family of continuous isotone functions.
\end{theorem}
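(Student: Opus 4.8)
The plan is to reduce everything to the convexity of $(X,\mathscr{T},\le)$ and then apply the fact, recalled above, that a normally preordered space which is convex is completely regularly preordered. First note that a locally compact $\sigma$-compact space is a $k_\omega$-space (by the chain recalled before Lemma~\ref{cjqr}, local compactness collapses being a $k_\omega$-space, $\sigma$-compactness and Lindel\"ofness), hence by \cite[Thm.\ 2.7]{minguzzi11f} the closed preordered space $(X,\mathscr{T},\le)$ is normally preordered. The order half $\le=\le_{\mathcal{I}}$ is in any case immediate and needs no local convexity: the inclusion $\le\subset\le_{\mathcal{I}}$ is trivial, and if $x\nleq y$ then transitivity forces $i(x)\cap d(y)=\emptyset$, so, the singletons being compact, $i(x)$ (increasing) and $d(y)$ (decreasing) are disjoint closed sets by \cite[Prop.\ 4]{nachbin65} and Nachbin's generalization of Urysohn's lemma produces a continuous isotone $f$ with $f\equiv 0$ on $d(y)$, $f\equiv 1$ on $i(x)$, whence $f(x)=1>0=f(y)$ and $(x,y)\notin\le_{\mathcal{I}}$. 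Since $\mathscr{T}_{\mathcal{I}}\subset\mathscr{T}$ is automatic, the whole content is the reverse inclusion $\mathscr{T}\subset\mathscr{T}_{\mathcal{I}}$, which given normal preorderedness amounts to convexity: for every open $O$ and $x\in O$ one must produce an open decreasing $U$ and an open increasing $V$ with $x\in U\cap V\subset O$.

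For the convexity step the template is the classical argument for compact ordered spaces. Fix $x\in O$; by local convexity choose a convex neighborhood $C\ni x$ with $C\subset O$. Convexity of $C$, i.e.\ $C=i(C)\cap d(C)$, implies that every $y\notin C$ satisfies $x\nleq y$ (if $y\notin i(C)$) or $y\nleq x$ (if $y\notin d(C)$); in either case Nachbin's Urysohn lemma separates $x$ from $y$, yielding an open increasing or open decreasing ``half-space'' neighborhood of $x$ (a superlevel or sublevel set of a continuous isotone function) disjoint from an open neighborhood $W_y\ni y$. If $X\setminus O$ were compact one would extract a finite subcover of $\{W_y\}_{y\notin C}$ and intersect the corresponding finitely many half-spaces to obtain $U\cap V\subset O$. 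In general $X\setminus O$ is only closed, so one exhausts $X$ by an increasing admissible sequence of compacts $K_m$, runs the finite-subcover argument on each compact slice $K_m\setminus O$, and assembles the resulting countably many continuous isotone functions into finitely many global ones using the $k_\omega$-gluing principle \cite[Thm.\ 2.7]{steenrod67} (a map continuous on each $K_m$ is continuous), exactly in the spirit of the construction carried out in the proof of Lemma~\ref{lem}; local convexity is what guarantees that the slices can be treated compatibly and that the combined functions still trap $x$ inside $O$.

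The main obstacle is precisely this passage from the local to the global statement. A single continuous isotone function separates $x$ from only finitely many ``obstruction points'' with uniform slack, while $\sigma$-compactness, as opposed to genuine compactness, leaves countably many of them, so that a naive intersection of half-spaces need no longer be open or a neighborhood of $x$; one must instead build the separating functions layer by layer over the exhaustion, with enough room reserved at each stage so that their finitely many combinations cut out a convex $\mathscr{T}_{\mathcal{I}}$-neighborhood of $x$ contained in $O$. This bookkeeping, rather than any conceptual difficulty, is the technical core of \cite[Cor.\ 2.14]{minguzzi12d}, and it is the same move ``local convexity $\Rightarrow$ convexity under $\sigma$-compactness'' that underlies Lemma~\ref{lem} here.
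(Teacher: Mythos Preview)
The paper gives no proof of this theorem; it simply records it as \cite[Cor.\ 2.14]{minguzzi12d}. Your sketch therefore already goes further than the paper, and the parts you do carry out---the reduction via ``normally preordered $+$ convex $\Rightarrow$ completely regularly preordered'', the argument for $\le=\le_{\mathcal{I}}$, and the compact-case template for convexity---are correct.

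The gap is in the $\sigma$-compact step, which you describe rather than execute. The sentence ``assemble the resulting countably many continuous isotone functions into finitely many global ones using the $k_\omega$-gluing principle'' does not literally make sense: the $k_\omega$ property \cite[Thm.\ 2.7]{steenrod67} tells you that a \emph{single} globally defined function is continuous once it is continuous on each member of an admissible exhaustion; it does not convert an infinite family of separating functions into a finite one, and a naive countable intersection of half-spaces need not be open. What one actually has to do---and what the construction $G=\sum_i G_i$ in the proof of Lemma~\ref{lem} models---is build one increasing-side and one decreasing-side function as series engineered to be locally finite on the exhaustion (each new layer constant on the earlier compacta), with the level sets arranged to remain on the correct side of the convex neighborhood $C$ throughout. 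Local convexity is what makes such a coherent choice possible at every stage. This is the substantive content of \cite[Cor.\ 2.14]{minguzzi12d}; your proposal, like the paper, ultimately defers to that reference for it, so labelling it ``bookkeeping'' undersells the step.
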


\begin{lemma} \label{cnqx}
Let $(X,\mathscr{T},\le,d)$ be a locally compact  $\sigma$-compact stable spacetime. Let $\mathcal{T}$ be the family of translationally invariant continuous isotone functions on $(X^\times, \mathscr{T}^\times, \le^\downarrow)$. Then $\mathscr{T}^\times= \mathscr{T}_{\mathcal{T}}$, $\le^\downarrow=\le_{\mathcal{T}}$.
\end{lemma}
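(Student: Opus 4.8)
\emph{Plan.} I would split off the (easy) topological statement from the (hard) order statement, the latter being equivalent to the Lorentzian distance formula.

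\emph{The topology.} By Lemma \ref{lem} there is a continuous rushing function $h_0$ on $X$, hence $F_{h_0}\in\mathcal{T}$, where $F_h((x,r))=h(x)-r$; by Proposition \ref{mqpg}, for every continuous isotone $g:X\to\mathbb{R}$ one has $h_0+g\in\mathcal{R}$, so $F_{h_0+g}\in\mathcal{T}$, and $F_{h_0+g}-F_{h_0}=g\circ\pi_X$, where $\pi_X,\pi_\mathbb{R}$ are the two projections of $X^\times=X\times\mathbb{R}$. Thus $g\circ\pi_X$ is $\mathscr{T}_{\mathcal{T}}$-continuous for every continuous isotone $g$. Since $X$ is a locally compact $\sigma$-compact locally convex closed ordered space, Theorem \ref{box2} gives $\mathscr{T}=\mathscr{T}_{\mathcal{I}}$, so $\pi_X$ is $\mathscr{T}_{\mathcal{T}}$-continuous, and then $\pi_\mathbb{R}=h_0\circ\pi_X-F_{h_0}$ is $\mathscr{T}_{\mathcal{T}}$-continuous as well. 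Hence $\mathscr{T}^\times=\mathscr{T}\times\mathscr{T}_\mathbb{R}\subseteq\mathscr{T}_{\mathcal{T}}$, and the reverse inclusion is clear because the members of $\mathcal{T}$ are $\mathscr{T}^\times$-continuous; so $\mathscr{T}^\times=\mathscr{T}_{\mathcal{T}}$.

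\emph{The order.} The inclusion $\le^\downarrow\subseteq\le_{\mathcal{T}}$ is immediate since the members of $\mathcal{T}$ are $\le^\downarrow$-isotone. For $\le_{\mathcal{T}}\subseteq\le^\downarrow$, suppose $(x_0,r_0)\not\le^\downarrow(y_0,s_0)$; I must produce a continuous rushing $h$ with $F_h((x_0,r_0))>F_h((y_0,s_0))$, i.e.\ $h(y_0)-h(x_0)<s_0-r_0$. If $x_0\not\le y_0$, then by Theorem \ref{box2} ($\le=\le_{\mathcal{I}}$) there is a continuous isotone $g$ with $g(x_0)>g(y_0)$; then $h:=h_0+\lambda g\in\mathcal{R}$ for all $\lambda\ge0$, and $h(x_0)-h(y_0)=[h_0(x_0)-h_0(y_0)]+\lambda[g(x_0)-g(y_0)]\to+\infty$, so $h$ works for $\lambda$ large. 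If $x_0\le y_0$, then $\epsilon:=s_0-r_0-d(x_0,y_0)>0$ and it suffices to find a continuous rushing $h$ with $h(y_0)-h(x_0)<d(x_0,y_0)+\epsilon$; this is the crux, and it is exactly the Lorentzian distance formula $d(x,y)=\inf_{h\in\mathcal{R}}[h(y)-h(x)]^+$.

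\emph{The construction, and the main obstacle.} For the case $x_0\le y_0$ I would work in $X^\times$. Put $\alpha:=\tau(x_0,\cdot)$ and $\beta:=d(x_0,y_0)+\tfrac{\epsilon}{2}-\tau(\cdot,y_0)$: then $i^\downarrow((x_0,0))$ is the hypograph of the upper semi-continuous function $\alpha$ and $d^\downarrow((y_0,d(x_0,y_0)+\tfrac{\epsilon}{2}))$ is the epigraph of the lower semi-continuous function $\beta$. The reverse triangle inequality $\tau(x_0,x)+\tau(x,y_0)\le d(x_0,y_0)$ yields the \emph{uniform} gap $\beta-\alpha\ge\tfrac{\epsilon}{2}>0$ (in particular the two sets are disjoint), and it also shows that $\alpha$ and $\beta$ are ``extended rushing'', i.e.\ $f(y)\ge f(x)+d(x,y)$ on $x\le y$; the fact that $\alpha,\beta$ are finite-valued off the causally irrelevant part of $X$, so that $i^\downarrow$ and $d^\downarrow$ of the two points are genuine graphs containing no whole $\mathbb{R}$-fiber, is precisely the place where the stability of $X$ (finiteness of $d$) enters. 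It then suffices to insert a continuous rushing $h$ with $\alpha\le h\le\beta-\tfrac{\epsilon}{4}$, for such $h$ satisfies $h(x_0)\ge\alpha(x_0)=0$ and $h(y_0)\le\beta(y_0)-\tfrac{\epsilon}{4}=d(x_0,y_0)+\tfrac{\epsilon}{4}$, whence $h(y_0)-h(x_0)\le d(x_0,y_0)+\tfrac{\epsilon}{4}<d(x_0,y_0)+\epsilon$. This insertion I would establish by the Nachbin--Urysohn technique used in the proof of Lemma \ref{lem} (legitimate since, by Lemma \ref{cjqr}, $X^\times$ is a $k_\omega$-space, hence normally preordered, and, by Proposition \ref{otyu} with Theorem \ref{box2}, is even completely regularly preordered, so \emph{some} continuous isotone separating function exists): one builds on $X^\times$ a translationally invariant continuous isotone function $H=F_h$ whose zero set is the graph of $h$, as a locally finite sum of Urysohn functions over admissible exhaustions plus a fiber-monotone correction obtained, as in Lemma \ref{lem}, by summing translates over a dense set of heights, the one new ingredient being a Urysohn function separating the closed disjoint sets $\mathrm{hyp}(\alpha)$ and $\mathrm{epi}(\beta-\tfrac{\epsilon}{4})$ (together with enough of its $\mathbb{R}$-translates) so as to confine the graph to the region $\alpha\le r\le\beta-\tfrac{\epsilon}{4}$. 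The main obstacle is reconciling the two competing demands of this construction: the object built on $X^\times$ must be \emph{globally} translationally invariant, which is what lets it descend to a genuine rushing function on $X$ and what forces the delicate locally finite summations over $k_\omega$-exhaustions and over $\mathbb{R}$-translates already appearing in Lemma \ref{lem}, while simultaneously being \emph{locally} pinned between $\alpha$ and $\beta-\tfrac{\epsilon}{4}$ near the pair $x_0,y_0$.
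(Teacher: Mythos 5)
Your topological half is correct and in fact takes a genuinely simpler route than the paper: the paper proves $\mathscr{T}^\times=\mathscr{T}_{\mathcal{T}}$ by laboriously deforming two isotone functions (the $G'',H''$ built from $u(F)=e^F-1$ and min/max corrections) until their zero-level sets become graphs over $X$, whereas your observation that $F_{h_0+g}-F_{h_0}=g\circ\pi_X$ for every $g\in\mathcal{I}$ (legitimate by Prop.~\ref{mqpg} together with the converse part of Lemma~\ref{lem}), combined with $\mathscr{T}=\mathscr{T}_{\mathcal{I}}$ from Thm.~\ref{box2}, gives $\mathscr{T}_{\mathcal{T}}$-continuity of both projections directly. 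The easy inclusion $\le^\downarrow\subseteq\le_{\mathcal{T}}$ and the case $x_0\nleq y_0$ are also fine.

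The gap is in the case $x_0\le y_0$, which is the crux of the lemma, and you flag it yourself: the insertion of a continuous rushing $h$ pinched between $\alpha$ and $\beta-\epsilon/4$ is only sketched (a locally finite sum of Urysohn functions over an exhaustion, fiber-monotone corrections by translates, plus one more Urysohn function to ``confine the graph''), with no argument that the resulting zero set actually lies in the prescribed region, and you explicitly state that you have not reconciled global translational invariance with the local pinning. As written, this step is not a proof. The paper closes exactly this point with a short device that makes your ``main obstacle'' evaporate: by Lemma~\ref{cjqr}, Prop.~\ref{otyu} and Thm.~\ref{box2}, $X^\times$ is completely regularly ordered, so there is \emph{some} continuous isotone $R$ with $R((x_0,r_0))>R((y_0,s_0))$; then $H:=a[\arctan R+bF]+c$, with $F\in\mathcal{T}$ supplied by Lemma~\ref{lem} and $b$ small, is isotone, strictly decreasing and surjective on every fiber, hence $H^{-1}(0)$ is the graph of a continuous rushing function $h$ (argued as in Lemma~\ref{lem}), and $F_h((z,r)):=h(z)-r$ has the same sign regions as $H$, so it separates the two points. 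The idea you are missing is that the intermediate function need not be translationally invariant at all — only its zero-level set must be a graph, after which one simply replaces it by $h(x)-r$; your Nachbin--Urysohn function separating the disjoint closed increasing and decreasing sets you introduce would serve just as well in place of $\arctan R$, but it must be combined with a small multiple of $F$ and followed by this graph-replacement step, and also no global pinching $\alpha\le h\le\beta-\epsilon/4$ is needed, only the correct signs at the two points.
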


\begin{proof}
By Lemma \ref{cjqr} $X^\times$ is locally compact and $\sigma$-compact. By Prop.\ \ref{otyu}  $(X^\times, \mathscr{T}^\times, \le^\downarrow)$ is locally convex. By Thm.\ \ref{box2} we have that the topology and order on $X^\times$ are generated by the continuous isotone functions. Now, the key observation is that these properties  do not depend on the  functions themselves but rather on their level sets, and that for every level set which is a graph we can find a translationally invariant continuous isotone function with the same level set.

Let us substatiate this idea. Let $(x,p) \nleq^\downarrow (y,q)$, we know that there is a continuous isotone function $R$ such that $R((x,p))> R((y,q))$.
By Lemma \ref{lem}  there is a translationally invariant continuous isotone function $F$ (translational invariance implies that it decreases over the fibers), thus the continuous isotone function $H:=a [\arctan R+b F]+c$, for suitable constants $a,b>0$ and $c$,  is such that $H((x,p))=1$, $H((y,q))=-1$. Indeed, take $b$ sufficiently small so that  $(\arctan R+b F)((x,p)) > (\arctan R+b F)((y,q))$ and then solve for $a$ and $c$.
Observe that $H$ is  decreasing on the fibers and that on every fiber it has image the whole real line, thus the zero-level set is indeed a graph: $H^{-1}(0)=\{(x,h(x)): x\in X\}$, $h:X \to \mathbb{R}$. The proof that $h$ is a continuous rushing function goes as in Lemma \ref{lem}. Now, $H'((x,r)):=h(x)-r$ has the same zero-level set, and the same regions in which it is positive and negative of $H$. We conclude that $H'((x,p))> H'((y,q))$ and thus $\le^\downarrow=\le_{\mathcal{T}}$.

As for the representation of the topology, let us consider a point on $(x,\bar r)\in X^\times$, and a product open neighborhood $O^\times:=O\times (\bar r-\epsilon, \bar r+\epsilon)$, $x\in O$. Without loss of generality we can assume that $O$, and hence $O^\times$, is relatively compact.

We know that there are continuous isotone bounded functions $G, H:X^\times \to \mathbb{R}$, such that $ H^{-1}((0, +\infty))\cap G^{-1}((-\infty, 0))\subset O^\times$, $H((x,\bar r))=-G((x,\bar r))=1$. The problem is that the zero level sets of $G$ and $H$ might not be graphs over $X$, on the one hand because they do not project on the whole $X$, on the other hand because they intersect the same fiber in more than one point. Much of the remaining proof will be devoted to modify the functions so as to solve this problem.

Let $F$ be a translationally invariant function such that $F((x,\bar r))=0$, and let $u(t):=e^t-1$, then $G'=2G+ u(F)+1$, has the same properties of $G$, as $G'((x, \bar r))=-1$ and $G'^{-1}((-\infty, 0))\subset G^{-1}((-\infty, 0))$, but additionally, it is decreasing (in the order of $\mathbb{R}$) over every fiber and it converges to $+\infty$ as the real coordinate goes to $ -\infty$ (in the other direction it stays bounded).

Similarly, there is a continuous isotone function $H'=2H-u(-F)-1$  which goes to $-\infty$ over every fiber when the real parameter goes to $+\infty$, is decreasing over every fiber, $H'((x, \bar r))=1$, and  $H'{}^{-1}((0, +\infty))\subset H^{-1}((0, +\infty))$ (in the other direction it stays bounded).

The zero level sets of the functions $H'$, $G'$ are graphs over their projection on $X$, but these projected subsets might differ from $X$.


Let $M<+\infty$ be larger than the maximum of  $-H'$ and $G'$ on $\overline{O^\times}$. The function
\[
G''=G'+\min\{H'+M,0\}
\]
over every fiber converges to $-\infty$ when the real parameter of the fiber goes to $+\infty$, because both terms are  non-increasing along the fiber (with the order induced from $\mathbb{R}$) and $H'\to -\infty$. When instead the parameter goes to $-\infty$ the function  $G'$ goes to $+\infty$, but  $H'$ increases,  $\min\{H'+M,0\}$ remains bounded  so $G''$ goes to $+\infty$.  We have $G''^{-1}((-\infty, 0))\supset G'^{-1}((-\infty, 0))$ but there is change from $G'$ to $G''$ only where $H'<-M<0$, thus
\[
H'^{-1}((0, +\infty))\cap G''^{-1}((-\infty, 0))=  H'^{-1}((0, +\infty))\cap G'^{-1}((-\infty, 0)).
\]
Since $G'$ and $G''$ coincide on $\overline{O^\times}$ the constant $M$ is also an upper bound for $G''$ on that set.
The function
\[
H''=H'+\max\{G''-M,0\}.
\]
over every fiber converges to $+\infty$ when the real parameter of the fiber goes to $-\infty$, because both contributions are non-decreasing and $G''\to +\infty$. When instead the parameter goes to $+\infty$ the functions  $H', G''$ go to $-\infty$,
so $H''$ goes to $-\infty$.  We have $H''^{-1}((0, +\infty))\supset H'^{-1}((0, +\infty))$ but there is change from $H'$  to $H''$  only  where $G''>M>0$, thus
\begin{align*}
&H''^{-1}((0, +\infty))\cap G''^{-1}((-\infty, 0))=  H'^{-1}((0, +\infty))\cap G''^{-1}((-\infty, 0))  \\
=& \, H'^{-1}((0, +\infty))\cap G'^{-1}((-\infty, 0)) \subset H^{-1}((0, +\infty))\cap G^{-1}((-\infty, 0))\subset O^\times.
\end{align*}
The zero level sets $G''^{-1}(0)$, $H''^{-1}(0)$ are graphs over $X$.
As a last step we replace $G''$ and $H''$ with translationally invariant functions sharing the same zero-level sets (see the last step in the proof of Lemma \ref{lem}).
\end{proof}

We are ready to prove the main result of this work

\begin{theorem} \label{mai}
On a
locally compact $\sigma$-compact stable spacetime $(X,\mathscr{T},\le,d)$, denoting with $\mathcal{R}$ the family of continuous rushing functions, we have $\mathcal{R}\ne \emptyset$, $\mathscr{T}=\mathscr{T}_{\mathcal{R}}$,
$\le=\le_{\mathcal{R}}$ and $d=d_{\mathcal{R}}$. Under second-countability we can replace the set of continuous rushing functions with the set of rushing time functions.
\end{theorem}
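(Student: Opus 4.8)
The plan is to transfer the representation of the product spacetime established in Lemma~\ref{cnqx} back to $X$ via the product trick, using the correspondence of Lemma~\ref{lem} between continuous rushing functions on $X$ and translationally invariant continuous isotone functions on $X^\times$. First I would observe that a locally compact $\sigma$-compact stable spacetime is in particular a weakly stable $k_\omega$-spacetime, so Lemma~\ref{lem} applies: it gives $\mathcal{R}\neq\emptyset$, and it shows that for every $h\in\mathcal{R}$ the function $F_h$, $F_h((x,r)):=h(x)-r$, belongs to the family $\mathcal{T}$ of translationally invariant continuous isotone functions on $X^\times$ appearing in Lemma~\ref{cnqx}. Conversely, if $F\in\mathcal{T}$ then $F((x,r))=h(x)-r$ with $h:=F(\cdot,0)$ continuous, and since $x\le y$ forces $(x,0)\le^\downarrow(y,d(x,y))$ by closedness of $\le^\downarrow$, the isotonicity of $F$ yields $h(x)+d(x,y)\le h(y)$, i.e.\ $h\in\mathcal{R}$. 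Hence $h\mapsto F_h$ is a bijection $\mathcal{R}\to\mathcal{T}$.

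Now I would invoke Lemma~\ref{cnqx}, which gives $\mathscr{T}^\times=\mathscr{T}_{\mathcal{T}}$ and $\le^\downarrow=\le_{\mathcal{T}}$, and read these identities off along the slice $X\times\{0\}$. For the order: $x\le y$ iff $(x,0)\le^\downarrow(y,0)$ iff $F_h((x,0))\le F_h((y,0))$ for all $h\in\mathcal{R}$ iff $h(x)\le h(y)$ for all $h\in\mathcal{R}$, which is exactly $\le=\le_{\mathcal{R}}$. For the topology, $\mathscr{T}_{\mathcal{R}}\subset\mathscr{T}$ since the elements of $\mathcal{R}$ are $\mathscr{T}$-continuous; conversely, given $O\in\mathscr{T}$ and $x\in O$, the set $O\times\mathbb{R}$ is $\mathscr{T}^\times$-open, hence $\mathscr{T}_{\mathcal{T}}$-open, so it contains a basic neighbourhood $\bigcap_{i}F_{h_i}^{-1}(U_i)\ni(x,0)$; intersecting with $X\times\{0\}$ gives $\bigl(\bigcap_i h_i^{-1}(U_i)\bigr)\times\{0\}\subset O\times\{0\}$, so $\bigcap_i h_i^{-1}(U_i)$ is a $\mathscr{T}_{\mathcal{R}}$-neighbourhood of $x$ inside $O$, proving $\mathscr{T}\subset\mathscr{T}_{\mathcal{R}}$.

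For the Lorentzian distance, the inequality $d_{\mathcal{R}}\ge d$ is immediate: each $h\in\mathcal{R}$ is rushing, so $x\le y$ gives $h(y)-h(x)\ge d(x,y)\ge0$, hence $d_{\mathcal{R}}(x,y)=\inf_{\mathcal{R}}[h(y)-h(x)]\ge d(x,y)$, while for $x\nleq y$ both sides vanish because $\le=\le_{\mathcal{R}}$. For the opposite inequality I would use the product-trick identity $d(x,y)=\sup\{b:(x,0)\le^\downarrow(y,b)\}$ for $x\le y$, which is finite by stability: for every $b'>d(x,y)$ one has $(x,0)\nleq^\downarrow(y,b')$, so, since $\le^\downarrow=\le_{\mathcal{T}}$, there is $h\in\mathcal{R}$ with $F_h((x,0))>F_h((y,b'))$, i.e.\ $h(y)-h(x)<b'$; letting $b'\downarrow d(x,y)$ yields $\inf_{\mathcal{R}}[h(y)-h(x)]\le d(x,y)$. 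Hence $d=d_{\mathcal{R}}$, which completes the unrefined statement.

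Finally, for the refinement under second countability, I would fix one continuous utility $g$, which is a time function since $\le$ is an order, and exists by the Auslander--Levin theorem in the form \cite[Cor.\ 1.4]{minguzzi11c}. By Proposition~\ref{mqpg} we have $h+\epsilon g\in\mathcal{R}$ for every $h\in\mathcal{R}$ and $\epsilon>0$, and $h+\epsilon g$ is also a time function, being continuous and strictly increasing along $x<y$; thus $h+\epsilon g$ is a rushing time function. Since $g=(h+2g)-(h+g)$ and $h=(h+g)-g$ are pointwise combinations of rushing time functions, every $h\in\mathcal{R}$ is continuous for the initial topology of the rushing time functions, so that topology equals $\mathscr{T}_{\mathcal{R}}=\mathscr{T}$; and passing to the limit $\epsilon\downarrow0$ in $(h+\epsilon g)(x)\le(h+\epsilon g)(y)$ and in $(h+\epsilon g)(y)-(h+\epsilon g)(x)\ge\inf\{f(y)-f(x):f\ \text{rushing time function}\}$ recovers $\le$ and $d$ from the rushing time functions as well. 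I expect the main obstacle to be concentrated upstream, in Lemmas~\ref{lem} and~\ref{cnqx}; within the present argument the only delicate steps are the inequality $d\ge d_{\mathcal{R}}$, where one must pass correctly between the $X^\times$-characterisation of $d$ and the separation afforded by complete order regularity of $X^\times$, and, in the refined statement, checking that the $\epsilon g$-perturbation preserves the rushing and the time-function properties simultaneously.
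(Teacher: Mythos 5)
Your proposal is correct and follows essentially the same route as the paper: both rest on Lemma~\ref{lem} and Lemma~\ref{cnqx}, identify the translationally invariant continuous isotone functions on $X^\times$ with the continuous rushing functions via $F_h((x,r))=h(x)-r$, and then read off $\le$, $d$ and $\mathscr{T}$ on the slice $X\times\{0\}$, handling the second-countable case by $\epsilon$-perturbation with a continuous utility. The only (harmless) deviations are in the topology step, where you slice basic initial-topology neighbourhoods instead of the paper's $\{f-g>0\}\subset O$ argument, and in recovering $\mathscr{T}$ from the rushing time functions via the difference trick $g=(h+2g)-(h+g)$, $h=(h+g)-g$ rather than the paper's direct perturbation of both representing functions; both variants are valid.
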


\begin{proof}
The inequality $\mathcal{R}\ne \emptyset$ follows from Lemma \ref{lem}.

Note that under second-countability by weak stable causality there is a time function $u: X \to [-1,1]$, see Thm.\ \ref{cmgp}. This fact will be used in what follows to prove the last statement.

Suppose that $x\nleq y$, then $(x,0)\nleq^\downarrow (y,0)$. By the equality $\le^\downarrow=\le_{\mathcal{T}}$ proved in Lemma \ref{cnqx}, there is a translationally invariant continuous isotone function $F$ such that $F((x,0))> F((y,0))$. As $F((z,r))=f(z)-r$, with $f$ continuous rushing function, we have $f(x)>f(y)$, which proves $\le=\le_{\mathcal{R}}$. Under second-countability replace $f$ with $f'=f+\epsilon u$, $\epsilon>0$, which is a rushing time function, then for small $\epsilon$, we have still $f'(x)>f'(y)$.

Let $x\le y$ and let $a>d(x,y)$, then $(x,0) \nleq^\downarrow (y,a)$ which, by $\le^\downarrow=\le_{\mathcal{T}}$, implies that there is a translationally invariant continuous isotone function $F$ such that $F((x,0))> F((y,a))$. As $F((z,r))=f(z)-r$, with $f$ continuous rushing function, we have $f(x)>f(y)-a$, that is $a>f(y)-f(x)\ge d(x,y)$, which proves, by the arbitrariness of $a$,
\[
d(x,y)= \inf_{\mathcal{R}}[f(y)-f(x)]=\max\{0,\inf_{\mathcal{R}}[f(y)-f(x)]\} =d_{\mathcal{R}}.
\]
For $x\nleq y$ we have necessarily $(x,0) \nleq^\downarrow (y,0)$ which, by $\le^\downarrow=\le_{\mathcal{T}}$, implies that there is a translationally invariant continuous isotone function $F$ such that $F((x,0))> F((y,0))$. As $F((z,r))=f(z)-r$, with $f$ continuous rushing function, we have $f(x)>f(y)$, thus $\max\{0,\inf_{\mathcal{R}}[f(y)-f(x)]\} =0$, which proves that the formula holds in all cases.  Under second-countability, prove the first part by replacing $f$ with $f'=f+\epsilon u$ with $\epsilon>0$ so small that $a>f'(y)-f'(x)\ge d(x,y)$ still holds. Similarly, in the second part choose it so that we still obtain $f'(x)>f'(y)$.

Let $x\in X$ and let $O$ be an open relatively compact neighborhood of $x$. Let $\epsilon>0$, and let $O^\times=O\times (-\epsilon,\epsilon)$. By Lemma  \ref{cnqx} we can find $F,G\in \mathcal{T}$ such that $(x,0)\in \{F>0\}\cap \{G<0\}\subset O^\times$. But $F((z,r))=f(z)-r$, $G((z,r))=g(z)-r$, thus  $g(x)<0<f(x)$, and we have the implication: for all $z\in X$, $r\in \mathbb{R}$, $g(z)<r<f(z) \Rightarrow \vert r \vert <\epsilon$ and $z\in O$. This last property implies $\{f-g>0\} \subset O$ and  $\sup_{z: f(z)-g(z)>0} \max \{f(z),-g(z)\} \le \epsilon$ which implies $\sup\{f-g\}\le 2 \epsilon$.
The inclusion proves $ \mathscr{T}_{\mathcal{R}-\mathcal{R}}\supset \mathscr{T}$. The inclusion $\mathscr{T}_{\mathcal{R}}\supset \mathscr{T}_{\mathcal{R}-\mathcal{R}}$ follows from the fact that the functions of the form $f-g$, $f,g\in \mathcal{R}$ are $\mathscr{T}_{\mathcal{R}}$-continuous. The inclusion $\mathscr{T}\supset \mathscr{T}_{\mathcal{R}}$ is due to the $\mathscr{T}$-continuity of  the functions in $\mathcal{R}$. We conclude, $\mathscr{T}_{\mathcal{R}}=\mathscr{T}$.
Under second-countability, replace $f$ with $f'=f+\epsilon u-\epsilon$ and $g$ with $g'=g+\epsilon u+\epsilon$ with $\epsilon$ so small that we have still  $g'(x)<0<f'(x)$. But $f'-g'=f-g-2\epsilon$, thus $x\in \{f'-g'>0\}\subset  \{f-g>0\}\subset O$. The rest of the argument does not change.
\end{proof}


The theorem applies also to   closed ordered spaces, just set  $d=0$ so that $\mathcal{R}=\mathcal{I}$, the family of continuous isotone functions.

\begin{theorem} \label{mai2}
On a
locally compact $\sigma$-compact stably causal closed ordered space (causal structure) $(X,\mathscr{T},\le)$,  we have $\mathcal{I}\ne \emptyset$, $\mathscr{T}=\mathscr{T}_{\mathcal{I}}$, $\le=\le_{\mathcal{I}}$. Under second-countability we can replace the continuous isotone functions with the  time functions.
\end{theorem}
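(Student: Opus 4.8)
The plan is to deduce Theorem~\ref{mai2} directly from Theorem~\ref{mai} by adjoining to the causal structure the trivial Lorentzian distance $d\equiv 0$. First I would check that $(X,\mathscr{T},\le,0)$ is a locally compact $\sigma$-compact stable spacetime. The constant function $0$ is trivially $\mathscr{T}\times\mathscr{T}$-upper semi-continuous; the requirements $x\nleq y\Rightarrow d(x,y)=0$ and $d(x,x)=0$ hold vacuously; and the reverse triangle inequality $d(x,y)+d(y,z)\le d(x,z)$ is just $0+0\le 0$, so $(X,\mathscr{T},\le,0)$ is a spacetime in the sense of the first definition. Since ``stably causal'' for a causal structure means exactly that $\le$ is an order and $(X,\mathscr{T},\le)$ is locally convex, and $d=0$ is finite, the quadruple $(X,\mathscr{T},\le,0)$ satisfies all the conditions of a stable spacetime; local compactness and $\sigma$-compactness are inherited unchanged. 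Hence Theorem~\ref{mai} applies to it.

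Second, I would identify the family $\mathcal{R}$ of continuous $d$-rushing functions of this spacetime. Recall that $f$ is $d$-rushing when $x\le y\Rightarrow f(x)+d(x,y)\le f(y)$; with $d\equiv 0$ this is precisely $x\le y\Rightarrow f(x)\le f(y)$, i.e.\ $f$ is isotone. Therefore $\mathcal{R}=\mathcal{I}$, the family of continuous isotone functions, as already noted in the remark preceding the statement. Applying Theorem~\ref{mai} then yields $\mathcal{I}=\mathcal{R}\ne\emptyset$, $\mathscr{T}=\mathscr{T}_{\mathcal{R}}=\mathscr{T}_{\mathcal{I}}$ and $\le=\le_{\mathcal{R}}=\le_{\mathcal{I}}$ (the remaining identity $d=d_{\mathcal{R}}$ reducing to the tautology $0=0$).

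For the second-countability refinement, I would observe that a rushing time function is by definition a function that is simultaneously rushing and a time function, and that when $d\equiv 0$ the rushing property is the same as isotonicity, which a time function already satisfies. Consequently the set of rushing time functions of $(X,\mathscr{T},\le,0)$ coincides with the set of time functions of $(X,\mathscr{T},\le)$, and the last sentence of Theorem~\ref{mai} transfers word for word.

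The only real work here is bookkeeping: one must confirm that the hypotheses ``stably causal closed ordered space'' line up exactly with the hypotheses ``stable spacetime'' of Theorem~\ref{mai} once $d=0$ is appended, the crucial point being that local convexity (which is built into stable causality) is the sole extra topological input, every property involving $d$ being automatic for the constant distance. No new analytic or geometric content is needed, so I do not anticipate any genuine obstacle.
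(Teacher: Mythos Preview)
Your proposal is correct and matches the paper's own argument exactly: the paper simply remarks that Theorem~\ref{mai} applies to closed ordered spaces by setting $d=0$, so that $\mathcal{R}=\mathcal{I}$, and states Theorem~\ref{mai2} as an immediate consequence. Your write-up just makes explicit the bookkeeping (verification of the spacetime axioms for $d\equiv 0$, identification of rushing functions with isotone ones, and of rushing time functions with time functions) that the paper leaves to the reader.
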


\begin{definition}
A spacetime is said to be $\mathcal{R}$-completely regular (with $\mathcal{R}$ the set of continuous rushing functions) if $\mathscr{T}=\mathscr{T}_{\mathcal{R}}$, $\le=\le_{\mathcal{R}}$ and $d=d_{\mathcal{R}}$.
\end{definition}

Thus we have proved the following converse of Theorem \ref{ckkr}

\begin{theorem}
The
locally compact $\sigma$-compact stable spacetimes $(X,\mathscr{T},\le,d)$  are $\mathcal{R}$-completely regular.
\end{theorem}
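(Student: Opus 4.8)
The plan is to observe that this theorem is, modulo unwinding a definition, a verbatim restatement of Theorem~\ref{mai}, so that the proof reduces to a one-line deduction plus a check that nothing is missing. First I would recall that a spacetime is \emph{$\mathcal{R}$-completely regular} precisely when $\mathscr{T}=\mathscr{T}_{\mathcal{R}}$, $\le=\le_{\mathcal{R}}$ and $d=d_{\mathcal{R}}$, with $\mathcal{R}$ the family of continuous rushing functions. Theorem~\ref{mai} asserts exactly these three equalities (and moreover $\mathcal{R}\neq\emptyset$) for every locally compact $\sigma$-compact stable spacetime $(X,\mathscr{T},\le,d)$; hence the claim follows immediately. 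Under the additional second-countability assumption one may in fact take for $\mathcal{R}$ the smaller family of rushing time functions, again by Theorem~\ref{mai}.

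To make transparent that this is genuinely the announced converse of Theorem~\ref{ckkr}, I would add the following remark. By Theorem~\ref{mai} the initial topology $\mathscr{T}_{\mathcal{R}}$ coincides with $\mathscr{T}$, which is Hausdorff because $(X,\mathscr{T},\le)$ is a closed \emph{ordered} space: the diagonal $\Delta=G(\le)\cap G(\ge)$ is an intersection of two closed sets. A Hausdorff initial topology forces the generating family $\mathcal{R}$ to be point-distinguishing. Therefore $\mathcal{F}:=\mathcal{R}$ satisfies the hypothesis of Theorem~\ref{ckkr}, and that theorem produces the stable spacetime $(X,\mathscr{T}_{\mathcal{R}},\le_{\mathcal{R}},d_{\mathcal{R}})$, which by the three equalities above is exactly the spacetime we started from. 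Thus every locally compact $\sigma$-compact stable spacetime is of the form $(X,\mathscr{T}_{\mathcal{F}},\le_{\mathcal{F}},d_{\mathcal{F}})$ for a distinguishing family $\mathcal{F}$ of functions over the set $X$, namely $\mathcal{F}=\mathcal{R}$.

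There is no real obstacle at this stage: all the difficulty has been absorbed into Theorem~\ref{mai}, whose proof rests on Lemma~\ref{lem} (construction, via Nachbin's normally-preordered Urysohn lemma on the $k_\omega$-space $X^\times$, of a translationally invariant continuous isotone function on $X^\times$, equivalently a continuous rushing function on $X$), on Lemma~\ref{cnqx} (that $\mathscr{T}^\times$ and $\le^\downarrow$ are generated by translationally invariant continuous isotone functions, using Theorem~\ref{box2}), and on the product-trick dictionary relating $(X,\mathscr{T},\le,d)$ to $(X^\times,\mathscr{T}^\times,\le^\downarrow)$ together with Lemma~\ref{cjqr} and Proposition~\ref{otyu} to transfer the topological hypotheses and local convexity. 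If a self-contained presentation were desired, one would simply inline the relevant sentences of the proof of Theorem~\ref{mai}; no new argument is needed.
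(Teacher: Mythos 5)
Your proposal is correct and matches the paper's treatment: the theorem is stated there without a separate proof precisely because it is Theorem~\ref{mai} rewritten through the definition of $\mathcal{R}$-complete regularity, which is exactly your one-line deduction. Your added remark that $\mathcal{R}$ is point-distinguishing (hence the spacetime arises from Theorem~\ref{ckkr} with $\mathcal{F}=\mathcal{R}$) is a sound and harmless supplement; one could even get it more directly from $\le=\le_{\mathcal{R}}$ and the antisymmetry of $\le$.
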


In order to establish this converse we had just to impose local compactness and $\sigma$-compactness of the topology.

\begin{remark}
It is well known that on a Polish space $X$, the following  distance formula holds  
\[
c(x, y) =   \max_{\phi \in \textrm{Lip}_1(X)} \vert \phi(y) - \phi(x)  \vert
\]
where $c$ is a metric, continuous with respect to the Polish topology, and $\textrm{Lip}_1(X)$ is the family of 1-Lipschitz functions. This result can be proved via the Kanto\-ro\-vich-Ru\-binstein duality of Optimal Transport (OT) by selecting Dirac measures \cite{villani09}. Under suitable assumptions (e.g.\ existence of a continuous rushing function) an analogous distance formula can be proved in the Lorentzian case
\[
 \tau(x, y) =   \inf_{f \in \mathcal{B}} [f(y) - f(x)] ,
\]
where $\mathcal{B}$ is the set of rushing functions $f: X \to [-\infty, +\infty)$ not identically $-\infty$, such that $f(x)>-\infty$.
 The main difficulty here is that the representing functions though rushing are neither real (they can take the value $-\infty$) nor continuous because the property of {\em  proper $c$-concavity} is less demanding than in the metric case. As a result, the distance formula one gets is less precise than $d=d_{\mathcal{R}}$ proved above (and more related to that proved by Rennie and Whale \cite{rennie16}).
Anyway OT  demands the Polish property while we assumed Hausdorffness, local compactness and $\sigma$-compactness. The Polish property does not imply local compactness or $\sigma$-compactness, and in the reverse direction one needs to assume second countability to infer the Polish property.
\end{remark}

\section{Conclusions}

In this work we introduced a new minimal notion of spacetime which encodes a topology, a causality structure and a Lorentzian distance. Within this setting it makes sense to speak of time functions (via the causal structure) and of proper time (via the Lorentzian distance). With respect to previous approaches we dropped continuity of $d$, working with upper semi-continuous regularity in analogy with our previous work on closed cone structures and closed Lorentz-Finsler spaces. Particularly interesting  is the category  of stable spacetimes, which in the smooth Lorentzian case correspond to the Lorentzian submanifolds of Minkowski spacetime \cite[Thm. 3.10, 4.13]{minguzzi17}\cite{minguzzi23b}.

The main result of this work is that any family of distinguishing functions over a set induces a stable spacetime and that, conversely, under local compactness and $\sigma$-compactness every stable spacetime arises in this way. The family of functions can be identified with the continuous rushing functions, and under second-countability, with the rushing time functions.

By the product trick, the continuous  rushing  functions are closely related to the translationally invariant continuous isotone functions for a space  $X^\times$ with one additional dimension. They have the advantage of encoding both metrical and causal properties of spacetime. We may say that the whole arena of Physics, namely the spacetime, emerges from the notion of time.


\begin{thebibliography}{10}

\bibitem{auslander64}
J.~Auslander.
\newblock Generalized recurrence in dynamical systems.
\newblock {\em Contr. {D}ifferential {E}quations}, 3:65--74, 1964.

\bibitem{beem96}
J.~K. Beem, P.~E. Ehrlich, and K.~L. Easley.
\newblock {\em Global Lorentzian Geometry}.
\newblock Marcel {D}ekker {I}nc., New York, 1996.

\bibitem{beran23}
T.~Beran, A.~Ohanyan, F.~Rott, and D.~A. Solis.
\newblock The splitting theorem for globally hyperbolic lorentzian length
  spaces with non-negative timelike curvature.
\newblock {\em Let. Math. Phys.}, 113:48, 2023.

\bibitem{besnard09}
F.~Besnard.
\newblock A noncommutative view on topology and order.
\newblock {\em J. Geom. Phys.}, 59(7):861--875, 2009.

\bibitem{besnard15}
F.~Besnard.
\newblock Noncommutative ordered spaces: examples and counterexamples.
\newblock {\em Class. Quantum Grav.}, 32(13):135024, 26, 2015.

\bibitem{bosi25}
G.~Bosi, G.~Sbaiz, and M.~Zuanon.
\newblock Net-compact hausdorff topologies and continuous multi-utility
  representations for closed preorders.
\newblock {\em Axioms}, 14:188, 2025.

\bibitem{braun23b}
M.~Braun and R.~J. Mc{C}ann.
\newblock Causal convergence conditions through variable timelike {R}icci
  curvature bounds.
\newblock {arXiv:}2312.17158v2, 2023.

\bibitem{bridges95}
D.~S. Bridges and G.~B. Mehta.
\newblock {\em Representations of preference orderings}, volume 442 of {\em
  Lectures Notes in Economics and Mathematical Systems}.
\newblock {Springer-Verlag}, Berlin, 1995.

\bibitem{burgos23}
S.~Burgos, J.L. Flores, and J.~Herrera.
\newblock The $c$-completion of {L}orentzian metric spaces.
\newblock {\em Class. Quantum Grav.}, 40:205013, 1923.

\bibitem{burtscher22}
A.~Y. Burtscher and L.~{Garc{\'\i}a--Heveling}.
\newblock Global hyperbolicity through the eyes of the null distance.
\newblock {\em Commun. Math. Phys.}, 405:90, 2024.

\bibitem{minguzzi24b}
A.~Bykov, E.~Minguzzi, and S.~Suhr.
\newblock Lorentzian metric spaces and {GH}-convergence: the unbounded case.
\newblock {\em Lett. Math. Phys.}, 115:63, 2025.
\newblock arXiv:2412.04311.

\bibitem{canarutto19}
D.~Canarutto and E.~Minguzzi.
\newblock The distance formula in algebraic spacetime theories.
\newblock {\em J. of Phys.: Conf. Ser.}, 1275:012045, 2019.
\newblock {C}ontribution to the proceedings of the conference 'DICE18,
  Spacetime-Matter-Quantum Mechanics', {C}astello {P}asquini, {C}astiglioncello
  (Italy) September 17 - 21, 2018.

\bibitem{cavalletti20}
F.~Cavalletti and A.~Mondino.
\newblock Optimal transport in {L}orentzian synthetic spaces, synthetic
  timelike {R}icci curvature lower bounds and applications.
\newblock {\em Camb. J. Math.}, 12:417--534, 2024.

\bibitem{chrusciel12}
P.~T. Chru{\'s}ciel and J.~D.~E. Grant.
\newblock On {L}orentzian causality with continuous metrics.
\newblock {\em Class. Quantum Grav.}, 29:{145001}, 2012.

\bibitem{fletcher82}
P.~Fletcher and W.~Lindgren.
\newblock {\em Quasi-uniform spaces}, volume~77 of {\em Lect. {N}otes in {P}ure
  and {A}ppl. {M}ath.}
\newblock Marcel {D}ekker, {I}nc., New York, 1982.

\bibitem{franco18}
N.~Franco.
\newblock The {L}orentzian distance formula in noncommutative geometry.
\newblock {\em J. Phys.: Conf. Ser.}, 968:012005, 2018.

\bibitem{franco13}
N.~Franco and M.~Eckstein.
\newblock An algebraic formulation of causality for noncommutative geometry.
\newblock {\em Class. Quantum Grav.}, 30(13):135007, 18 pp, 2013.

\bibitem{franklin77}
S.~T. Franklin and B.~V. {Smith Thomas}.
\newblock A survey of $k_{\omega}$-spaces.
\newblock {\em Topology {P}roceedings}, 2:111--124, 1977.

\bibitem{heveling21b}
L.~Garc{\'\i}a-Heveling.
\newblock Causality theory of spacetimes with continuous lorentzian metrics
  revisited.
\newblock {\em Class. Quantum Grav.}, 38:145028, 2021.

\bibitem{graf18}
M.~Graf and E.~Ling.
\newblock Maximizers in {L}ipschitz spacetimes are either timelike or null.
\newblock {\em Class. Quantum Grav.}, 35:087001, 2018.

\bibitem{ake20}
L.~Ake Hau, A.~J.~Cabrera Pacheco, and D.~A. Solis.
\newblock On the causal hierarchy of {L}orentzian length spaces.
\newblock {\em Class. Quantum Grav.}, 37:215013, 2020.

\bibitem{hawking73}
S.~W. Hawking and G.~F.~R. Ellis.
\newblock {\em The Large Scale Structure of Space-Time}.
\newblock Cambridge {U}niversity {P}ress, Cambridge, 1973.

\bibitem{kunzinger18}
M.~Kunzinger and C.~S{\"a}mann.
\newblock Lorentzian length spaces.
\newblock {\em Ann. Global Anal. Geom.}, 54:399--447, 2018.

\bibitem{kunzinger13b}
M.~Kunzinger, R.~Steinbauer, M.~Stojkovi\'c, and J.~A. Vickers.
\newblock A regularisation approach to causality theory for
  {$C^{1,1}$-}{L}orentzian metrics.
\newblock {\em Gen. Relativ. Gravit.}, 46:1738, 2014.
\newblock arXiv:1310.4404v2.

\bibitem{levin83}
V.~L. Levin.
\newblock A continuous utility theorem for closed preorders on a
  $\sigma$-compact metrizable space.
\newblock {\em Soviet Math. Dokl.}, 28:715--718, 1983.

\bibitem{ling24}
E.~Ling.
\newblock A lower semicontinuous time separation function for {$C^0$}
  spacetimes.
\newblock {\em Ann. Henri Poincar{\'e}}, 2024.

\bibitem{mccann18}
R.~J. McCann.
\newblock Displacement convexity of {B}oltzmann's entropy characterizes the
  strong energy condition from general relativity.
\newblock {\em Camb. J. Math.}, 8:609--681, 2020.
\newblock {arXiv:}1808.01536.

\bibitem{minguzzi07}
E.~Minguzzi.
\newblock The causal ladder and the strength of {$K$}-causality. {II}.
\newblock {\em Class. Quantum Grav.}, 25:015010, 2008.
\newblock {arXiv}:gr-qc/0703128.

\bibitem{minguzzi09c}
E.~Minguzzi.
\newblock Time functions as utilities.
\newblock {\em Commun. Math. Phys.}, 298:855--868, 2010.
\newblock {arXiv}:0909.0890.

\bibitem{minguzzi11c}
E.~Minguzzi.
\newblock Topological conditions for the representation of preorders by
  continuous utilities.
\newblock {\em Appl. {G}en. Topol.}, 13:81--89, 2012.
\newblock {arXiv}:1108.5123.

\bibitem{minguzzi12d}
E.~Minguzzi.
\newblock Convexity and quasi-uniformizability of closed preordered spaces.
\newblock {\em Topol. Appl.}, 160:965--978, 2013.
\newblock {arXiv}:1212.3776.

\bibitem{minguzzi11f}
E.~Minguzzi.
\newblock Normally preordered spaces and utilities.
\newblock {\em Order}, 30:137--150, 2013.
\newblock {arXiv}:1106.4457v2.

\bibitem{minguzzi13e}
E.~Minguzzi.
\newblock Topological ordered spaces as a foundation for a quantum spacetime
  theory.
\newblock {\em J. of Phys.: Conf. Ser.}, 442:012034, 2013.
\newblock {C}ontribution to the proceedings of the conference 'DICE12,
  Space-Time-Matter-Quantum Mechanics, from the Planck scale to emergent
  phenomena', {C}astello {P}asquini, {C}astiglioncello (Italy) September 17 -
  21, 2012.

\bibitem{minguzzi13d}
E.~Minguzzi.
\newblock Convex neighborhoods for {L}ipschitz connections and sprays.
\newblock {\em Monatsh. Math.}, 177:569--625, 2015.
\newblock {arXiv}:1308.6675.

\bibitem{minguzzi17d}
E.~Minguzzi.
\newblock The representation of spacetime through steep time functions.
\newblock {\em J. of Phys.: Conf. Ser.}, 968:012009, 2018.
\newblock {C}ontribution to the proceedings of the conference 'Non-Regular
  Spacetime Geometry', {F}irenze (Italy) June 20 - 22, 2017.

\bibitem{minguzzi17}
E.~Minguzzi.
\newblock Causality theory for closed cone structures with applications.
\newblock {\em Rev. Math. Phys.}, 31:1930001, 2019.
\newblock {arXiv}:1709.06494.

\bibitem{minguzzi18b}
E.~Minguzzi.
\newblock Lorentzian causality theory.
\newblock {\em Living Rev. Relativ.}, 22:3, 2019.

\bibitem{minguzzi23b}
E.~Minguzzi.
\newblock Lorentzian manifolds properly isometrically embeddable in {M}inkowski
  spacetime.
\newblock {\em Lett. Math. Phys.}, 113:67, 2023.
\newblock {arXiv:}:2303.05941.

\bibitem{minguzzi25}
E.~Minguzzi.
\newblock Destructuring physics: {A} functional derivation of spacetime.
\newblock {\em J. of Phys.: Conf. Ser.}, 3017:012004, 2025.
\newblock {C}ontribution to the proceedings of the conference 'DICE24,
  Spacetime-Matter-Quantum Mechanics', {C}astello {P}asquini, {C}astiglioncello
  (Italy) September 16 - 22, 2024.

\bibitem{minguzzi22}
E.~Minguzzi and S.~Suhr.
\newblock Lorentzian metric spaces and their {G}romov-{H}ausdorff convergence.
\newblock {\em Lett. Math. Phys.}, 114:73, 2024.
\newblock {arXiv:}2209.14384.

\bibitem{mondino18}
A.~Mondino and S.~Suhr.
\newblock An optimal transport formulation of the {E}instein equations of
  general relativity.
\newblock {\em J. Eur. Math. Soc. (JEMS)}, pages 933--994, 2023.

\bibitem{morita56}
K.~Morita.
\newblock On the decomposition spaces of locally compact spaces.
\newblock {\em Proc. {J}apan {A}cad.}, 32:544--548, 1956.

\bibitem{muller22b}
O.~M{\"u}ller.
\newblock Gromov-{H}ausdorff metrics and dimensions of lorentzian length
  spaces.
\newblock {arXiv}:2209.12736v8, 2022.

\bibitem{muller19}
O.~M{\"u}ller.
\newblock Lorentzian {G}romov-{H}ausdorff theory and finiteness results.
\newblock {\em Gen. Relativ. Gravit.}, 54:117, 2022.
\newblock {arXiv}:1912.00988v6.

\bibitem{nachbin65}
L.~Nachbin.
\newblock {\em Topology and order}.
\newblock D.\ {V}an {N}ostrand {C}ompany, {I}nc., Princeton, 1965.

\bibitem{ok25}
E.~A. Ok.
\newblock Order-preserving extensions of {L}ipschitz maps.
\newblock {\em J. Aust. Math. Soc.}, 118:91--107, 2025.

\bibitem{parfionov00}
G.~N. Parfionov and R.~R. Zapatrin.
\newblock Connes duality in pseudo-{R}iemannian geometry.
\newblock {\em J. Math. Phys.}, 41(10):7122--7128, 2000.

\bibitem{rennie16}
A.~Rennie and B.~E. Whale.
\newblock Generalised time functions and finiteness of the {L}orentzian
  distance.
\newblock {\em J. Geom. Phys.}, 106:108--121, 2016.

\bibitem{sakovich24}
A.~Sakovich and C.~Sormani.
\newblock Introducing various notions of distances between space-times.
\newblock {arXiv}::2410.16800.

\bibitem{sbierski18}
J.~Sbierski.
\newblock On the proof of the {$C^0$}-inextendibility of the {S}chwarzschild
  spacetime.
\newblock {\em J. Phys. Conf. Ser.}, 968:012012, 2018.

\bibitem{sorkin96}
R.~D. Sorkin and E.~Woolgar.
\newblock A causal order for spacetimes with {$C^0$} {L}orentzian metrics:
  proof of compactness of the space of causal curves.
\newblock {\em Class. Quantum Grav.}, 13:1971--1993, 1996.

\bibitem{steenrod67}
N.~E. Steenrod.
\newblock A convenient category of topological spaces.
\newblock {\em Michigan {M}ath. {J}.}, 14:133--152, 1967.

\bibitem{villani09}
C.~Villani.
\newblock {\em Optimal transport, old and new}.
\newblock {Springer-Verlag}, Berlin, 2009.

\bibitem{yamazaki23}
K.~Yamazaki.
\newblock Extensions of continuous increasing functions.
\newblock {\em Topol. Appl.}, 335:108566, 2023.

\bibitem{yamazaki24}
K.~Yamazaki.
\newblock Order embedding theorems and multi-utility representation of the
  preorder.
\newblock {\em Topol. Appl.}, 355:109007, 2024.

\end{thebibliography}

\section*{\normalsize Data availability statement}
No new data were created or analysed in this study.

\section*{\normalsize Conflict of Interest statement}
The author of this publication declares no conflict of interest.

\end{document}